\documentclass[11pt]{article}
\usepackage[dvipsnames,table]{xcolor}
\usepackage{amssymb,amsmath, amsthm, setspace, color, graphicx}
\usepackage{dsfont}
\usepackage{mathpazo, mathptmx, flexisym}
\usepackage{mathrsfs}            % use for calligraphy fonts
\usepackage{fancyhdr}
\usepackage[justification=centering]{caption}
\usepackage{subcaption}
\usepackage[authoryear]{natbib}
\usepackage{url}
\usepackage{breqn}
\usepackage{functan}
\usepackage[margin=1in]{geometry} % set margin 1 all around"
\usepackage[all, cmtip]{xy}
\usepackage[english]{babel}
\usepackage{graphicx}             % graphics package
\usepackage{float}
\usepackage[section]{placeins}
\usepackage{lscape} % for landscape
\usepackage{mathtools}
\usepackage{calc}     % for use in aligment of
\usepackage{enumitem} % descriptions
\usepackage[normalem]{ulem} % underline text
\usepackage[titletoc,toc,page]{appendix}% [titletoc,toc,page]
\usepackage{chngcntr}
\usepackage{etoolbox}
\usepackage{ragged2e}
\definecolor{mred}{rgb}{1.00,0.00,0.50}
\definecolor{mcyan}{rgb}{0.20,0.55,0.75}
\usepackage[pdfborder={0 0 0},final=true,colorlinks=true,breaklinks = true,linkcolor=mred,citecolor=mcyan]{hyperref}
\usepackage[capitalize,nameinlink,noabbrev]{cleveref}
\usepackage{relsize}
\usepackage{bigints}
\numberwithin{equation}{section}
\newcommand{\tab}{\hspace*{2em}}
\theoremstyle{plain}
\newtheorem{thm}{Theorem}[section]
\newtheorem{cor}{Corollary}
\newtheorem{prop}{Proposition}
\newtheorem{lem}{Lemma}
\newtheorem{assumption}{Assumption}      % The next few lines control assumption formats

\theoremstyle{definition}
\newtheorem{defn}{Definition}
\theoremstyle{remark}
\newtheorem{rem}{Remark}[section]

\providecommand{\red}[1]{\textcolor{red}{#1}}

\crefname{thm}{theorem}{theorems}
\Crefname{thm}{Theorem}{Theorems}
\crefname{cor}{corollary}{corollaries}
\Crefname{cor}{Corollary}{Corollaries}
\crefname{prop}{proposition}{propositions}
\Crefname{prop}{Proposition}{Propositions}
\crefname{lem}{lemma}{lemmas}
\Crefname{lem}{Lemma}{Lemmas}
\crefname{assumption}{assumption}{assumptions}
\Crefname{assumption}{Assumption}{Assumptions}
\crefname{defn}{definition}{definitions}
\Crefname{defn}{Definition}{Definitions}
\usepackage{titling}
\newcommand{\thesubtitle}{}

\posttitle{%
	\par\large\bfseries\thesubtitle
\end{center}
\vskip0.5em
}
\DeclareMathSizes{12}{12}{8}{6}
\DeclareMathOperator{\half}{\frac{1}{2}}

\begin{document}
\singlespace
\title{\textbf{Bank Run Exposure in a Paycheck-to-Paycheck Economy with Loss-Averse Depositors}}
\author{G. Charles-Cadogan\thanks{School of Accounting and Finance, University of Leicester; Tel: +44 (0116) 229 7385; e-mail: \textcolor[rgb]{0.00,0.00,1.00}{\href{mailto:gcc13@le.ac.uk}{gcc13@le.ac.uk}}~\\~\\
		The author has no competing interests or conflict of interest involving this paper.%\\~\\		
		%AI Disclosure Statement.
		%The author used generative AI tools to help spell check the document, to check the accuracy and consistency of mathematical statements and proofs, to help with debugging code. No AI system was used for hypothesis formation, model development or experimental design. The author independently verified all outputs and assume full responsibility for the accuracy and integrity of the results.
	}\\Working Paper}
\date{\vspace{-2ex}\today\vspace{-4ex}}
%\author{ Godfrey Cadogan \thanks{ Corresponding address: 3401-B N.W. 72nd Ave, Ste \# T-419, Miami, FL 33122; e-mail: \textcolor[rgb]{0.00,0.00,1.00}{\href{mailto:gocadog@gmail.com}{gocadog@gmail.com}}. I thank Mark Zanecki for comments, and Jan Kmenta for the idea of embedding parameters of interest in economic variables. I am indebted to Art Trager, at the Cowles Foundation, Yale University for yeoman library services without which this work would not be possible. Any error which may remain are my own.} \\ Working Paper}
%\date{\today}
%\date{}
%\long\def\symbolfootnote[#1]#2{\begingroup%
	%\def\thefootnote{\fnsymbol{footnote}}\footnote[#1]{#2}\endgroup}
%and use \symbolfootnote[1]{footnote} to get an *
\renewcommand\thefootnote{\fnsymbol{footnote}}
\maketitle
\renewcommand\thefootnote{\arabic{footnote}}
%\addtocounter{footnote}{-1}
%\large
\begin{abstract}
	\noindent We develop a behavioral model of bank run exposure in a paycheck-to-paycheck economy with loss-averse depositors. Income is received through demand deposits, and consumption ratcheting embeds reference dependence in a parsimonious asset-pricing framework. We show that sufficiently high subjective bad-state probabilities endogenously increase liquidity demand and generate equilibrium stress states supporting bank runs. These states define a Bank Run Exposure State Space and yield a martingale representation for exposure dynamics. A proof-of-concept empirical implementation using Call Report data constructs bank-level exposure proxies from funding and lending composition. A regression-weighted composite measure modestly improves fit relative to a retail-share benchmark, with stronger amplification among small banks and during the post-Silicon Valley Bank (SVB) collapse period. The framework highlights how behavioral liquidity demand alters equilibrium reserve holdings and can crowd out productive lending.
	\\
	\\
Keywords: bank runs, loss aversion, liquidity risk, consumption ratcheting, suspension of convertibility
	\\
	\\
JEL Classification Codes: C65, D03, D81, G21, G32
\end{abstract}
\singlespace
%\long\def\symbolfootnote[#1]#2{\begingroup%
	%\def\thefootnote{\fnsymbol{footnote}}\footnote[#1]{#2}\endgroup}
%and use \symbolfootnote[1]{footnote} to get an *
\renewcommand\thefootnote{\fnsymbol{footnote}}
\renewcommand\thefootnote{\arabic{footnote}}
%\addtocounter{footnote}{-1}
\large
\newpage
\thispagestyle{empty}
\tableofcontents
\listoffigures
\listoftables
\newpage
\thispagestyle{empty}
%\listoftables
\newpage
\pagenumbering{arabic}
\renewcommand\thefootnote{\fnsymbol{footnote}}
\renewcommand\thefootnote{\arabic{footnote}}
%\doublespace
\section{Introduction}\label{sec:Intro}
\tab Basel III regulates liquidity as if deposit outflows are exogenous stress parameters, yet in a paycheck-to-paycheck, digitally connected economy, liquidity demand is increasingly behavioral, state-dependent, and endogenously amplified. It addresses liquidity risk through the Liquidity Coverage Ratio (LCR), which requires covered banks to hold sufficient high-quality liquid assets to withstand net cash outflows over a 30-day stress scenario.\footnote{\href{https://www.bis.org/publ/bcbs238.pdf}{Basel Committee on Banking Supervision, ``Basel III: The Liquidity Coverage Ratio and liquidity risk monitoring tools,'' Bank for International Settlements, January 2013.}} In the United States, the LCR is implemented for Board-regulated institutions through Regulation WW, codified at \href{https://www.ecfr.gov/current/title-12/chapter-II/subchapter-A/part-249}{12 CFR Part 249}.\footnote{\href{https://www.ecfr.gov/current/title-12/chapter-II/subchapter-A/part-249}{12 CFR Part 249, ``Liquidity Risk Measurement, Standards, and Monitoring (Regulation WW),'' Board of Governors of the Federal Reserve System.}} Regulation WW sets out the liquidity coverage ratio, eligible high-quality liquid assets, total net cash outflow, and outflow amounts for different liability categories. For example, the retail funding outflow amount in \href{https://www.ecfr.gov/current/title-12/section-249.32}{12 CFR 249.32} includes 3\% of stable retail deposits, 10\% of other retail deposits, 20\% of certain fully insured third-party retail deposit placements, and 40\% of comparable third-party placements that are not fully insured; the same section assigns 100\% outflow treatment to certain unsecured wholesale funding that is not otherwise described.\footnote{See \href{https://www.ecfr.gov/current/title-12/section-249.32}{12 CFR 249.32, ``Outflow amounts.''}} These concrete regulatory factors make clear that the LCR already differentiates across funding stability, depositor type, insurance coverage, and wholesale funding risk.

\tab Those run-off factors are indispensable for supervision, but they are not a behavioral model of depositor liquidity demand. They summarize regulatory assumptions about average outflows under stress; they do not endogenize how depositor fear, paycheck timing, loss aversion, or digital access can amplify withdrawals in a particular state. This distinction matters because digitally enabled deposits are now part of the practical bank-run environment. Online banking, mobile banking, instant account access, and direct deposit make it easier for paycheck stress to become withdrawal pressure quickly. Accordingly, the model and empirical specification below treat digital access as a salient characteristic of bank clientele: digital channels do not create loss aversion, but they can accelerate the conversion of loss-averse liquidity demand into deposit outflows.

\tab The empirical motivation is visible in PayrollOrg's 2025 \href{https://info.payroll.org/pdfs/npw/2025-Getting-Paid-In-America-Survey-Results-Report.pdf}{``Getting Paid in America'' Survey Results Report}.\footnote{\href{https://info.payroll.org/pdfs/npw/2025-Getting-Paid-In-America-Survey-Results-Report.pdf}{PayrollOrg, ``2025 Getting Paid in America Survey Results Report.''}} The survey reports that 92.65\% of respondents receive pay by direct deposit.\footnote{See also \href{https://www.nacha.org/index.php/news/survey-direct-deposit-leads-way-payday}{Nacha, ``Survey: Direct Deposit Leads the Way on Payday,'' \textit{Nacha.org}, September 22, 2025.}} It also reports that 77.67\% would find it somewhat or very difficult to meet current financial obligations if the next paycheck were delayed by one week.\footnote{See also \href{https://www.cpapracticeadvisor.com/2025/09/18/if-your-paycheck-was-delayed-a-week-would-you-be-in-financial-difficulty/}{``If Your Paycheck Was Delayed a Week, Would You Be in Financial Difficulty?'' \textit{CPA Practice Advisor}, September 18, 2025.}} If a paycheck were delayed, respondents' primary ways to cover expenses would include delaying bills or payments (29.01\%), using savings (26.26\%), or borrowing from family and friends (25.46\%).\footnote{For secondary coverage of these paycheck-delay responses, see \href{https://metroatlantaceo.com/news/2025/09/survey-reveals-ongoing-financial-strain-majority-americans-depend-timely-paychecks/}{``Survey Reveals Ongoing Financial Strain as Majority of Americans Depend on Timely Paychecks,'' \textit{Metro Atlanta CEO}, September 19, 2025.}} The same survey reports that 73.64\% prefer higher wages to better health benefits, 54.96\% prefer extra cash in each paycheck to a large tax refund, and 77.70\% of respondents receiving annual raises said those raises were not enough to keep up with inflation. These results are consistent with broader reporting on paycheck-to-paycheck balance sheets and low household savings.\footnote{See \href{https://www.acainternational.org/news/income-volatility-reshapes-the-how-consumers-save/}{PYMNTS Intelligence, ``Income Volatility Reshapes How Consumers Save,'' \textit{ACA International}, November 21, 2025}; and \href{https://asset-sync.advisorperspectives.com/commentaries/2026/06/03/records-tape-savings-three-year-low}{``Records on the Tape: Savings at a Three-Year Low,'' \textit{Advisor Perspectives}, June 3, 2026.}}

\tab Taken together, these facts describe a paycheck-sensitive economy in which income timing, direct deposit, digital access, and inflation pressure can translate quickly into liquidity demand. Because paychecks are overwhelmingly received through demand deposit accounts, paycheck sensitivity creates a direct channel from income uncertainty to bank liquidity demand. This paper argues that LCR-type run-off factors may underestimate liquidity demand in behavioral stress states if they omit loss aversion and clientele-specific liquidity preference. When loss-averse depositors assign sufficiently high probability to a bad income state---states in which a delayed or reduced paycheck threatens immediate consumption needs---withdrawal demand can spike beyond average regulatory assumptions. This creates a \emph{behavioral liquidity externality}: banks must hold additional cash reserves against amplified withdrawal risk, reducing funds available for positive-net-present-value lending. Our model formalizes this channel and characterizes the stress states---the \textbf{Bank Run Exposure State Space}---where loss aversion transforms ordinary liquidity demand into a systemic threat to convertibility.

\tab The proof-of-concept empirical exercise implemented in the paper takes this argument to public FFIEC Call Report data. Because public quarterly filings cannot observe depositor-level paycheck timing, overdrafts, failed payments, or direct-deposit disruptions, the paper treats the exercise as an operational demonstration rather than a definitive test of the behavioral model. We first estimate a bank run exposure regression using retail share as a coarse proxy for paycheck-sensitive clientele. We then construct a regression-weighted composite proxy, $LA^{reg}_{bt}$, from transaction-deposit share, core-deposit share, and consumer-loan share, and compare it with both retail share and an equal-weight version of the composite. We also construct a small-bank indicator for institutions with less than \$1 billion in total assets and only domestic offices, and interact that indicator with liquidity risk and the exposure proxies. The comparison asks whether balance-sheet proxies designed to summarize several channels of depositor sensitivity carry more empirical content than retail share alone, including in event-window specifications around the Silicon Valley Bank episode and in small-bank subsamples where depositor relationships may be more local.

\subsection{Positioning Within the Literature}\label{subsec:LitReview}
\tab The literature on bank runs is huge. Of necessity, we sample pertinent parts of that literature that may be contrasted with our model. Arguably, the most influential paper on bank runs is \citet{DiamondDybvig1983}, hereinafter ``DD", who used a contract theory approach in a Bayesian-Nash model to distinguish among multiple equilibria--one of which supports a bank run. Specifically, Diamond and Dybvig specified two types of consumers with different time preference for consumption in a three (3) periods model. Under full information, first best optimal risk sharing entails each consumer satisfying her demand for consumption based on her time preference. By contrast, under asymmetric information agents withdraw from the bank for non-consumption reasons. Because banks lend long and borrow short, this leads to a second best unstable equilibrium with lurking bank runs.

\tab A short and comparatively contemporaneous paper by \citet{PostlewaiteVives1987} extended the DD model to four (4) periods in which they obtain a unique equilibrium based on a Prisoner`s dilemma paradigm. There, they imposed restrictions on returns to demand deposit and posited non-consumption based withdrawals as the source of potential runs. A key innovation in their paper is derivation of an endogenous probability of bank runs based on comparative statics of equilibrium.

\tab \citet[pg.~1294]{GoldsteinPauzner2005} critique of the DD model states ``that it does not provide tools to predict which equilibrium occurs or how likely each equilibrium is". Consequently, they modified the DD model by assuming that ``economic fundamentals are stochastic". That is, each agent has a production technology with state dependent probability of \emph{ex ante} output, and receives a private signal concerning future states of the economy. Range based signalling is enough to trigger a bank run. For example, the bank solves an expected utility maximization problem over the set of all possible states, based on a fixed payment for early withdrawal of demand deposits, \emph{ibid} pg 1308, and its surmise of \emph{taboo} limits for private signals of its customers. The first order condition for the solution of the bank's problem contains an implicit endogenous probability of bank runs.

\tab \citet{Uhlig2009} introduced a bank run model that explained aspects of the Great Recession of 2008. He acknowledged that the DD model was a benchmark, and argued that it was inapplicable to the 2007--2009 financial crisis because the composition of bank portfolios was different than DD assumed. For instance, Uhlig noted the high degree of mortgage backed securities and credit default swaps held in bank portfolios were a major cause of the crisis. In a nutshell, Uhlig's model suggests that the Great Recession of 2008 was a run of banks on banks wherein ``local banks hold deposit contracts on core banks, who in turn use the market to obtain liquidity," Uhlig at page 5. Although he did not model loss aversion explicitly, he implied that banks were risk seeking over ``period-2 opportunity cost for providing period-1 resources per selling long term services from the perspective of individual core banks"--an empirical regularity consistent with loss aversion. Compare \citet[pp.~13-14]{Uhlig2009}.

\tab In related work, \citet{MartinSkeieThadden2014} extended a ``bank run" model to ``financial institutions" by analogy with a repo market. There, a tri-party agent or bank serves the dual role of market maker and clearing house. Borrowers, i.e. security dealers, hold [long] marketable securities which they use as collateral for short term loans from creditors, i.e. pension funds, in a veritable revolving credit scheme.\footnote{\citet{BernardoWelch2004} also examine runs on financial institutions in the context of market microstructure. However, they did not consider tri-party repo. In their model, fear of future liquidity shocks was the driving force behind runs on the market. However, liquidity shocks was not modeled explicitly.} For by definition, borrowers enter an agreement to repurchase the security-collateral as a specified price (usually higher) at a later date. The tri-party agent acts as an intermediary which coordinates the process by matching borrowers with creditors/lenders on the basis of borrowers portfolio needs, and gives them the flexibility of substituting different classes of securities as collateral. A run on the security dealer occurs if creditors lose confidence in the market, i.e. their ability to sell the security-collateral at a price that covers costs, and call in their loans without refinancing.  \citet[pg.~2]{MartinSkeieThadden2014} characterized their paper thus: `` [o]ur main goal is to exhibit and model [] similarities [with the DD model], and to highlight the fundamental differences between securities dealers that borrow in the repo market against marketable securities as collateral and commercial banks that borrow unsecured deposits and hold nonmarketable loan portfolios" (emphasis added).

\tab \citet{TrautmanVlahu2013} experimentally investigate strategic loan defaults as a coordination problem, examining how expectations about bank fundamentals and borrower repayment capacity influence repayment decisions. Analyzing borrower-level characteristics, they find that risk attitudes--especially loss aversion over financial losses--strongly and robustly shape repayment behavior. Loss-averse borrowers value the cash they currently hold more than the larger but uncertain future payoff contingent on the bank's survival. As a result, they are more likely to withhold repayment to avoid the immediate loss they would incur if the bank were to fail.

\tab \citet{Kiss_et_al_2022} provide important experimental evidence on how loss aversion influences depositor behaviour during bank runs, particularly regarding line formation and withdrawal decisions. Their experiment shows that loss aversion's effect reverses depending on the information environment--loss-averse depositors bid less when they cannot observe others (viewing bids as losses), but bid more to withdraw first when observability allows them to signal and coordinate (viewing failure to coordinate as a loss). This suggests that loss aversion may operate through different mechanisms depending on context: the loss of money spent to move first versus the loss of the coordination payoff. \citeauthor{Kiss_et_al_2022} argue that loss aversion should be incorporated into models of bank runs. That is one of the objectives of this paper.

\tab \citet{ChoiGoldsmithPinkhamYorulmazer2023} study contagion around the Silicon Valley Bank (SVB) run and show that uninsured deposits, unrealized losses in held-to-maturity securities, bank size, and cash holdings help explain cross-bank stress after SVB. \citet{FeinsteinHalajSojmark2024} develop a balance-sheet model of depositor-run risk in which held-to-maturity accounting can mask run-relevant vulnerabilities. These recent papers are important for our framing because they emphasize balance-sheet and funding vulnerabilities exposed by the 2023 regional-bank episode. Our paper is complementary: it holds those institutional vulnerabilities in view but asks how depositor loss aversion and paycheck liquidity stress can make a given clientele more run-prone.

\tab \citet{CorreiaLuckVerner2026b} provide important new evidence on the distinction between bank runs and bank failures. Using large language models applied to historical U.S. newspapers, they identify more than 3,000 bank runs from 1863 to 1934 and show that many runs did not culminate in bank failure. Their evidence suggests that weak fundamentals are central to whether a run becomes a failure, while strong banks can survive runs through liquidity support, public reassurance, interbank cooperation, examination, or suspension of convertibility. This evidence is directly relevant to our framework because it clarifies that a bank run is not equivalent to failure. Our model instead focuses on the ex-ante exposure state: the behavioral and liquidity conditions under which a depositor clientele becomes run-prone before suspension or failure is observed. The empirical companion paper, \citet{CharlesCadoganExAnteWIP2026}, develops the LLM-based stress-test and regression implementation of this exposure idea; the present paper provides the underlying behavioral theory.

\tab Those bank run models which follow the Diamond-Dybvig contract theory paradigm \citep{DiamondDybvig1983} use consumption based utility models as a conduit for agents' preferences and behavior. To the best of our knowledge, this paper is the first to introduce a bank run model based on (1) a martingale exposure representation\footnote{\citet{Carmona2004} introduced a large-economy ``equilibrium bank runs" model in the context of a Diamond-Dybvig framework. However, that is distinguished from our probability-theory approach in a paycheck-to-paycheck economy.}, and (2) a behavioral paradigm with consumption ratcheting\footnote{\citet{Dybvig1995} used the consumption ratcheting principle in a model with habit formation.} in a paycheck-to-paycheck economy\footnote{This is distinguished from payday lending because ``payday loans are not typically offered by depository institutions" \citep[pg.~169]{Stegman2007}.} with (3) loss aversion to reduction in consumption embedded in \emph{bank run exposure}. In particular, we extend Tobin's liquidity-preference analysis \citep{Tobin1958} to include loss aversion to fluctuations in liquidity, and show how bank run exposure is determined by liquidity needs and liquidity risk.

\tab Our paper produces several new results, including an econometric implication for biased estimation of bank run exposure when liquidity risk is measured without loss aversion. Using a paradigm different from \citet{GoldsteinPauzner2005}, we identify a Bank Run Exposure State Space: the subset of states in which loss aversion, bad-state probability weights, and paycheck liquidity stress jointly support a run.

\tab We also show how probabilities of bank runs can be computed endogenously by identifying a stopped bank run process.\footnote{This approach is distinguished from \citet[pg.~24]{Dermine2009}, who computed probabilities in the context of relaxed Basel II capital requirements based on the large-portfolio expansion of the \citet{Vasicek2002} loan portfolio model.} In \citet[pg.~1309]{GoldsteinPauzner2005}, the computation of endogenous probability of bank runs is implicit and based on a first order condition for the bank's expected utility maximization problem. Here, the stopped-process approach links the run probability to the withdrawal sequence, the reserve constraint, and the subjective bad-state probability.

\tab Using a paradigm different from \citet{BowmanMinehartRabin1999}, we show how consumption based loss aversion is propagated in our model, and use that mechanism to characterize the exposure state space. Moreover, we develop a dynamic bank run exposure process and show that bank run exposure has a martingale representation.

\tab We then translate the exposure equation into a practical regression specification and illustrate the diagnostic content of the regression with simulated bank-run exposure plots. The empirical component of this research agenda is developed in the companion work in progress, \citet{CharlesCadoganExAnteWIP2026}, which uses an LLM-based stress-test design to implement and evaluate the proposed bank-run exposure regression when internal bank account-level data are difficult to obtain.

\tab Furthermore, we model stochastic loss aversion as a positive half-Cauchy process \citep{CharlesCadogan2018}, obtained by mapping a stationary Gaussian Ornstein-Uhlenbeck diffusion \citep[Ch.~15]{KarlinTaylor1981,Oksendal2003} through the probability-integral-transform and inverse half-Cauchy quantile construction \citep{Rosenblatt1952,Billingsley1995,Devroye1986,JohnsonKotzBalakrishnan1994}. This means that loss aversion evolves over time as a smooth Gaussian diffusion that is transformed via a quantile mapping so that its marginal distribution is positive and heavy-tailed (half-Cauchy), allowing for occasional extreme spikes in fear. This keeps the median loss-aversion index near $2.25$ while allowing extreme tail realizations that are central to run exposure.

\tab The rest of this paper proceeds as follows. In \cref{sec:TheModel} we introduce the model. There, in  \cref{subsec:LiquidityPreferenceTheory} we refine Tobin's liquidity-preference theory \citep{Tobin1958} to include loss aversion from fluctuations in income. The main result is summarized in \Cref{prop:LiquidityPrefEquation} and \Cref{cor:RiskSeekingOverIncomeLoss}. \cref{subsubsec:ConsumptionRatchet} introduces consumption ratcheting, and loss aversion to reduction in income and consumption. The main result there is \Cref{thm:LossAversionBasedMPC}. In \cref{subsec:OptimalConsumptionProblem} we introduce a simple consumption based behavioral asset pricing model. The main result there is \Cref{lem:BankRunTriggerEndogenLossAver}, which describes a sufficient stress-state condition under which a bank run can be triggered by loss aversion. \cref{sec:BankRunTopology} introduces the Bank Run Exposure State Space generated by those stress-state diagnostics. The main results are summarized in \Cref{thm:BankRunTopology}. In \cref{sec:ProbEstForStoppedBankRun} we provide probability estimates for suspension of convertibility based on a stopped bank run process. The main results there are \Cref{prop:BankRunProbCompute} and \Cref{cor:BankRunPoissonDist}. \cref{sec:BankRunExposeProcess} provides a martingale representation of bank run exposure, a practical exposure-regression specification, and simulated regression diagnostics. \cref{sec:EmpiricalImplementation} tests aspects of the theory with coarse bank call reports data. Finally, \cref{sec:Conclusion} concludes.
\section{The Model}\label{sec:TheModel}
\tab \citet[pp.~73-74]{Tobin1958} hypothesized a risk return tradeoff in which \emph{risk lovers} have negatively sloped indifference curves, while \emph{risk averters} have positively sloped indifference curves in (risk, return) space. Inasmuch as the level of Tobin's indifference curve represents a given quantum of utility attained by different combinations of risk and return, implicit in his hypothesis is the existence of otherwise convex utility associated with risk lovers, and concave utility associated with risk averters \citep[cf.][]{Markowitz1952}. By the same token, prospect theory posits that subjects are risk seeking over losses and risk averting over gains.\footnote{\citet[pg.~268]{KahnYver1979} refer to this phenomenon as the \emph{reflection effect}.} Moreover, risk seeking and risk averting behavior is observed in the same subject. \citet{KahnYver1979} posit a piecewise value function that is convex over losses and concave over gains wherein gains and losses are measured relative to a reference point in lieu of a utility function over levels \citep{VonNeumanMorgenstern1953}. Accordingly, we extend Tobin's analysis \citep{Tobin1958} to include loss aversion by and through the value-function parametrization in \citet{KahnYver1979}. In our model, loss aversion is endogenized as a result of the agent's assignment of probability weights to uncertainty about concomitant shocks to consumption and fluctuations to income.
\subsection{Liquidity preference as behavior towards loss aversion}\label{subsec:LiquidityPreferenceTheory}
\tab Let $x$ be the change in wealth levels--relative to a reference wealth level--for depositing clientele, and $v(x)$ be the corresponding value function for a representative depositor--in concert with prospect theory \citep{KahnYver1979,TverKahn1992}. \citet[pg.~9]{BarberisHuangSantos2001} note that $x$ could represent fluctuation(s) in income or wealth relative to a benchmark.\footnote{See also \citep[pg.~16]{Friedman1957}, who makes the case for relative wealth as opposed to levels of wealth in his ``permanent income" hypothesis.} For instance, if $Y(t)$ is disposable income at time $t$, and $CPI_{benchmark}(t)=\frac{CPI(t)}{CPI_{baseyear}}$, then $x(t)=\frac{CPI_{benchmark} (t)}{Y(t)}$ . Thus, $v(x)$ is the utility of wealth fluctuations.\footnote{\citet[pg.~6]{BarberisHuangSantos2001} used a more elaborate specification: $v(X_{t+1},S_t,z_t)$ where $X_{t+1}$ is fluctuation in investment, $S_t$ is risky assets held at time $t$, and $z_t$ is a state variable which measures past gains or losses as a fraction of $S_t$. However, in our model there is no intertemporal tradeoffs even though agents care about past gains or losses in income.}
\begin{assumption}
	Agents have liquidity preference paycheck to paycheck.
\end{assumption}
\begin{assumption}
	Agents are loss averse to fluctuations in wealth and income.
\end{assumption}
\tab These assumptions are consistent with Tobin's liquidity-preference theory \citep{Tobin1958}, i.e. agents live from paycheck-to-paycheck. Let $v_g(x)$ be the value function for gains, i.e. $x>0$. For losses, write $\ell=-x>0$ and let $v_\ell(\ell)$ denote the positive magnitude of the loss branch. Thus $v_g^\prime>0,\;v_g^{\prime\prime}<0,\;v_\ell^\prime>0$, and $v_\ell^{\prime\prime}<0$; the realized prospect-theory value of a loss is negative because the loss branch enters with a minus sign and is multiplied by the loss-aversion index $\lambda$. Let $\mathbb{I}_{A}$ be an indicator function defined over some set $A$. Then the value function can be written as
\begin{equation}\label{eq:ValueFuncFlucWealth}
	v(x)=\mathbb{I}_{\{x\geq0\}}v_g(x)-\mathbb{I}_{\{x<0\}}\lambda v_\ell (-x)
\end{equation}
\noindent where $\lambda$ is a loss aversion index.  \textit{In the sequel $v_\ell(-x)$ implies that $x<0$ in loss domain unless states otherwise.}  Suppose that changes in wealth levels is normally distributed, i.e. $x\sim N(\mu_x,\sigma_x)$, where $\mu_x$ is the mean change in wealth, and $\sigma_x$ is the corresponding risk. Let $\Phi(\cdot)$ be the cumulative normal distribution function and define the standardized variable $\mathfrak{z}$ such that $x=\mu_x+\mathfrak{z}\sigma_x$.
\begin{defn}[Radon measure] \label{def:RadonInt}
	Following \citet[pg.~114]{HewittStrom1965}, let $\mathcal{B}(X)$ be the space of all bounded functions defined on a nonempty set $X$. Let $\mathcal{C}_0(X) \subset \mathcal{B}(X)$ be the subspace of continuous functions with compact support in $X$, i.e. if $K \subset X$ is compact, then for some measure $\mu$ we have $\mu(X \backslash K) < \epsilon$ for some $\epsilon > 0$. A Radon measure is a nonnegative linear functional $I$ defined on $\mathcal{C}_0(X)$ if for $f,\;g \in \mathcal{C}_0(X)$ we have:
	\begin{enumerate}
		\item[i.~~] $I(f+g) = I(f) + I(g)$;
		\item[ii.~] $I(\alpha f) = \alpha I(f)$, for some scalar $\alpha$;
		\item[iii.] $I(f) \geq 0, \; \text{if}\; f \in \mathcal{C}_0^+(X)$.
	\end{enumerate}\qed
\end{defn}
\tab Furthermore, let $w(d\Phi(\mathfrak{z}))$ be a Radon measure of the probability weight assigned to a given change in wealth level according to cumulative prospect theory.\footnote{\citet[pg.~79]{BenartziThaler1995} state that under cumulative prospect theory ``one transforms cumulative rather than individual probabilities. Consequently, the decision weight [] depends on the cumulative distribution of the gamble". Our Radon measure approach to probability is consistent with \citet{ThomasVolcic1989}, and distinguished from \citet{Schmeidler1989}. The technical vagaries of those paradigms are outside the scope of this paper.} In the sequel we define
\begin{equation}\label{eq:ProbWeightPhiz}
	w(d\Phi(\mathfrak{z})) =
	\begin{cases}
		w^+(d\Phi(\mathfrak{z})) \quad\text{if gains}\\
		w^-(d\Phi(\mathfrak{z})) \quad\text {if losses}
	\end{cases}
\end{equation}
\noindent So that the expected value function is given by
\begin{equation}\label{eq:ExpectedValueFuncFlucWealth}
	V(\mu_x,\sigma_x)
	=E[v(x)]
	=
	\int^\infty_{-\infty}
	v(\mu_x+\mathfrak{z}\sigma_x)\;w(d\Phi(\mathfrak{z})).
\end{equation}

\tab Following \citet[pg.~75]{Tobin1958}, we interpret $V(\mu_x,\sigma_x)$ as defining a local indifference contour in mean-risk space. Under the maintained assumption that the probability weighting measure $w(d\Phi(\mathfrak{z}))$ is fixed with respect to local changes in $(\mu_x,\sigma_x)$, and that differentiation under the integral is valid, the compensating change in mean wealth per unit change in wealth risk is
\begin{equation}\label{eq:TobinProspectSlope}
	\frac{d\mu_x}{d\sigma_x}
	=
	-
	\frac{
		\int^\infty_{-\infty}
		\mathfrak{z}v^\prime(x(\mathfrak{z}))w(d\Phi(\mathfrak{z}))
	}{
		\int^\infty_{-\infty}
		v^\prime(x(\mathfrak{z}))w(d\Phi(\mathfrak{z}))
	}.
\end{equation}

\tab The derivation of \eqref{eq:TobinProspectSlope}, and the sign calculation behind the lemma below, are collected in \cref{sec:Proofs}. The gain and loss regions in \eqref{eq:TobinProspectSlope} are determined by the sign of $x(\mathfrak{z})=\mu_x+\mathfrak{z}\sigma_x$, not by the sign of $\mathfrak{z}$ alone. Hence,
\begin{equation}\label{eq:ProspectMarginalValue}
	v^\prime(x(\mathfrak{z}))
	=
	\mathbb{I}_{\{x(\mathfrak{z})\geq 0\}}
	v_g^\prime(x(\mathfrak{z}))
	+
	\mathbb{I}_{\{x(\mathfrak{z})<0\}}
	\lambda v_\ell^\prime(-x(\mathfrak{z})).
\end{equation}
\noindent The derivative is understood almost everywhere. Because $x(\mathfrak{z})=0$ occurs only at $\mathfrak{z}=-\mu_x/\sigma_x$, the kink in the prospect-theory value function does not create a mass-point problem under the maintained continuous distribution for $\mathfrak{z}$.

\begin{lem}[Determinants of sign of income-income risk tradeoffs]\label{lem:SignOfIncomeIncomeRiskTradeoff}
	The sign of tradeoffs between income fluctuation and income fluctuation risk is determined by
	\begin{equation*}
		\psi(\lambda)=g(\cdot)\biggl(\frac{\partial I_1}{\partial x}+\lambda\frac{\partial I_2}{\partial x}\biggr)-(I_1+\lambda I_2)(-J_1+\lambda J_2),
	\end{equation*}
	\noindent where $g,I_1,I_2,J_1$, and $J_2$ are defined in the proof of this lemma in \cref{sec:Proofs}.\qed
\end{lem}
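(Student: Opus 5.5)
The plan is to start from the slope formula \eqref{eq:TobinProspectSlope} and to split both integrals at the kink $\mathfrak{z}^*=-\mu_x/\sigma_x$, using \eqref{eq:ProspectMarginalValue}. I will write
\begin{equation*}
	I_1=\int_{\mathfrak{z}^*}^{\infty}\mathfrak{z}\,v_g^\prime(x(\mathfrak{z}))\,w^+(d\Phi(\mathfrak{z})),\qquad
	I_2=\int_{-\infty}^{\mathfrak{z}^*}\mathfrak{z}\,v_\ell^\prime(-x(\mathfrak{z}))\,w^-(d\Phi(\mathfrak{z})),
\end{equation*}
so that the numerator of \eqref{eq:TobinProspectSlope} equals $I_1+\lambda I_2$. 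The denominator is $D=D_1+\lambda D_2$, where $D_1$ and $D_2$ are the analogous integrals without the factor $\mathfrak{z}$. Both $D_1$ and $D_2$ are positive because $v_g^\prime>0$, $v_\ell^\prime>0$ and $w^\pm$ are nonnegative Radon measures in the sense of \Cref{def:RadonInt}. Hence $d\mu_x/d\sigma_x=-(I_1+\lambda I_2)/(D_1+\lambda D_2)$, with a positive denominator.

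Since the lemma concerns the sign of the tradeoff, I need the sign of a derivative of this ratio. By Tobin's argument this is a statement about the shape of the indifference contour, that is, how the slope changes as the mean wealth level moves along $x$. I will therefore differentiate the ratio with respect to $x$ (equivalently $\mu_x$), holding the weighting measure fixed as in the maintained assumption, and apply the quotient rule. This gives
\begin{equation*}
	\frac{\partial}{\partial x}\Bigl(\frac{I_1+\lambda I_2}{D}\Bigr)=\frac{D\bigl(\partial_x I_1+\lambda\,\partial_x I_2\bigr)-(I_1+\lambda I_2)\,\partial_x D}{D^2}.
\end{equation*}
Two identifications then produce the stated form of $\psi(\lambda)$. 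First, I set $g(\cdot)=D$. Second, differentiating $D$ under the integral sign gives $\partial_x D=-J_1+\lambda J_2$, where
\begin{equation*}
	J_1=-\int v_g^{\prime\prime}\,dw^+>0,\qquad J_2=-\int v_\ell^{\prime\prime}(-x)\,dw^-.
\end{equation*}
The sign convention for $J_2$ comes from the chain rule applied to $v_\ell^\prime(-x)$: the inner derivative contributes $-1$, so $\partial_x v_\ell^\prime(-x)=-v_\ell^{\prime\prime}(-x)$. Because $v_g^{\prime\prime}<0$ and $v_\ell^{\prime\prime}<0$, both $J_1$ and $J_2$ are positive. With $D^2>0$, the sign of the derivative is exactly the sign of $\psi(\lambda)$, which proves the lemma.

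The main obstacle is justifying the differentiation at the moving boundary $\mathfrak{z}^*(x)$. Leibniz's rule adds boundary terms evaluated at the kink. I would check that these terms cancel or vanish. In the numerator they carry the weight $\mathfrak{z}^*$ times the jump $v_g^\prime(0)-\lambda v_\ell^\prime(0)$. I would absorb them into $\partial_x I_1$ and $\partial_x I_2$, noting that they are null because the atomless Gaussian law gives the kink measure zero. Interchanging derivative and integral under dominated convergence, assuming integrable $v''$ against $w^\pm$, completes the argument.
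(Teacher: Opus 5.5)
Your algebra matches the paper's. You split at $\mathfrak{z}^*=-\mu_x/\sigma_x$, apply the quotient rule with $g=D_1+\lambda D_2$, and use the chain rule on $v_\ell^\prime(-x)$ to get $\partial_x g=-J_1+\lambda J_2$ with $J_1,J_2>0$. The gap is in your final paragraph.

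The Leibniz boundary terms are not removed by the kink having Gaussian measure zero. A moving limit of integration contributes three factors: the integrand at the limit, the density of the weighting measure there, and $d\mathfrak{z}^*/d\mu_x=-1/\sigma_x$. The mass of the single point $\{\mathfrak{z}^*\}$ never enters. Let $f^\pm$ be the densities of $w^\pm$ (Leibniz's rule needs them anyway) and set $\Delta=v_g^\prime(0)f^+(\mathfrak{z}^*)-\lambda v_\ell^\prime(0)f^-(\mathfrak{z}^*)$. The total derivatives are then
\[
\partial_x(I_1+\lambda I_2)=(\text{fixed-partition terms})+\frac{\mathfrak{z}^*}{\sigma_x}\Delta,
\qquad
\partial_x g=-J_1+\lambda J_2+\frac{\Delta}{\sigma_x}.
\]
Generically $\Delta\neq0$, precisely because loss aversion kinks $v$ at the reference point. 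For example, without probability weighting and with $v_g^\prime(0)=v_\ell^\prime(0)$, $\Delta$ equals $(1-\lambda)v_g^\prime(0)$ times the normal density at $\mathfrak{z}^*$. You could absorb the first boundary term into $\partial_x I_1,\partial_x I_2$, which the lemma leaves unspecified. You never address the second, however, and it cannot be absorbed into $J_1,J_2$ without destroying their definition as positive curvature integrals. Carrying it through, the numerator of $\partial_x\bigl[(I_1+\lambda I_2)/g\bigr]$ is actually $\psi(\lambda)-\frac{\Delta}{\sigma_x^2}\int x(\mathfrak{z})v^\prime(x(\mathfrak{z}))\,w(d\Phi(\mathfrak{z}))$, with $\psi$ computed on the fixed partition. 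So your conclusion fails for a genuinely moving boundary.

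The paper does not try to make these terms vanish. It differentiates \emph{for fixed gain/loss regions} throughout and presents $\psi(\lambda)$ explicitly as a local sign condition. Replace your last paragraph with that convention, or keep the $\Delta$-terms and state the corrected numerator; the rest of your argument then coincides with the paper's proof.

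Separately, $\beta=-(I_1+\lambda I_2)/g$ and you differentiate $(I_1+\lambda I_2)/g$, so your computation yields $\beta^\prime=-\psi(\lambda)/g^2$. This is opposite in sign to \eqref{eq:BetaSlopeDirection}. It is harmless for the lemma as worded (``determined by''), but make the sign explicit before using it for \Cref{cor:RiskSeekingOverIncomeLoss}.
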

\tab Inspection of $\psi(\lambda)$ shows that the local sign condition can change as loss aversion changes. This implies that the curve in $(\mu_x,\sigma_x)$ space is nonlinear because its slope changes with loss aversion. In particular, define a local critical value by
\begin{equation}\label{eq:LambdaCriticalValue}
	\lambda_c=\inf\{\lambda>0:\psi(\lambda)<0\}.
\end{equation}
\noindent Then $\beta^\prime(x)<0$ whenever $\lambda>\lambda_c$ and the denominator $g^2$ is nonzero. This result has econometric implications for the functional form of tradeoffs between fluctuations in income and dispersion of those fluctuations. In particular, we must also include a separate specification for the evolution of loss aversion to avoid simultaneity problems. See, for example, \citet[pg.~20, Eq.~(17)]{BarberisHuangSantos2001} for a specification of loss aversion, and \citet[pg.~7]{White2001}, \citet[pg.~312]{DavidsonMacKinnon1999}, and \citet[\S15.2]{Greene2012} for econometric theory behind estimation procedures. The proof details for the following proposition and corollary are also given in \cref{sec:Proofs}.
\begin{prop}\label{prop:LiquidityPrefEquation}
	Let $x$ be change in income, and $v$ be a value function over gains and losses with loss aversion index $\lambda$. Furthermore, let $\mu_x$ and $\sigma_x$ be the average change in income and $\sigma_x$ be the corresponding risk. Then the tradeoff between liquidity and liquidity risk is given by
	\begin{equation}
		\mu_x \approx \beta(x;\lambda)\sigma_x
	\end{equation}
	\noindent where $\beta(x;\lambda)$ is the local liquidity risk exposure.\qed
\end{prop}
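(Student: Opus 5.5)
The plan is to read \Cref{prop:LiquidityPrefEquation} as a first-order (local linear) statement about the indifference contour $V(\mu_x,\sigma_x)=\text{const}$ described after \eqref{eq:ExpectedValueFuncFlucWealth}. I will take the slope $\beta(x;\lambda):=d\mu_x/d\sigma_x$ from \eqref{eq:TobinProspectSlope} and integrate it along the contour. The approximation will then follow from a Taylor expansion anchored at the zero-risk point.

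\textbf{Step 1: the slope.} I start from $V(\mu_x,\sigma_x)=\int v(\mu_x+\mathfrak{z}\sigma_x)\,w(d\Phi(\mathfrak{z}))$. I hold the weighting measure fixed and differentiate under the integral; this is justified almost everywhere by the remark after \eqref{eq:ProspectMarginalValue}, since the kink sits at a single null point. Setting $dV=0$ gives
\begin{equation*}
\beta(x;\lambda)\equiv\frac{d\mu_x}{d\sigma_x}=-\frac{\int \mathfrak{z}\,v^\prime(x(\mathfrak{z}))\,w(d\Phi(\mathfrak{z}))}{\int v^\prime(x(\mathfrak{z}))\,w(d\Phi(\mathfrak{z}))}.
\end{equation*}
I then split each integral with \eqref{eq:ProspectMarginalValue} into a gain part and a loss part. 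Writing the gain-region integrals as $I_1,J_1$ and the loss-region integrals as $\lambda I_2,\lambda J_2$ makes $\beta$ an explicit ratio that depends on $\lambda$. This matches the objects used in \Cref{lem:SignOfIncomeIncomeRiskTradeoff}.

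\textbf{Step 2: local linearization.} The contour passes through the reference point, where $\mu_x=0$ and $\sigma_x=0$: with no change in income and no risk, the value is $v(0)=0$. I therefore expand $\mu_x(\sigma_x)$ about $\sigma_x=0$:
\begin{equation*}
\mu_x(\sigma_x)=\mu_x(0)+\beta\,\sigma_x+O(\sigma_x^2),\qquad \mu_x(0)=0.
\end{equation*}
This yields $\mu_x\approx\beta(x;\lambda)\sigma_x$. Equivalently, by the mean value theorem, $\mu_x=\beta(\tilde x;\lambda)\sigma_x$ exactly at some intermediate point on the contour. This justifies calling $\beta$ the \emph{local} exposure evaluated at $x$.

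\textbf{Main obstacle.} The delicate step is keeping the approximation error controlled when $\beta$ varies with $\sigma_x$. By \Cref{lem:SignOfIncomeIncomeRiskTradeoff}, $\partial\beta/\partial x$ has the sign of $\psi(\lambda)$, so the curvature can change sign near $\lambda_c$ from \eqref{eq:LambdaCriticalValue}. My approach is to bound the remainder by $\tfrac12\sup|\partial\beta/\partial\sigma_x|\,\sigma_x^2$, using the explicit derivative $\psi(\lambda)/g^2$ on a neighborhood where the denominator $g$ is bounded away from zero. That restriction, which also requires the denominator of the slope formula to be nonzero, is the one substantive hypothesis the statement implicitly needs. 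Finally, I will note that interpreting $\mu_x$ as average liquidity and $\sigma_x$ as liquidity risk is the paycheck-to-paycheck reading given by the paper's assumptions, so no further argument is needed for that identification.
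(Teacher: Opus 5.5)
Your Steps 1--2 follow the paper's argument. The paper also reads $\beta(x;\lambda)$ off the compensating slope \eqref{eq:TobinProspectSlope}, writes $d\mu_x=\beta(x;\lambda)\,d\sigma_x$, and integrates locally. It says only that the intercept is ``absorbed by the reference point.'' Where you go beyond the paper, two points need repair.

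The first is your anchor. The paper linearizes around an interior point of a contour, where $\sigma_x>0$ and differentiation under the integral is legitimate, and it absorbs the intercept into the reference point. You anchor at $(\mu_x,\sigma_x)=(0,0)$. That gives a real reason for a zero intercept, but only on the single contour $V=0$. Every other indifference contour has $\mu_x(0)\neq 0$, so there you are back to the paper's re-centering.

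More importantly, the origin is exactly the point your Step~1 justification excludes. The remark after \eqref{eq:ProspectMarginalValue} presumes $\sigma_x>0$. At $\mu_x=\sigma_x=0$ one has $x(\mathfrak{z})\equiv 0$, so all the weight sits on the kink, $V(\cdot,0)$ is kinked, and the slope formula is undefined. A Taylor expansion about $\sigma_x=0$ with $O(\sigma_x^2)$ remainder is therefore not licensed by Step~1.

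Lead instead with your mean-value form. It needs only two things:
\begin{itemize}
\item differentiability for $\sigma_x>0$, by the implicit function theorem; note $g>0$ is automatic since $v^\prime>0$ on both branches;
\item continuity of the contour at $0$, from continuity and strict monotonicity of $V$ in $\mu_x$.
\end{itemize}
It then gives $\mu_x=\beta(\tilde x;\lambda)\sigma_x$ exactly, with $\beta$ evaluated at an intermediate contour point rather than at $x$.

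The second point is the error-control plan, which would not work as stated. In \Cref{lem:SignOfIncomeIncomeRiskTradeoff}, $\psi(\lambda)/g^2$ is $\partial\beta/\partial x$: the derivative under a uniform shift of $x(\mathfrak{z})$, i.e.\ in $\mu_x$ at fixed $\sigma_x$. The remainder in $\mu_x=\int_0^{\sigma_x}\beta\,ds$ is instead governed by the total derivative along the contour, $d\beta/d\sigma_x=\beta\,\partial_{\mu_x}\beta+\partial_{\sigma_x}\beta$. The $\partial_{\sigma_x}\beta$ part involves integrals with $\mathfrak{z}^2$ weights that $\psi$ does not contain.

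Moreover, $\psi$ is computed on a fixed gain/loss partition. It therefore drops the moving-kink contribution at $\mathfrak{z}=-\mu_x/\sigma_x$. That contribution is the jump $v_g^\prime(0)-\lambda v_\ell^\prime(0)$ times the weighting density there, divided by $\sigma_x$, and it is precisely the term that matters near your expansion point. So the nonvanishing of $g$ is not the substantive hypothesis; smoothness at the anchor is.

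If you want an $O(\sigma_x^2)$ statement, argue by homogeneity:
\begin{itemize}
\item With $C^2$ branches, replace $v$ near $0$ by its piecewise-linear part. Then $V=V_1+O(\mu_x^2+\sigma_x^2)$, with $V_1$ homogeneous of degree one.
\item Set $\mu_x=m\sigma_x$. The zero contour then solves $V_1(m,1)=O(\sigma_x)$.
\item Since $\partial_m V_1(m,1)>0$, this gives $m=\beta_0+O(\sigma_x)$, i.e.\ $\mu_x=\beta_0\sigma_x+O(\sigma_x^2)$, where $\beta_0$ is the limiting slope.
\end{itemize}
Otherwise, state the result purely as a first-order approximation, which is all the paper proves.
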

\begin{cor}\label{cor:RiskSeekingOverIncomeLoss}
	Under the sign conditions above, there exists a local critical loss aversion value $\lambda_c$ such that $\beta^\prime(x;\lambda)<0$ whenever $\lambda >\lambda_c$ and $g(\mu_x,\sigma_x;\lambda)\neq 0$.	\qed
\end{cor}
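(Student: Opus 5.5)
The corollary follows by combining \Cref{lem:SignOfIncomeIncomeRiskTradeoff} with \Cref{prop:LiquidityPrefEquation}. First, I identify $\beta(x;\lambda)$ with the local slope of Tobin's indifference contour. Integrating the tradeoff \eqref{eq:TobinProspectSlope} locally gives $\mu_x\approx\beta(x;\lambda)\sigma_x$. Here $\beta$ is a ratio of the form $\beta=(I_1+\lambda I_2)/g$: the numerator collects the gain-branch and loss-branch pieces of $\int \mathfrak{z}v^\prime(x(\mathfrak{z}))w(d\Phi(\mathfrak{z}))$ split by the sign of $x(\mathfrak{z})$ as in \eqref{eq:ProspectMarginalValue}, and $g$ is the weighted marginal-value denominator. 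The factor $\lambda$ enters only through the loss branch, which is why the expression is affine in $\lambda$.

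Next, I differentiate $\beta$ in $x$ by the quotient rule, assuming differentiation under the integral is permitted as in the derivation of \eqref{eq:TobinProspectSlope}. This gives
\begin{equation*}
\beta^\prime(x;\lambda)=\frac{g\bigl(\partial_x I_1+\lambda\,\partial_x I_2\bigr)-(I_1+\lambda I_2)\,\partial_x g}{g^2}.
\end{equation*}
I will check that $\partial_x g=-J_1+\lambda J_2$ with the $J_i$ of the lemma. Because of the concavity signs $v_g^{\prime\prime}<0$ and $v_\ell^{\prime\prime}<0$, the gain-branch term carries the minus sign. The numerator is then exactly $\psi(\lambda)$. Since $g\neq 0$ forces $g^2>0$, the sign of $\beta^\prime$ equals the sign of $\psi(\lambda)$; this is the content of \Cref{lem:SignOfIncomeIncomeRiskTradeoff}.

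It then remains to show that $\psi(\lambda)<0$ for all $\lambda>\lambda_c$, with $\lambda_c$ as defined in \eqref{eq:LambdaCriticalValue}. Expanding, $\psi$ is a quadratic in $\lambda$:
\begin{equation*}
\psi(\lambda)=gI_1^\prime+I_1J_1+\lambda\bigl(gI_2^\prime+I_2J_1-I_1J_2\bigr)-\lambda^2 I_2J_2,
\end{equation*}
where the $g$ factor itself depends on $\lambda$ linearly and has to be absorbed as well. I would use the ``sign conditions above'' to assume that the leading coefficient is negative, for example $I_2J_2>0$ after also accounting for the $\lambda$-dependence of $g$. Under that assumption $\psi(\lambda)\to-\infty$, so the set $\{\lambda>0:\psi(\lambda)<0\}$ is nonempty and $\lambda_c<\infty$. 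A downward-opening quadratic is negative outside its roots. Hence $\{\psi<0\}\cap(0,\infty)$ either is all of $(0,\infty)$ or contains a ray $(r,\infty)$, possibly together with an initial interval $(0,r_1)$.

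The main obstacle is this last step. With the infimum definition, an initial negative interval $(0,r_1)$ could make $\lambda_c$ too small, because $\psi$ could turn positive again between the roots. I would first handle this by taking $\lambda_c$ to be the largest root, or equivalently by invoking the sign conditions to rule out a negative region below the positive one. An alternative is to read the corollary as applying on the terminal ray, noting that the infimum definition coincides with the largest root whenever $\psi(0)\ge0$. With that in place, $\lambda>\lambda_c$ and $g\neq0$ give $\beta^\prime(x;\lambda)<0$.
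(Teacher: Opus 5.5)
Your argument differs from the paper's in the decisive last step. The paper's proof cites \Cref{lem:SignOfIncomeIncomeRiskTradeoff} (the sign of $\beta^\prime$ equals the sign of $\psi$ when $g\neq 0$). It then simply \emph{assumes} that $\psi$ is locally monotone decreasing in the relevant region, so that $\psi(\lambda)<0$ for every $\lambda>\lambda_c=\inf\{\lambda>0:\psi(\lambda)<0\}$.

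You instead exploit the affine structure on a fixed gain/loss partition. Write $g=g_1+\lambda g_2$ with $g_1=\int_{\{x(\mathfrak{z})\geq 0\}}v_g^\prime(x(\mathfrak{z}))\,w(d\Phi(\mathfrak{z}))$ and $g_2=\int_{\{x(\mathfrak{z})<0\}}v_\ell^\prime(-x(\mathfrak{z}))\,w(d\Phi(\mathfrak{z}))$. Then $\psi$ is a quadratic with $\psi(0)=g_1\partial_x I_1+I_1J_1$ and leading coefficient $g_2\,\partial_x I_2-I_2J_2$. This coefficient is the precise condition your ``$I_2J_2>0$ after accounting for $g$'' is reaching for.

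Requiring that coefficient to be negative gives something the paper never shows: $\{\lambda>0:\psi(\lambda)<0\}$ is nonempty, so $\lambda_c<\infty$ and the corollary is not vacuous. Your remark about an initial negative interval $(0,r_1)$ also shows exactly why monotonicity merely \emph{near} $\lambda_c$ would not rescue the infimum definition. One needs the negative set to be a terminal ray, which $\psi(0)\geq 0$ (or monotonicity on all of $(\lambda_c,\infty)$) guarantees. The paper buys brevity with an unverified hypothesis; you replace it with checkable coefficient conditions and obtain existence as well.

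One caveat, which you share with the paper rather than introduce. Equation \eqref{eq:TobinProspectSlope} carries a minus sign, so in fact $\beta=-(I_1+\lambda I_2)/g$, and the quotient rule gives $\beta^\prime=-\psi(\lambda)/g^2$. The paper's proof of \Cref{lem:SignOfIncomeIncomeRiskTradeoff} drops this sign exactly as you do. So if you merely cite the lemma, your argument stands on the same footing as the paper's.

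Since you propose to re-derive the lemma, however, an honest check would return $-\psi$ as the numerator. Your downward-opening hypothesis would then force $\beta^\prime>0$ for large $\lambda$, the opposite of the claim. A self-consistent version flips the sign conditions: take $g_2\,\partial_x I_2-I_2J_2>0$, $\psi(0)\leq 0$, and $\lambda_c=\inf\{\lambda>0:\psi(\lambda)>0\}$. Equivalently, redefine $\psi$ as the negative of the paper's expression.
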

\subsection{Myopic loss aversion paycheck to paycheck}\label{subsec:MyopicLossAversion}
\tab \citet[pp.~74-75]{BenartziThaler1995} used a qualitative model of \emph{myopic loss aversion} to explain the equity premium puzzle.\footnote{\citet{MehraPrescott1985} popularized this puzzle by showing that the historic gap between returns on equities and bonds was too large to be explained by a version of a general equilibrium asset pricing model \citep{Lucas1978}. Those authors were prescient when they concluded ``a model which is not an Arrow-Debreu economy" may better explain the large equity premium.} In particular, \citet[pg.~74]{BenartziThaler1995} opined ``two factors contribute to an investor being unwilling to bear the risks associated with holding equities, loss aversion and a short evaluation period. We refer to this combination as \emph{myopic loss aversion}." Furthermore, they conducted separate simulation experiments by computing prospective utility over increasing return horizons for stocks, and bonds. The intersection of prospective utilities determines the evaluation period for their portfolios. In our set-up, agents evaluate their income each pay period. Because they live from paycheck to paycheck, there are no savings or intertemporal tradeoffs. Thus, they exhibit \emph{myopic loss aversion} paycheck to paycheck.

\subsubsection{A time separable state independent paycheck model}\label{subsubsec:TimeSeparablePaycheckModel}
\tab Let $\{c_0,c_1\}$ be the plan for changes-in-consumption relative to a reference consumption level for a representative agent in our paycheck to paycheck economy, where $c_{0\omega}$ and $c_{1\omega}$ be changes in consumption at dates $0$ and $1$ respectively.\footnote{\citet[pg.~10]{Friedman1957} states:
	\begin{quote}
		``If a consumer unit knows that its receipts in any one year are unusually high and if it expects lower receipts subsequently, it will surely tend to adjust its consumption to its "normal" receipts rather than to its current receipts."
	\end{quote}
	\tab This is tantamount to the reference-point hypothesis in \citet{KahnYver1979} applied to ``permanent consumption". That is, $c_0$ and $c_1$ are transitory ``adjustments" to ``permanent consumption".} And $\omega$ is the discrete state of nature\footnote{The analysis that follows is motivated by \citet[pg.~77]{LeRoyWerner2000}.} in the set $\Omega$ of all possible states of nature. Let $v(c_{0\omega},c_{1\omega})$ be a time separable value function. Suppose that at date-$0$ shocks to income are offset by changes in consumption $x$, and that agents are uncertain that changes in income at date-$1$ will be offset by changes in consumption. Let $\rho$ be a constant discount factor, and $\pi_\omega$ be the probability weight agents assign\footnote{We could have modeled $\pi_\omega$ as being dependent on a noisy signal $\tilde{\theta}$ received by agents about future states of the economy \citep[see, e.g.,][]{GoldsteinPauzner2005}. That is, $\pi_\omega(\tilde{\theta})=\pi_\omega+\varepsilon$ is an agent's forecast of $\pi_\omega$. While that would have led to a richer model, it is not clear whether the main results would have been appreciably different. Therefore, we eschew that paradigm and assume that the probability weight is somehow assigned to states of nature without loss of generality.} to the state $\omega\in\Omega$ defined as follows:
\begin{equation}\label{eq:ProbWeightPiOmega}
	\pi_\omega =
	\begin{cases}
		\pi^+_\omega \quad\text{if gains}\\
		\pi^-_\omega \quad\text{if losses}
	\end{cases}
\end{equation}
\begin{assumption}
	Agents in our economy have zero wealth.
\end{assumption}
\begin{assumption}
	Agents consumption pattern is consistent with the permanent income hypothesis.
\end{assumption}
\begin{assumption}
	Banks create liquidity by issuing demand deposits.
\end{assumption}
\begin{assumption}
	Agents receive their paychecks by direct deposit in their bank accounts.
\end{assumption}
\begin{assumption}
	The bank meets withdrawal demand from cash reserves and immediately available liquid assets. Demand deposits are the runnable liability, and all other balance-sheet items are held fixed for the two-period argument.
\end{assumption}
\tab \citet[pp.~1347-1348]{Manski2004} reports that in response to the question
\begin{quote}
	What do you think is the percent chance (or what were the chances out of 100) that your total household income, before taxes, will be less than $Y$ over the next 12 months?
\end{quote}
\tab posed in the \emph{Survey of Economic Expectations}, designed by Jeff Dominitz and him, there was substantial variation in the interquartile range $(Q)$ for respondents with the same median $(M)$ under a lognormal distribution assumption. This implies heterogenous loss aversion to reduction in income and concomitant reduction in consumption. This empirical regularity motivates our
\begin{assumption}[Permanent consumption fluctuation hypothesis]\label{assum:PermanentConsumeFluctuationHypothesis}
	Fluctuations in consumption $(c)$ are related to fluctuations in income $(x)$ through the equation
	\begin{equation*}
		c_i=\tilde{k}(\rho,d;\lambda)x_i,\quad\text{where $i=0,1$}\label{eq:PermanentConsumptionHypothesis}\\
	\end{equation*}
	\noindent $\tilde{k}(\rho,d;\lambda)$ is the loss-aversion-adjusted marginal propensity to consume. It depends on the discount rate $\rho$, a vector $d$ of demographic characteristics, and the loss-aversion index $\lambda$ \citep[see][pg.~14]{Friedman1957}\citep[see also][]{BowmanMinehartRabin1999}. The value function $v$ is specified separately by the gain-loss representation and is not determined by $d$.
\end{assumption}
\begin{assumption}[Production technology]
	Agents produce only for their own consumption. Thus, in concert with \citet[pg.~406]{DiamondDybvig1983}, we assume that there is no trade, and the price of total consumption in period $1$ is $\rho$.
\end{assumption}
\begin{assumption}
	Agents ratchet consumption.
\end{assumption}

\subsubsection{A Hall--CRRA benchmark for loss-aversion calibration}\label{subsubsec:ConsumptionRatchet}
\tab In concert with \citet[pg.~14]{Friedman1957}, let $k(\rho,d)$ be the marginal propensity to consume (``MPC"), where $\rho$ is a discount rate, and $d$ is a vector of demographic factors. Let $Y_{i,\omega},\quad i=0,1\quad \omega\in\Omega$ be income level(s), and $Y_p$ be permanent income.
\begin{defn}[Consumption ratchet]\label{def:ConsumptionRatchet}
	Let $\Omega$ be the sample space for states of nature, $\mathcal{F}$ be the $\sigma$-field of Borel measurable subsets of $\Omega$, and $P$ be a probability measure on $\Omega$. Let $C=\{C(t,\omega);\; 0\leq t<\infty,\;\omega\in\Omega\}$ be a real valued right continuous consumption process defined on the probability space $(\Omega,\mathcal{F},P)$ such that $C(0,\omega)=0$. Let $c(t,\omega)=C(t^+,\omega)-C(t^-,\omega)$. If for $\forall t,\quad c(t,\omega)\geq 0$, then $c(t,\omega)$ is called a \emph{consumption ratchet}.	\qed
\end{defn}
\begin{rem}
	This definition implies that the consumption process is a \emph{subordinator}, and that it may possibly have a unit root \citep[cf.][pg.~405]{KaratzaShreve1991}; see also \citet{Hall1978}.\qed
	
\end{rem}
\tab Suppose that the marginal propensity to consume is deterministic, income is subject to state dependent transitory shocks $\eta_{i\omega}$, and permanent income $Y_p$ is the same in each of the two time periods. The algebra connecting transitory income shocks, the loss-aversion-adjusted marginal propensity to consume, and the one-period ratchet increment is collected in \cref{subsec:ProofsMyopicLossAversion}. The result is that ratcheted consumption is nonnegative precisely when the marginal propensity to consume with loss aversion is at least as large as the marginal propensity to consume without it. This gives rise to the following.
\begin{thm}[Loss Aversion Based Marginal Propensity To Consume]\label{thm:LossAversionBasedMPC}
	Let $Y_p>0$, let $\tilde{k}(\rho,d;\lambda(\omega))$ be the marginal propensity to consume with state dependent loss aversion, and let $k(\rho,d)$ be the marginal propensity to consume without it. Then the one-period ratchet increment $\{c_{1\omega}\}_{\omega\in\Omega}$ is nonnegative if and only if $\tilde{k}(\rho,d;\lambda(\omega))\geq k(\rho,d)$.\qed
	
\end{thm}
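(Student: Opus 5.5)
The argument reduces the ratchet increment to a product with a sign-definite factor and then reads off the sign. I set up income in the two-period Friedman decomposition. In state $\omega$, income at date $i$ is $Y_{i\omega}=Y_p+\eta_{i\omega}$, where $Y_p>0$ is the same in both periods.

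The benchmark without loss aversion is consumption $k(\rho,d)Y_{i\omega}$. With loss aversion, \Cref{assum:PermanentConsumeFluctuationHypothesis} replaces the propensity by $\tilde{k}(\rho,d;\lambda(\omega))$ while keeping the same income path. Because the marginal propensity to consume is deterministic, the transitory shock $\eta_{i\omega}$ is absorbed identically by both rules. I will therefore take the reference consumption level to be the loss-neutral consumption $k(\rho,d)Y_p$ that the agent would ratchet from.

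The one-period ratchet increment of \Cref{def:ConsumptionRatchet} is the jump $c_{1\omega}=C(1^+,\omega)-C(1^-,\omega)$. I write it as the difference between loss-averse consumption and loss-neutral reference consumption at date $1$. The key algebraic step is to show that the shock terms cancel and leave
\begin{equation*}
c_{1\omega}=\bigl(\tilde{k}(\rho,d;\lambda(\omega))-k(\rho,d)\bigr)\,Y_p .
\end{equation*}
The cancellation comes from matching the shock terms and using permanent income being equal across periods. This is where I expect the main obstacle to lie. The paper's convention must be that transitory shocks do not enter the ratchet increment, which is Friedman's argument that consumption responds to normal rather than current receipts. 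Otherwise a residual term $(\tilde{k}-k)\eta_{1\omega}$ survives.

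If that happens, I will first impose Friedman's permanent-income hypothesis, as in the assumption list, so that consumption is a function of $Y_p$ only. The residual then vanishes. The alternative is to require $Y_p+\eta_{1\omega}>0$, that is, positive realized income, so that the factor multiplying $(\tilde{k}-k)$ is still strictly positive and the sign argument is unchanged.

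Given the factorization, both directions follow at once. Since $Y_p>0$ (or, in the fallback, the realized income factor is positive), $c_{1\omega}\ge 0$ if and only if $\tilde{k}(\rho,d;\lambda(\omega))-k(\rho,d)\ge 0$. Because the equivalence holds for each $\omega$ separately, it holds for the whole family $\{c_{1\omega}\}_{\omega\in\Omega}$ statewise.

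I will close with a remark: by \Cref{def:ConsumptionRatchet}, nonnegativity of every increment is exactly the subordinator property. The theorem therefore says that a ratchet is consistent with loss aversion precisely when loss aversion weakly raises the marginal propensity to consume.
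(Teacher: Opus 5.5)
Your central step fails as written. With $C_{1\omega}=\tilde{k}(\rho,d;\lambda(\omega))(Y_p+\eta_{1\omega})$ and reference $k(\rho,d)Y_p$, the increment is $(\tilde{k}-k)Y_p+\tilde{k}\,\eta_{1\omega}$, and nothing cancels. The two rules do not absorb the shock identically: $k\eta_{1\omega}\neq\tilde{k}\eta_{1\omega}$ unless $\tilde{k}=k$. Equality of $Y_p$ across periods also says nothing about $\eta_{1\omega}$. The leftover term $\tilde{k}\eta_{1\omega}$ has no sign, so a sufficiently negative shock gives $c_{1\omega}<0$ even when $\tilde{k}\ge k$, and the equivalence breaks. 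The residual you name, $(\tilde{k}-k)\eta_{1\omega}$, belongs to a different reference, $kY_{1\omega}$. Your second fallback silently switches to that reference and adds the hypothesis $Y_p+\eta_{1\omega}>0$, which is not in the statement.

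The missing idea is how the paper treats the shock: it is neither cancelled nor discarded but absorbed into $\tilde{k}$. Loss-neutral consumption carries the shock additively, $C_{1\omega}=k(\rho,d)Y_p+\eta_{1\omega}$. The permanent consumption fluctuation hypothesis asserts $C_{1\omega}=\tilde{k}(\rho,d;\lambda(\omega))Y_p$. Equating the two gives $\tilde{k}-k=\eta_{1\omega}/Y_p$. The ratchet increment is then the transitory consumption shock itself, $c_{1\omega}=\eta_{1\omega}=(\tilde{k}-k)Y_p$, and $Y_p>0$ finishes exactly as in your last step.

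Your first fallback ($C_{1\omega}=\tilde{k}Y_p$ against reference $kY_p$) reaches the same identity and gives a valid, essentially definitional proof, so make it the main argument. But it works because $\eta_{1\omega}$ lives inside $\tilde{k}(\rho,d;\lambda(\omega))$, which is what makes $\tilde{k}$ state dependent, not because Friedman's hypothesis lets you drop the shock. Your worry that transitory shocks must stay out of the increment reverses the paper's logic: there, the shock is the increment.
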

\begin{rem}
	This result is consistent with \citet{CarrollKimball1996}, who find that marginal propensity to consume is higher under uncertainty, albeit by a different approach.\qed
	
\end{rem}
\tab The purpose of the Hall comparison is calibration rather than a separate test of the permanent-income/random-walk hypothesis. The preceding ratchet result gives a state-dependent loss-aversion index, but the empirical and simulation exercises need a smooth benchmark around which stress-state spikes can occur. A Hall--CRRA consumption-growth factor provides that benchmark. If \eqref{eq:MarginalpropenConsumeWithLambdaYp} holds for separable loss aversion, then
\begin{align}
	C_{1\omega}&=\tilde{k}(\rho,d;\lambda(\omega))Y_p\\
	&=\tilde{k}(\rho,d)\lambda(\omega)Y_p\\
	\intertext{If $\tilde{k}(\rho,d)$ is Friedman`s MPC, then}
	C_{1\omega}&=\tilde{k}(\rho,d)\lambda(\omega)Y_p=\lambda(\omega)C_1\\
	\intertext{where $C_1$ is permanent consumption. When juxtaposed to \citet[pg.~975]{Hall1978} random walk hypothesis}
	C_{1\omega}&=C_{0\omega} + \epsilon_{1\omega}\label{eq:HallConsumpRandomWalk}
	\intertext{where, according to Hall, $\epsilon_{1\omega}$ ``summarizes the impact of all information that becomes available in period $t$ about the consumer`s lifetime well-being". It implies that}
	C_{1\omega}&=\lambda(\omega) (C_{0\omega}+\epsilon_{1\omega})\\
	&=\lambda(\omega)C_{0\omega} +\lambda(\omega)\epsilon_{1\omega}\\
	\intertext{However, Hall defined}
	\lambda^{\text{Hall}}&=\biggl(\frac{1+\delta}{1+r}\biggr)^{-\tfrac{1}{R(c)}}\\
	\intertext{where $\rho$ is a subjective discount rate; $r$ is real rate of interest $r\geq \delta$; and}
	R(c)&=-c\frac{u^{\prime\prime}(c)}{u^\prime(c)}\\
	\intertext{is Arrow-Pratt measure of relative risk aversion for a concave utility function $u(\cdot)$. Hall called it ``the elasticity of marginal utility". This comparison \textbf{suggests} an inverse relation between loss aversion and the Hall consumption-growth factor in low-growth states:}
	\lambda(\omega)&\propto\frac{1}{\lambda^{\text{Hall}}(\omega)}\label{eq:InverseLambdaHall}
\end{align}
\noindent The relationship in \eqref{eq:InverseLambdaHall} is a heuristic calibration motivated by the comparison between the Hall consumption-growth factor and the loss-aversion index. It provides a benchmark for calibrating $\lambda$ in terms of observables, rather than a rigorous proof of an inverse causal relationship. In the dynamic exposure section below, this benchmark becomes the median or locally fixed component of the stochastic loss-aversion process; the heavy-tailed component then represents rare fear states around that benchmark. We summarize this heuristic with the following
\begin{thm}[Consumption based loss aversion heuristic]\label{thm:ConsumptionBasedLossAver}
	Let $\lambda(\omega)$ be state dependent loss aversion in a paycheck to paycheck economy. Define
	\begin{align*}
		\lambda^{\text{Hall}}&=\biggl(\frac{1+\delta}{1+r}\biggr)^{-\tfrac{1}{R(c)}}\\
		\intertext{where $\rho$ is a subjective discount rate; $r$ is real rate of interest $r\geq \delta$; and}
		R(c)&=-c\frac{u^{\prime\prime}(c)}{u^\prime(c)}
	\end{align*}
	\noindent In particular, $\lambda^{\text{Hall}}$ is the growth rate of consumption in a Hall-type economy. Then, under the separability and proportionality conventions described above, the following heuristic calibration provides a benchmark for loss aversion:
	\begin{equation*}
		\lambda(\omega)\propto\frac{1}{\lambda^{\text{Hall}}(\omega)}.
	\end{equation*}
	\noindent That is, loss aversion is proportional to the inverse of the growth rate of consumption in this heuristic calibration.\qed
\end{thm}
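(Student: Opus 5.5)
The plan is to make the chain of identities displayed just before \Cref{thm:ConsumptionBasedLossAver} rigorous under explicitly stated conventions. Since the statement is a calibration heuristic, a proof amounts to showing that, under those conventions, a proportionality relation follows. The conventions are:
\begin{enumerate}
	\item[(a)] separable loss aversion, $\tilde{k}(\rho,d;\lambda(\omega))=\tilde{k}(\rho,d)\lambda(\omega)$, as in \eqref{eq:MarginalpropenConsumeWithLambdaYp};
	\item[(b)] $\tilde{k}(\rho,d)$ equals Friedman's MPC $k(\rho,d)$;
	\item[(c)] the date-$1$ consumption level is matched to Hall's Euler-equation benchmark for consumption growth.
\end{enumerate}

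\textbf{Step 1.} From (a) and (b), write $C_{1\omega}=\lambda(\omega)C_1$ with $C_1=k(\rho,d)Y_p$. Thus $\lambda(\omega)=C_{1\omega}/C_1$ is the ratio of realized to permanent consumption. \Cref{thm:LossAversionBasedMPC} guarantees $\lambda(\omega)\ge 1$ exactly in the nonnegative-ratchet states, which is where the calibration is meant to apply.

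\textbf{Step 2.} Derive the Hall factor from the CRRA Euler equation $u'(C_0)=\frac{1+r}{1+\delta}E[u'(C_1)]$. With $u'(c)=c^{-R}$ and constant $R=R(c)$, this gives certainty-equivalent gross consumption growth $\left(\frac{1+r}{1+\delta}\right)^{1/R}=\left(\frac{1+\delta}{1+r}\right)^{-1/R}=\lambda^{\text{Hall}}$. This confirms the interpretation of $\lambda^{\text{Hall}}$ as the consumption growth rate, so that $C^{\text{Hall}}_{1\omega}=\lambda^{\text{Hall}}(\omega)C_{0\omega}$ up to Hall's innovation $\epsilon_{1\omega}$ in \eqref{eq:HallConsumpRandomWalk}.

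\textbf{Step 3.} Impose the calibration condition that loss-averse realized consumption and the Hall benchmark describe the same target level. The benchmark's ratcheted level is what agents would consume absent loss aversion, so the loss-aversion multiplier must make up the shortfall between the reference level $C_1$ and Hall's growth path. Concretely, set $\lambda(\omega)\lambda^{\text{Hall}}(\omega)C_{0\omega}=\kappa\,C_1$ for a state-independent constant $\kappa$. In a paycheck-to-paycheck economy with zero wealth, $C_{0\omega}$ is pinned to permanent consumption up to a constant, so $\kappa C_1/C_{0\omega}$ is independent of $\omega$. Solving gives $\lambda(\omega)=\text{const}\cdot 1/\lambda^{\text{Hall}}(\omega)$, which is the claim. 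In low-growth states, $\lambda^{\text{Hall}}<1$ whenever $r<\delta$ locally, so $\lambda$ rises, consistent with fear states.

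\textbf{Main obstacle.} The main difficulty is Step 3. The displayed derivation in the paper multiplies Hall's equation by $\lambda(\omega)$, which by itself yields a direct rather than an inverse relation. The inverse relation therefore has to come from an explicit matching or normalization assumption of the kind I impose.

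I would first try to justify that assumption as a budget constraint: with zero wealth, total date-$1$ consumption is fixed by the paycheck, so any loss-aversion amplification must be offset by lower growth. If that justification is unconvincing, I would state the matching condition as part of the ``separability and proportionality conventions'' the theorem invokes. The result would then be presented honestly as a calibration identity rather than a derived law, in line with the paper's remark after \eqref{eq:InverseLambdaHall}.
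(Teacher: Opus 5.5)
\textbf{Step 3 is the gap: it restates the conclusion rather than deriving it, and the justifications you offer for it contradict your Steps 1--2.}

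Your Steps 1 and 2 track what the paper actually does. The paper writes $C_{1\omega}=\lambda(\omega)C_1$ and substitutes Hall's random walk for $C_1$, giving $C_{1\omega}=\lambda(\omega)C_{0\omega}+\lambda(\omega)\epsilon_{1\omega}$. It then recalls Hall's growth factor and only asserts that ``this comparison suggests'' the inverse relation, explicitly disclaiming a rigorous derivation. Nothing beyond that chain is offered.

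\emph{Circularity.} Since you take $C_{0\omega}/C_1$ to be state-independent, the condition $\lambda(\omega)\lambda^{\text{Hall}}(\omega)C_{0\omega}=\kappa C_1$ is literally $\lambda(\omega)\lambda^{\text{Hall}}(\omega)=\text{const}$. That is the conclusion itself, so Steps 1--2 do no work.

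\emph{Inconsistency with Steps 1--2.} Suppose, as you say, the Hall level is what agents consume absent loss aversion. Then Step 1 at $\lambda=1$ identifies that level with $C_1$, so $\lambda^{\text{Hall}}(\omega)C_{0\omega}=C_1$. Your matching condition then collapses to $\lambda(\omega)\equiv\kappa$, which destroys the state dependence the statement is about. The budget-constraint story fails as well. If date-$1$ consumption is pinned by the paycheck, Step 1 ties $\lambda(\omega)=C_{1\omega}/C_1$ to the paycheck, not to $1/\lambda^{\text{Hall}}(\omega)$. Moreover, $\lambda^{\text{Hall}}$ is fixed by $(r,\delta,R)$ through the Euler equation, so there is no growth margin that adjusts to offset loss-aversion amplification.

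\emph{Low-growth case.} Under the statement's hypothesis $r\geq\delta$ (with $R>0$), $\lambda^{\text{Hall}}\geq 1$. Your case $\lambda^{\text{Hall}}<1$ with $r<\delta$ therefore lies outside the theorem.

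Your diagnosis that the displayed algebra points to a direct rather than an inverse relation is correct, and the paper itself corroborates it. With $R=1/\sigma$ one has $\lambda^{\text{Hall}}=\bigl(\frac{1+r}{1+\delta}\bigr)^{\sigma}$. The paper's CRRA example $\lambda=a_0\bigl(\frac{1+r}{1+\delta}\bigr)^{\sigma}$ and the calibration \eqref{eq:LambdaHallHalfCauchyMultiplier} both build $\lambda$ in proportion to $\lambda^{\text{Hall}}$, not its inverse. So no argument from the stated separability and proportionality conventions can deliver the inverse proportionality.

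The only defensible version is your fallback. State $\lambda(\omega)\lambda^{\text{Hall}}(\omega)=\text{const}$ as an explicit calibration assumption that defines the benchmark. Drop the claim that it follows from a budget constraint or from Steps 1--2.
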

\tab \textbf{\sc{Example}}. Suppose that $u(c)=c^{(\sigma-1)/\sigma}$ is CRRA utility as parametrized by \citet[pg.~974]{Hall1978}. Then $R(c)=1/\sigma$. Thus, substitution in the expression for $\lambda(\omega)$ implies that
\begin{align}
	\lambda(\sigma;r\;\delta)&=a_0\biggl(\frac{1+r}{1+\delta}\biggr)^{\sigma}\\
	\intertext{where $a_0$ is a constant of proportionality, and $\sigma$ is the elasticity of intertemporal substitution (EIS). After taking logs on both sides and using a first order approximation, we get}
	\lambda(\sigma)&=a_0\exp\biggl\{\sigma\biggl(\frac{r-\delta}{1+\delta}\biggr)\biggr\}
\end{align}
\noindent which requires $r\geq \delta$. Thus, we have a feedback equation between loss aversion and Arrow-Pratt relative risk measure for CRRA utility. For reasonable values of $a_0$, this equation comports well with estimates of $\lambda\approx e\approx 2.78$ for $\sigma,\;\delta$; and $ r$ chosen accordingly \citep[see, e.g.,][]{AbdellaBleichrodtParas2007}. In any event, the foregoing analysis shows how consumption based loss aversion is propagated in our model.

\tab The Hall/CRRA expression in \eqref{eq:HallConsumpRandomWalk}--\eqref{eq:MarginalpropenConsumeWithLambdaYp} provides the smooth structural component of loss aversion implied by consumption-growth fundamentals. To allow for latent depositor heterogeneity and rare fear states, we use the calibration
\begin{equation}\label{eq:LambdaHallHalfCauchyMultiplier}
	\lambda_t
	=
	\lambda_t^{\mathrm{Hall}}\xi_t,
	\qquad
	\lambda_t^{\mathrm{Hall}}
	=
	a_0\biggl(\frac{1+r_t}{1+\delta_t}\biggr)^{\sigma_t},
	\qquad
	\xi_t\sim |Cauchy(0,1)|.
\end{equation}
\noindent This multiplicative specification preserves positivity and keeps the Hall/CRRA calibration as the median component. Since the standard half-Cauchy multiplier has median one, $\operatorname{median}(\lambda_t\mid\lambda_t^{\mathrm{Hall}})=\lambda_t^{\mathrm{Hall}}$. Thus, the half-Cauchy component does not replace the consumption-based theory; it adds the heavy right tail needed to represent rare, economically important loss-aversion states. These are precisely the states that matter for bank-run exposure, because the model is concerned with stress realizations rather than only average depositor behavior. For the continuous-time representation, one may either hold $\lambda_t^{\mathrm{Hall}}$ fixed over a short monitoring window and write it as $m_\lambda$, or set $m_\lambda$ equal to the relevant bank-segment Hall/CRRA benchmark at the calibration date.

\vspace{0.5\baselineskip}
\begin{center}
	\red{\textbf{[Figure 1 goes about here: Calibration of the loss-aversion index]}}
\end{center}
\vspace{0.5\baselineskip}

\subsection{The optimal consumption problem}\label{subsec:OptimalConsumptionProblem}
\tab The agent solves the following problem.\footnote{The borrowing and discounting feature of our model distinguish it from \citet{BowmanMinehartRabin1999}, who assumed otherwise.}
\begin{align}
	V^*(c_0,c_1)&=\max_{c_0,c_1}E[v(c_{0\omega},c_{1\omega};\Omega)]\\
	\text{s.t.}\;c_{0\omega}&=x_{0\omega}\\
	c_{1\omega} &\geq x_{1\omega} \label{eq:CreditCardConstraint}
\end{align}
\begin{rem}
	Examination of \citet[pg.~191]{Diamond2007} exposition of the Diamond-Dybvig (``DD") model shows that the ``states of nature" in their model are two investor types--each with different liquidity horizons--and that consumption equals assets over each liquidity preference horizon. In our model, the ``type" space is the range space for loss aversion. Moreover, our agents want to liquidate their paycheck in the next pay period. Arguably, those aspects of the DD formulation are functionally equivalent to our consumption based asset pricing model but for value functions and probability weighting.\qed
	
\end{rem}
\tab The KKT derivation for this constrained problem is collected in \cref{subsec:ProofsOptimalConsumption}. It shows that a sufficient stress-state trigger occurs when the weighted marginal value of a loss state dominates the weighted marginal value of a gain state.
\begin{rem}
	\citet[pp.~579-580]{LoewensteinPrelec1992} introduced a discounted value function, with \emph{hyperbolic discounting} parametrized as
	\begin{align}
		V(x_1,t_1;x_2,t_2;\dotsc,x_n,t_n)&= \Sigma^n_{i=1}v(x_i)\phi(t_i)\\
		\intertext{where the discount function}
		\phi(t)&=\biggl(1+\alpha t\biggr)^{-\tfrac{\beta}{\alpha}}\\
		\intertext{where $\alpha$ is a shape parameter measuring departure from constant discounting, and}
		\lim_{\alpha\downarrow 0} \phi(t)&=e^{-\beta t}
	\end{align}
	\noindent In order not to overload the model we used a constant ``exponential" discount factor $\rho$.\qed
	
\end{rem}
\begin{rem}
	The constraint in \eqref{eq:CreditCardConstraint} implies that not only is our agent living from paycheck to paycheck but that positive shocks to consumption that exceed changes in her income induce loss aversion. Thus, she incurs credit card debt\footnote{See \citet[pp.~58-59]{AngeletosEtAl2001} for hyperbolic discounters' use of credit cards as a buffer to liquidity shocks, and \citet{GrossSouleles2002} for a taxonomy of empirical evidence on the use of credit cards for consumption smoothing, \emph{inter alia}.} or overdrafts that exacerbate the bank's cash reserves. In particular, our agent's uncertainty about shocks to consumption relative to changes in income is resolved in the next pay period.\qed
	
\end{rem}
\begin{rem}
	Implicit in our model is the existence of a state vector that includes aspects of the production sector excluded here \citep[see, e.g.,][pg.~269]{Breeden1979}.\qed
	
\end{rem}
\tab We summarize the results in this section with the following
\begin{lem}[Bank Run Triggered By Endogenous Loss Aversion]\label{lem:BankRunTriggerEndogenLossAver}
	Let $\tfrac{\pi^+_\omega}{\pi^-_\omega}$ be the representative agent's surmise of the odds of a positive to negative shock to next period's income $x_{1\omega}$. Let $\lambda(\omega)$ be state dependent loss aversion, and let $v_g$ and $v_\ell$ be the gain branch and positive loss-magnitude branch of the value function, respectively. A sufficient stress-state condition for a bank run in a paycheck to paycheck economy is
	\begin{equation}
		\lambda(\omega^\prime)>\frac{\pi^+_{\omega^\prime}v^\prime_g(c_{1\omega^\prime})}{\pi^-_{\omega^\prime}v^\prime_\ell(-c_{1\omega^\prime})}
	\end{equation}
	\noindent where $\omega^\prime\in\Omega$ is a state of uncertainty in which the agent solves her optimal consumption problem.\qed
	
\end{lem}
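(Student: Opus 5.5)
The plan is to run the argument through the Karush--Kuhn--Tucker conditions of the constrained consumption problem above. Fix a state $\omega^\prime$ and write the agent's objective in that state as the probability-weighted sum of the gain and loss branches of \eqref{eq:ValueFuncFlucWealth}:
\begin{equation*}
	E[v(c_{0\omega^\prime},c_{1\omega^\prime})]=v(c_{0\omega^\prime})+\rho\Bigl(\pi^+_{\omega^\prime}v_g(c_{1\omega^\prime})-\pi^-_{\omega^\prime}\lambda(\omega^\prime)v_\ell(-c_{1\omega^\prime})\Bigr).
\end{equation*}
The date-$0$ term is pinned down by the equality constraint $c_{0\omega}=x_{0\omega}$, so only $c_{1\omega^\prime}$ is a free choice. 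Attach a multiplier $\mu\geq0$ to the credit-card constraint \eqref{eq:CreditCardConstraint}, written as $c_{1\omega^\prime}-x_{1\omega^\prime}\geq0$, and form the Lagrangian $L=E[v]+\mu(c_{1\omega^\prime}-x_{1\omega^\prime})$.

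Differentiating with respect to $c_{1\omega^\prime}$, and using \eqref{eq:ProspectMarginalValue} so that the loss branch contributes $+\lambda v_\ell^\prime$ to marginal value, gives the stationarity condition
\begin{equation*}
	\rho\bigl(\pi^+_{\omega^\prime}v_g^\prime(c_{1\omega^\prime})-\pi^-_{\omega^\prime}\lambda(\omega^\prime)v_\ell^\prime(-c_{1\omega^\prime})\bigr)+\mu=0,
\end{equation*}
together with complementary slackness $\mu(c_{1\omega^\prime}-x_{1\omega^\prime})=0$.

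If the constraint is slack, then $\mu=0$ and the two weighted marginal values are equal. If instead the inequality in the lemma holds, then
\begin{equation*}
	\pi^-_{\omega^\prime}\lambda(\omega^\prime)v^\prime_\ell(-c_{1\omega^\prime})>\pi^+_{\omega^\prime}v^\prime_g(c_{1\omega^\prime}),
\end{equation*}
and the stationarity condition forces $\mu>0$. Complementary slackness then makes the constraint bind, $c_{1\omega^\prime}=x_{1\omega^\prime}$, so the agent is at the corner where the weighted marginal value of avoiding the loss state dominates that of the gain state. Rearranging this inequality, with $\pi^-_{\omega^\prime}>0$ and $v_\ell^\prime>0$ from the maintained assumptions, gives exactly the displayed bound on $\lambda(\omega^\prime)$.

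The economic step links the binding constraint to a run. Under the paycheck assumptions (zero wealth, direct deposit into demand-deposit accounts, reserves as the only buffer), a binding constraint means the agent wants to liquidate her entire paycheck immediately and fund any shortfall through overdraft or credit. This is the Diamond--Dybvig-style early-withdrawal demand. Aggregating over the representative clientele, withdrawal demand in state $\omega^\prime$ equals the full deposit base plus overdraft draws. That exceeds the reserves and liquid assets fixed by the balance-sheet assumption, which is a run on convertibility. Since the argument only shows that the inequality implies a binding constraint, the condition is \emph{sufficient}, not necessary, as the lemma states.

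I expect the main obstacle to be the sign convention and the direction of the inequality. The loss branch enters the objective with a minus sign, but in \eqref{eq:ProspectMarginalValue} its marginal value is positive, so the stationarity condition must be set up carefully to show that strict dominance forces $\mu>0$ rather than $\mu<0$. I will handle this by differentiating $-\lambda v_\ell(-c)$ explicitly, working at the kink only almost everywhere as in the remark after \eqref{eq:ProspectMarginalValue}, and treating the borderline equality case as non-triggering. A secondary gap is the move from ``constraint binds'' to ``run occurs''. I will state this as an explicit interpretive step resting on the reserve assumption, not as a derived equilibrium.
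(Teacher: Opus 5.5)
There is a genuine gap, and it sits in the stationarity condition your whole argument rests on. Differentiating the loss branch gives $\frac{d}{dc}\bigl[-\lambda v_\ell(-c)\bigr]=+\lambda v_\ell^\prime(-c)$. This is exactly what \eqref{eq:ProspectMarginalValue} records, and what you announce in the same sentence. The correct condition is therefore
\begin{equation*}
	\rho\bigl(\pi^+_{\omega'}v_g^\prime(c_{1\omega'})+\pi^-_{\omega'}\lambda(\omega')v_\ell^\prime(-c_{1\omega'})\bigr)+\mu=0,
\end{equation*}
with a plus sign between the weighted gain and loss terms.

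Both terms are strictly positive, so with $\mu\geq 0$ this equation has no solution. The objective is strictly increasing in $c_{1\omega'}$, while \eqref{eq:CreditCardConstraint} bounds $c_{1\omega'}$ only from below, so there is no maximizer and no KKT point. If you flip the multiplier sign, as the paper's Lagrangian does, you get $\mu_{1\omega}=\rho v^\prime(c_{1\omega})>0$ in every state. Then the constraint binds whether or not the lemma's inequality holds. Under either convention the comparison of $\pi^-_{\omega'}\lambda v_\ell^\prime$ with $\pi^+_{\omega'}v_g^\prime$ never enters the first-order conditions. The chain \emph{inequality $\Rightarrow$ $\mu>0$ $\Rightarrow$ binding constraint} is therefore unavailable. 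The obstacle you flag at the end is not a bookkeeping issue: doing the differentiation explicitly is precisely what breaks the argument.

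The paper does not attempt this deduction. After the statewise condition $\rho v^\prime(c_{1\omega})=\mu_{1\omega}$, it restricts to a loss state through the indicator $\mathbb{I}_{\{c_{1\omega'}<0\}}$. It then takes dominance of the weighted marginal loss over the weighted marginal gain as the stress-state trigger itself, and rearranges that comparison, cancelling the indicator in the loss state, to get the bound on $\lambda(\omega')$. The KKT setup supplies context; it is not the source of the inequality. To repair your proposal, present the inequality the same way, as the defining comparison of weighted marginal values in a loss state, and drop the claim that complementary slackness produces it.

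Your economic step also needs revisiting. According to the paper's remark following \eqref{eq:CreditCardConstraint}, overdraft and credit-card draws on reserves correspond to $c_{1\omega}>x_{1\omega}$, which is a slack constraint. So \emph{binding constraint $\Rightarrow$ fund the shortfall by overdraft} reverses the paper's interpretation.
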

\section{Bank Run Exposure State Space driven by loss aversion}\label{sec:BankRunTopology}
\tab Let $\underline{\Omega}^{\text{admiss}}$ be the set of admissible stress states for which the inequality in \Cref{lem:BankRunTriggerEndogenLossAver} holds, and
\begin{equation}\label{eq:LambdaCritical}
	\lambda_c(\omega^\prime) = \frac{\pi^+_{\omega^\prime}v^\prime_g(c_{1\omega^\prime})}{\pi^-_{\omega^\prime}v^\prime_\ell(-c_{1\omega^\prime})}
\end{equation}
\noindent \citet[pg.~121]{KobberlingWakker2005} used the ratio of left and right derivatives at the reference point $c=0$ to define the loss aversion parameter with a L'Hospital type rule\footnote{See \citet[pp.~292-293]{Apostol1967} for details on this rule.} as
\begin{align}
	\lambda &= \lim_{c\downarrow 0} \frac{v_g^\prime(c)}{v_\ell^\prime(-c)}
	\intertext{In the context of \eqref{eq:LambdaCritical}  this implies that}
	\lambda(c_{1\omega^\prime}) &= \frac{v^\prime_g(c_{1\omega^\prime})}{v^\prime_\ell(-c_{1\omega^\prime})}
\end{align}
However, according to \Cref{thm:LossAversionBasedMPC} if loss aversion is separable in $\tilde{k}(\rho,d;\lambda(\omega))$, i.e. $\tilde{k}(\rho,d;\lambda(\omega)) = \tilde{k}(\rho,\;d)\lambda(\omega)$, and $\lambda \gg 1$ \citep[see, e.g.,][pg.~1662]{AbdellaBleichrodtParas2007}, and $\tilde{k}$ is a loss aversion adjusted extension--then
\begin{equation}
		\frac{k}{\tilde{k}}\leq \lambda(c_{1\omega^\prime}) < \frac{1}{\tilde{k}}
\end{equation}
So that
\begin{align}
	\lambda_c(\omega^\prime) &= \frac{\pi^+_{\omega^\prime}}{\pi^-_{\omega^\prime}}\lambda(c_{1\omega^\prime})\label{eq:LambdaLogOdds}\\
	\intertext{implies that}
	\lambda_c(\omega^\prime) &\in \frac{\pi^+_{\omega^\prime}}{\pi^-_{\omega^\prime}}\biggl[ \frac{k}{\tilde{k}},\;\frac{1}{\tilde{k}}\biggr)
	\intertext{and $\lambda > \lambda_c$ implies}
	\lambda &\notin \frac{\pi^+_{\omega^\prime}}{\pi^-_{\omega^\prime}}\biggl[ \frac{k}{\tilde{k}},\;\frac{1}{\tilde{k}}\biggr)\label{eq:TabooLimitsLossAver}\\
	\intertext{Let}
	\underline{\Omega}^{\text{taboo}} &= \biggl\{\omega^\prime\vert\; \lambda \notin \frac{\pi^+_{\omega^\prime}}{\pi^-_{\omega^\prime}}\biggl[ \frac{k}{\tilde{k}},\;\frac{1}{\tilde{k}}\biggr)\biggr\}\label{eq:OmegaTaboo}
\end{align}
\noindent So that we have the following
\begin{lem}[Admissible Bank Run States Induced By Loss Aversion]\label{lem:AdmissBankRunStates}
	Let $\pi^+_\omega,\;\pi^-_\omega$ be the agent's surmise of the probability for gains or losses, respectively, in state $\omega\in\Omega$, and $v_g,\;v_\ell$ be the value functions for gains and losses, and $\lambda$ be the loss aversion index. Then the set of admissible states for bank runs is given by
	\begin{equation}\label{eq:OmegaAdmiss}
		\underline{\Omega}^{\text{admiss}}=\biggl\{\omega^\prime\in\Omega:\;\lambda(\omega^\prime)> \frac{\pi^+_{\omega^\prime}v^\prime_g(c_{1\omega^\prime})}{\pi^-_{\omega^\prime}v^\prime_\ell(-c_{1\omega^\prime})} \biggr\}
	\end{equation}
	\qed
\end{lem}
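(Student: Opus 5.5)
The plan is to obtain \Cref{lem:AdmissBankRunStates} as the set-theoretic consequence of \Cref{lem:BankRunTriggerEndogenLossAver}. Membership in $\underline{\Omega}^{\text{admiss}}$ is defined by the sufficient stress-state inequality of that lemma. So the task is to show that the collection of states satisfying it is exactly the displayed set \eqref{eq:OmegaAdmiss}. I also need to check that this collection is a well-defined (measurable) subset of $\Omega$.

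\textbf{Step 1: one inequality per state.} Fix $\omega^\prime\in\Omega$ at which the agent solves the constrained problem with \eqref{eq:CreditCardConstraint}. Reading off the KKT conditions of \cref{subsec:ProofsOptimalConsumption}, the run trigger is that the weighted marginal value of the loss branch exceeds that of the gain branch:
\begin{equation*}
\pi^-_{\omega^\prime}\lambda(\omega^\prime)v^\prime_\ell(-c_{1\omega^\prime})>\pi^+_{\omega^\prime}v^\prime_g(c_{1\omega^\prime}).
\end{equation*}
When this holds, the multiplier on \eqref{eq:CreditCardConstraint} is strictly positive. The agent is then pushed to the liquidity boundary, which means withdrawal or overdraft.

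\textbf{Step 2: divide through.} Since $v_\ell^\prime>0$ by the maintained monotonicity, dividing is legitimate provided $\pi^-_{\omega^\prime}>0$. This gives $\lambda(\omega^\prime)>\lambda_c(\omega^\prime)$ with $\lambda_c$ as in \eqref{eq:LambdaCritical}. States with $\pi^-_{\omega^\prime}=0$ have no loss branch, so the trigger cannot fire there; I will adopt the convention $\lambda_c=+\infty$ for them. With that convention the equivalence holds state by state, and collecting states gives \eqref{eq:OmegaAdmiss}.

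\textbf{Step 3: measurability.} I will note that $\omega\mapsto\lambda(\omega)-\lambda_c(\omega)$ is $\mathcal F$-measurable. This follows because $\lambda$, $\pi^\pm$ and $c_{1\omega}$ are measurable and $v_g^\prime$, $v_\ell^\prime$ are continuous. The set in \eqref{eq:OmegaAdmiss} is then the preimage of $(0,\infty)$, hence an event in $\mathcal F$.

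\textbf{Step 4: link to the taboo description.} As a consistency check, I will use the Köbberling--Wakker identity and \eqref{eq:LambdaLogOdds} to rewrite the condition as the taboo condition \eqref{eq:TabooLimitsLossAver}. This would show $\underline{\Omega}^{\text{admiss}}\subseteq\underline{\Omega}^{\text{taboo}}$ under the separability of $\tilde k$ from \Cref{thm:LossAversionBasedMPC}.

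The main obstacle is Step 1. \Cref{lem:BankRunTriggerEndogenLossAver} gives only a \emph{sufficient} condition, so the defining property of admissibility must be taken to be that inequality itself, not an ``if and only if'' characterisation of runs. I would state this definitional reading explicitly. I would also verify that the KKT sign argument does not rely on strict concavity of $v_\ell$ at the kink $c_{1\omega^\prime}=0$, where the derivative is understood almost everywhere.
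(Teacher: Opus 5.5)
Your Steps 1--2, together with the definitional reading in your last paragraph, follow the paper's route. The paper gives no separate proof. It defines $\underline{\Omega}^{\text{admiss}}$ as the set of states where the inequality of \Cref{lem:BankRunTriggerEndogenLossAver} holds. That inequality comes from the KKT derivation by cancelling $\mathbb{I}_{\{c_{1\omega^\prime}<0\}}$ in \eqref{eq:StressStateTrigger} and dividing by $\pi^-_{\omega^\prime}v^\prime_\ell(-c_{1\omega^\prime})>0$.

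Two corrections are needed there. First, the cancellation is valid only on loss states $c_{1\omega^\prime}<0$. Whether the loss branch is active depends on the sign of $c_{1\omega^\prime}$, not on whether $\pi^-_{\omega^\prime}=0$, and by the paper's convention $v^\prime_\ell(-c_{1\omega^\prime})$ already presupposes $c_{1\omega^\prime}<0$. So state your equivalence on $\{c_{1\omega^\prime}<0\}$. Second, your gloss that the inequality makes the multiplier on \eqref{eq:CreditCardConstraint} strictly positive singles out nothing. The paper's statewise condition is $\rho v^\prime(c_{1\omega})=\mu_{1\omega}$, and $v^\prime>0$ on both branches, so $\mu_{1\omega}>0$ in every state. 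The trigger is a definition, not a consequence of a multiplier sign, so drop that sentence. Step 3 is a harmless addition the paper does not make.

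Step 4, however, would fail. From \eqref{eq:LambdaLogOdds} and $\tfrac{k}{\tilde{k}}\le\lambda(c_{1\omega^\prime})<\tfrac{1}{\tilde{k}}$ you get only $a\le\lambda_c(\omega^\prime)<b$, where $a=\tfrac{\pi^+_{\omega^\prime}}{\pi^-_{\omega^\prime}}\tfrac{k}{\tilde{k}}$ and $b=\tfrac{\pi^+_{\omega^\prime}}{\pi^-_{\omega^\prime}}\tfrac{1}{\tilde{k}}$. Then $\lambda>\lambda_c(\omega^\prime)$ yields only $\lambda>a$. Any $\lambda\in(\lambda_c(\omega^\prime),b)$ is admissible but lies in $[a,b)$, so it is not taboo. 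Hence $\underline{\Omega}^{\text{admiss}}\subseteq\underline{\Omega}^{\text{taboo}}$ fails in general. The implication displayed at \eqref{eq:TabooLimitsLossAver} does not follow from the bracket on $\lambda_c$ and should not be relied on.

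What is true, identifying the K\"{o}bberling--Wakker $\lambda$ with $\lambda(\omega^\prime)$ as you do, is this: $\lambda\ge b$ forces admissibility, while $\lambda<a$ excludes it. So $\underline{\Omega}^{\text{admiss}}\cap\underline{\Omega}^{\text{taboo}}=\{\omega^\prime:\lambda\ge b(\omega^\prime)\}$, which is generally a proper subset of $\underline{\Omega}^{\text{admiss}}$. This is also why the intersection in \eqref{eq:OmegaRun} is not redundant; your proposed inclusion would collapse $\underline{\Omega}^{\text{run}}$ to $\underline{\Omega}^{\text{admiss}}$. Since the lemma does not need Step 4, omit it.
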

\tab Undeniably, $\underline{\Omega}^{\text{admiss}}\subseteq\Omega\;\text{and}\;\underline{\Omega}^{\text{admiss}}\neq\emptyset$.
\tab From the foregoing we get the
\begin{lem}[Taboo states for loss aversion]\label{lem:TabooStatesLossAver}
	Let $v_g$ and $v_\ell$ be the gain branch and positive loss-magnitude branch, respectively, pasted at the reference point $x=0$ to get the value function in \eqref{eq:ValueFuncFlucWealth}. Let $\pi^+_\omega,\;\pi^-_\omega$ be the agent's surmise of the probability for gains or losses, respectively, in state $\omega\in\Omega$. Define
	\begin{align}
		\lambda &= \lim_{c\downarrow 0} \frac{v_g^\prime(c)}{v_\ell^\prime(-c)}\label{eq:KobberlingWakkerLambda}\\
		\intertext{Then a bank run is triggered in the following \emph{taboo states} for loss aversion}
		\underline{\Omega}^{\text{taboo}} &= \biggl\{\omega^\prime\vert\; \lambda \notin \frac{\pi^+_{\omega^\prime}}{\pi^-_{\omega^\prime}}\biggl[ \frac{k}{\tilde{k}},\;\frac{1}{\tilde{k}}\biggr)\biggr\}
	\end{align}
	\qed
\end{lem}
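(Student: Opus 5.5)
The plan is to show $\underline{\Omega}^{\text{taboo}}\subseteq\underline{\Omega}^{\text{admiss}}$, so that the trigger in \Cref{lem:BankRunTriggerEndogenLossAver} applies on every taboo state. The argument chains three earlier results: the Köbberling--Wakker definition \eqref{eq:KobberlingWakkerLambda}, the separable form of \Cref{thm:LossAversionBasedMPC}, and \Cref{lem:AdmissBankRunStates}.

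\textbf{Step 1: express the threshold in odds form.} The Köbberling--Wakker ratio \eqref{eq:KobberlingWakkerLambda} is evaluated at $c_{1\omega^\prime}$ instead of in the limit, giving $v^\prime_g(c_{1\omega^\prime})/v^\prime_\ell(-c_{1\omega^\prime})=\lambda(c_{1\omega^\prime})$. Substituting this into \eqref{eq:LambdaCritical} yields the odds representation \eqref{eq:LambdaLogOdds},
\[
\lambda_c(\omega^\prime)=\frac{\pi^+_{\omega^\prime}}{\pi^-_{\omega^\prime}}\,\lambda(c_{1\omega^\prime}).
\]

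\textbf{Step 2: bound the threshold.} Assume separability, $\tilde{k}(\rho,d;\lambda)=\tilde{k}(\rho,d)\lambda$. By \Cref{thm:LossAversionBasedMPC}, a nonnegative ratchet increment requires $\tilde{k}\lambda\ge k$, which gives the lower bound $\lambda(c_{1\omega^\prime})\ge k/\tilde{k}$. For the upper bound, note that an MPC with loss aversion must stay below one, since there is no saving but also no borrowing beyond the constraint \eqref{eq:CreditCardConstraint}. This gives $\tilde{k}\lambda<1$. Combining the two,
\[
\lambda_c(\omega^\prime)\in\frac{\pi^+_{\omega^\prime}}{\pi^-_{\omega^\prime}}\Bigl[\frac{k}{\tilde{k}},\frac{1}{\tilde{k}}\Bigr).
\]

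\textbf{Step 3: derive admissibility from the taboo condition.} Take $\omega^\prime\in\underline{\Omega}^{\text{taboo}}$, so $\lambda$ lies outside this interval. There are two cases.

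\emph{Upper case.} If $\lambda\ge \frac{\pi^+}{\pi^-}\frac{1}{\tilde{k}}$, then $\lambda>\lambda_c(\omega^\prime)$ immediately, because $\lambda_c(\omega^\prime)$ lies strictly below the right endpoint. Hence $\omega^\prime\in\underline{\Omega}^{\text{admiss}}$ by \Cref{lem:AdmissBankRunStates}, and \Cref{lem:BankRunTriggerEndogenLossAver} then delivers the run.

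\emph{Lower case.} If $\lambda<\frac{\pi^+}{\pi^-}\frac{k}{\tilde{k}}$, then $\lambda$ sits below the interval, and the run condition $\lambda>\lambda_c$ cannot hold. I would argue this case is excluded, not that it produces a run. In the admissible bad states the agent assigns $\pi^-$ large enough that $\frac{\pi^+}{\pi^-}\frac{k}{\tilde{k}}<1$. Empirically $\lambda\gg1$ \citep{AbdellaBleichrodtParas2007}, so the lower branch is empty in those states. The taboo set should then be read as the upper exclusion, which is the case that matters.

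\textbf{Main obstacle.} The hardest step is this lower-branch exclusion. Taken literally, ``$\notin$'' allows $\lambda$ below the interval, where no run is triggered. I would first try to rule the branch out using the lower bound from \Cref{thm:LossAversionBasedMPC} together with the high bad-state-probability assumption. If that fails, I would state the lemma for the upper exclusion region only, which is what the derivation \eqref{eq:TabooLimitsLossAver} actually supports.

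A secondary gap is the upper bound $\tilde{k}\lambda<1$ in Step 2. It must be justified from the budget constraint of the paycheck-to-paycheck agent, and the paper does not state it explicitly.
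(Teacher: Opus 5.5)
Your Steps 1 and 2 are exactly the derivation the paper gives before the lemma. That derivation evaluates the Köbberling--Wakker ratio at $c_{1\omega^\prime}$, obtains the odds form \eqref{eq:LambdaLogOdds}, and uses the bracket $k/\tilde{k}\le\lambda(c_{1\omega^\prime})<1/\tilde{k}$.

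You part ways with the paper at the last step. The paper argues in the opposite direction: it asserts that $\lambda>\lambda_c$ implies $\lambda\notin\frac{\pi^+_{\omega^\prime}}{\pi^-_{\omega^\prime}}[k/\tilde{k},1/\tilde{k})$, i.e.\ admissible $\Rightarrow$ taboo. It then defines $\underline{\Omega}^{\text{run}}=\underline{\Omega}^{\text{admiss}}\cap\underline{\Omega}^{\text{taboo}}$ in \eqref{eq:OmegaRun}, which makes the state-space theorem independent of the lemma's claim. Only afterwards does it remark that for $\pi^+/\pi^-$ small and $\lambda\gg1$ a run is triggered. It never shows taboo $\Rightarrow$ run.

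Your Step 3 proves the direction the lemma actually asserts. Your diagnosis of the lower branch is also right: there the bracket forces $\lambda<\lambda_c$, so the statement read literally fails. The extra hypothesis you need is exactly the paper's post-lemma small-odds remark.

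Your reading also repairs the paper's own implication, which is invalid as written. If $\lambda_c(\omega^\prime)$ is interior to the interval, any $\lambda$ with $\lambda_c(\omega^\prime)<\lambda<\frac{\pi^+_{\omega^\prime}}{\pi^-_{\omega^\prime}}\frac{1}{\tilde{k}}$ exceeds $\lambda_c$ but lies in the interval. The only coherent content is therefore the upper exclusion, which is what your upper case establishes.

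Two points in your proposal still need fixing.

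\emph{The upper bound.} Your justification of $\tilde{k}\lambda<1$ does not work. \eqref{eq:CreditCardConstraint} is a lower bound $c_{1\omega}\ge x_{1\omega}$, and the paper explicitly lets the agent finance consumption above income with credit-card debt or overdrafts. Nothing in the budget caps the loss-adjusted MPC below one. The paper simply asserts the strict upper bound, citing $\lambda\gg1$. Since your upper case rests entirely on it, carry it as a maintained hypothesis rather than claiming to derive it.

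\emph{The lower bound.} The left endpoint comes from \Cref{thm:LossAversionBasedMPC} applied to a nonnegative ratchet increment. But the trigger in \Cref{lem:BankRunTriggerEndogenLossAver} is derived for loss states with $c_{1\omega^\prime}<0$ (see the indicator in \eqref{eq:StressStateTrigger}). There the same theorem gives the reverse inequality $\tilde{k}\lambda(\omega^\prime)<k$. This does not damage your upper case, which never uses the lower bound. It does mean the exclusion of the lower branch cannot be derived from \Cref{thm:LossAversionBasedMPC}. State it as an explicit hypothesis, e.g.\ $\lambda\ge\frac{\pi^+_{\omega^\prime}}{\pi^-_{\omega^\prime}}\frac{k}{\tilde{k}}$ for all states considered. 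Do not deduce it ``in the admissible bad states,'' which is circular when admissibility is what you are trying to prove.
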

\tab Thus, the set of states that support a bank run is given by
\begin{equation}\label{eq:OmegaRun}
	\underline{\Omega}^{\text{run}} = \underline{\Omega}^{\text{admiss}}\cap\underline{\Omega}^{\text{taboo}}
\end{equation}
\noindent Inasmuch as $\lambda \gg 1$, \Cref{lem:TabooStatesLossAver} implies that for $\tfrac{\pi^+}{\pi^-}$ sufficiently small a bank run would be triggered. That is, an agent`s subjective probability of loss $(\pi^-)$ is much larger than her subjective probability of gain $(\pi^+)$. So that the set of taboo states $\underline{\Omega}^{\text{taboo}}$ is well defined. Thus, the loss aversion parameter $\lambda(\omega^\prime)$ is an endogenous random variable that depends on the subjective odds of gains to losses by and through $\lambda_c$ in \eqref{eq:LambdaCriticalValue}. We formalize this with the following
\begin{lem}[Endogenous loss aversion]\label{lem:EndogenousLossAversion}
	Loss aversion $\lambda$ is an endogenous random variable that depends on the subjective probability gains to losses in consumption $(c)$ arising from fluctuations in paycheck $(x)$.
\end{lem}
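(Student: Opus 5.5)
To prove \Cref{lem:EndogenousLossAversion}, I will show that the loss-aversion level that matters for a run is a measurable function of the subjective odds $\pi^+_{\omega}/\pi^-_{\omega}$. That makes it a random variable on $(\Omega,\mathcal{F},P)$. It is endogenous because it is determined inside the agent's optimization rather than fixed exogenously.

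The first step starts from the KKT characterization behind \Cref{lem:BankRunTriggerEndogenLossAver}. At an interior optimum the weighted marginal values balance, so the agent's effective loss aversion in state $\omega'$ is pinned down by
\begin{equation*}
\lambda_c(\omega')=\frac{\pi^+_{\omega'}v^\prime_g(c_{1\omega'})}{\pi^-_{\omega'}v^\prime_\ell(-c_{1\omega'})}.
\end{equation*}
Using the K\"obberling--Wakker ratio \eqref{eq:KobberlingWakkerLambda}, evaluated at $c_{1\omega'}$, I will rewrite this as \eqref{eq:LambdaLogOdds}, namely $\lambda_c(\omega')=\tfrac{\pi^+_{\omega'}}{\pi^-_{\omega'}}\lambda(c_{1\omega'})$.

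The second step shows that $c_{1\omega'}$ is itself a function of the paycheck fluctuation $x_{1\omega'}$. This follows from \Cref{assum:PermanentConsumeFluctuationHypothesis}, which gives $c_1=\tilde{k}(\rho,d;\lambda)x_1$, together with the credit-card constraint \eqref{eq:CreditCardConstraint}. Composing the two steps, the map $\omega'\mapsto\lambda_c(\omega')$ is a composition of Borel functions of $\pi^\pm_{\omega'}$ and $x_{1\omega'}$. Because $v_g^\prime$ and $v_\ell^\prime$ are continuous and positive, the map is measurable, and it is therefore a random variable.

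The third step establishes genuine dependence on the odds. By \Cref{thm:LossAversionBasedMPC} and the separability $\tilde{k}(\rho,d;\lambda)=\tilde{k}(\rho,d)\lambda$, the bracket $\lambda(c_{1\omega'})\in[k/\tilde{k},1/\tilde{k})$ does not degenerate. So $\lambda_c$ is strictly increasing in $\pi^+/\pi^-$ with this nondegenerate bracket as multiplier. Membership of $\omega'$ in $\underline{\Omega}^{\text{taboo}}$, and hence in $\underline{\Omega}^{\text{run}}$ from \eqref{eq:OmegaRun}, changes as the odds vary. The loss-aversion level relevant for triggering a run is therefore a nonconstant function of the subjective odds, which is the claim of the lemma.

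The main obstacle is circularity. $\tilde{k}$ depends on $\lambda$, so $c_{1\omega'}$ depends on $\lambda$, and the defining relation becomes a fixed-point equation $\lambda=\tfrac{\pi^+}{\pi^-}\lambda(\tilde{k}(\lambda)x_1)$ rather than an explicit formula. I will first handle it by treating $\tilde{k}$ as locally fixed over the pay period, as the paper does in its Hall calibration. If that is insufficient, I will invoke monotonicity of the right-hand side together with the bracket bounds to obtain a unique solution via the intermediate value theorem. Its measurable dependence on $(\pi^+/\pi^-,x_1)$ would then follow from the implicit function theorem.
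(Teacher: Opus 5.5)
Your ingredients are the ones the paper uses in the paragraph preceding the lemma: the threshold $\lambda_c(\omega^\prime)$, the rewrite \eqref{eq:LambdaLogOdds}, the bracket $[k/\tilde{k},1/\tilde{k})$, and $\underline{\Omega}^{\text{taboo}}$. But the bridge from $\lambda_c$ to $\lambda$ in your first step fails.

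The KKT derivation in \cref{subsec:ProofsOptimalConsumption} yields only the statewise conditions $\rho v^\prime(c_{1\omega})=\mu_{1\omega}$. The paper notes that the weighted first-order condition does not force any statewise marginal term to vanish. The comparison of $\pi^-_{\omega^\prime}\lambda v^\prime_\ell(-c_{1\omega^\prime})$ with $\pi^+_{\omega^\prime}v^\prime_g(c_{1\omega^\prime})$ in \eqref{eq:StressStateTrigger} enters only as a strict inequality, i.e.\ a sufficient trigger. So nothing in the optimization pins the agent's loss aversion to the boundary value $\lambda_c(\omega^\prime)$.

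The identification is in fact self-defeating. At $\lambda=\lambda_c(\omega^\prime)$ the strict inequality defining $\underline{\Omega}^{\text{admiss}}$ fails. Because $\lambda(c_{1\omega^\prime})\in[k/\tilde{k},1/\tilde{k})$, you also get $\lambda\in\tfrac{\pi^+_{\omega^\prime}}{\pi^-_{\omega^\prime}}[k/\tilde{k},1/\tilde{k})$, so $\omega^\prime\notin\underline{\Omega}^{\text{taboo}}$. Under your first step, $\underline{\Omega}^{\text{run}}$ is therefore empty for every value of the odds. This contradicts your third step's claim that membership in it changes as the odds vary.

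The fixed-point equation in your last paragraph is an artifact of the same identification. Even on its own terms, the intermediate-value and implicit-function sketch would need continuity and monotonicity of $c\mapsto v_g^\prime(c)/v_\ell^\prime(-c)$ along $c=\tilde{k}\lambda x_1$, plus a nondegenerate derivative. You have established none of these.

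The paper's justification keeps $\lambda$ as the primitive and lets the odds enter through the constraint, not through an equation. A loss-aversion value supports a run in state $\omega^\prime$ only if it lies in $\mathcal{O}(\omega^\prime)$ of \eqref{eq:MainStateSpaceNeighborhood}. Both its lower cutoff $\lambda_c(\omega^\prime)$ and its excluded interval $\tfrac{\pi^+_{\omega^\prime}}{\pi^-_{\omega^\prime}}[k/\tilde{k},1/\tilde{k})$ scale with the odds. Hence, for fixed $\lambda\gg 1$, lowering $\pi^+_{\omega^\prime}/\pi^-_{\omega^\prime}$ moves $\omega^\prime$ into $\underline{\Omega}^{\text{run}}$. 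That is the sense of ``by and through $\lambda_c$''.

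If you drop the balance claim and argue this way, no fixed point is needed. Your measurability reasoning then applies to $\lambda_c$ and the interval endpoints, and hence to the event $\{\omega^\prime\in\underline{\Omega}^{\text{run}}\}$. Strict monotonicity of $\lambda_c$ in the odds needs only $\lambda(c_{1\omega^\prime})>0$. That follows from $v_g^\prime,v_\ell^\prime>0$, not from nondegeneracy of the bracket.
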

\tab These conditions on $\lambda$ imply that the direction of the slope of $\beta$ in \eqref{eq:BetaSlopeDirection} can turn downward for interior solutions corresponding to changes in consumption in states $\omega^\prime\in\underline{\Omega}^{\text{run}}$. Moreover, the support of $\underline{\Omega}^{\text{run}}$ rests on loss aversion induced by uncertainty over bad consumption shocks with comparatively large probabilities $\pi^-_{\omega^\prime}$ of occurrence. By contrast, in the DD-model bank runs are triggered if agents forecast of the fraction $f$ of depositors who withdraw at the beginning of period $1$ exceed a critical value determined by locally stable equilibrium \citep[see][pg.~197]{Diamond2007}.

\tab The preceding restrictions define the Bank Run Exposure State Space. Let $\Lambda\subset\mathbb{R}_+$ be a set of admissible loss-aversion indexes and let $\lambda:\underline{\Omega}^{\text{run}}\rightarrow\Lambda$ map a run state into its implied loss-aversion index. Equip $\Lambda$ with the topology generated by neighborhoods of the form
\begin{equation}\label{eq:MainStateSpaceNeighborhood}
	\mathcal{O}(\omega^\prime)=
	\biggl\{\lambda>0:\lambda>\lambda_c(\omega^\prime)\biggr\}
	\setminus
	\frac{\pi^+_{\omega^\prime}}{\pi^-_{\omega^\prime}}\biggl[ \frac{k}{\tilde{k}},\;\frac{1}{\tilde{k}}\biggr),
\end{equation}
\noindent where the set difference is interpreted in the usual sense. Then the induced structure on bank-run exposure states is the pullback topology
\begin{equation}\label{eq:MainPullbackStateSpace}
	\mathcal{T}^{\text{run}}=\{\lambda^{-1}(O):O\in\mathcal{T}_\Lambda\}.
\end{equation}
\noindent The formal basis, measurability, and random-field details are collected in the Internet Appendix. From the foregoing analyses we have the following
\begin{thm}[Bank Run Exposure State Space]\label{thm:BankRunTopology}
	The Bank Run Exposure State Space $(\underline{\Omega}^{\text{run}},\mathcal{T}^{\text{run}})$ is induced by loss aversion and uncertainty over comparatively large probabilities of bad shocks $\pi^-_{\omega^\prime}\gg \pi^+_{\omega^\prime}$ to fluctuations in income $(x)$ and consumption $(c)$.
\end{thm}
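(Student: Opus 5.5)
The plan is to treat the theorem as two claims. First, $(\underline{\Omega}^{\text{run}},\mathcal{T}^{\text{run}})$ is a genuine topological space. Second, the open sets of that space are generated exactly by the two defining restrictions: the loss-aversion condition and the condition $\pi^-_{\omega^\prime}\gg\pi^+_{\omega^\prime}$.

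For the first claim, I would start by checking that the sets $\mathcal{O}(\omega^\prime)$ in \eqref{eq:MainStateSpaceNeighborhood} form a subbasis on $\Lambda$. Each $\mathcal{O}(\omega^\prime)$ is an open half-line $(\lambda_c(\omega^\prime),\infty)$ with a half-open interval removed. I would close the family under finite intersections and arbitrary unions and take the result as $\mathcal{T}_\Lambda$. The pullback $\mathcal{T}^{\text{run}}=\{\lambda^{-1}(O):O\in\mathcal{T}_\Lambda\}$ is then automatically a topology on $\underline{\Omega}^{\text{run}}$, because preimages commute with unions and intersections. This uses only the definition of $\lambda:\underline{\Omega}^{\text{run}}\to\Lambda$. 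Finally, I would verify that $\underline{\Omega}^{\text{run}}$ is nonempty, so the space is not trivial; the next step supplies this.

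For the second claim, I would use \Cref{lem:AdmissBankRunStates} and \Cref{lem:TabooStatesLossAver} to write $\underline{\Omega}^{\text{run}}$ as in \eqref{eq:OmegaRun}. Substituting \eqref{eq:LambdaLogOdds} gives $\lambda_c(\omega^\prime)=\tfrac{\pi^+_{\omega^\prime}}{\pi^-_{\omega^\prime}}\lambda(c_{1\omega^\prime})$. The bound $\tfrac{k}{\tilde{k}}\le\lambda(c_{1\omega^\prime})<\tfrac{1}{\tilde{k}}$ then places $\lambda_c(\omega^\prime)$ inside the scaled interval $\tfrac{\pi^+}{\pi^-}[k/\tilde{k},1/\tilde{k})$. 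Fix the empirically motivated value $\lambda\gg1$, as in \citet{AbdellaBleichrodtParas2007}. As $\pi^+_{\omega^\prime}/\pi^-_{\omega^\prime}\downarrow0$, the right endpoint $\tfrac{\pi^+}{\pi^-}\tfrac{1}{\tilde{k}}$ eventually falls below $\lambda$. Once that happens, $\lambda$ both exceeds $\lambda_c(\omega^\prime)$ and lies outside the taboo interval. Hence every state whose odds ratio is below the threshold $\tilde{k}\lambda$ belongs to $\underline{\Omega}^{\text{run}}$, which gives nonemptiness.

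For the converse, I would take a state with odds ratio large enough that $\lambda$ falls inside the scaled interval. Such a state is excluded from $\underline{\Omega}^{\text{taboo}}$ and therefore from $\underline{\Omega}^{\text{run}}$. Together the two directions show that membership in the space, and hence its open sets via the neighborhoods $\mathcal{O}(\omega^\prime)$, is governed by loss aversion and by the bad-state odds. That is the sense of ``induced by'' in the theorem. \Cref{lem:EndogenousLossAversion} lets me treat $\lambda(\omega^\prime)$ as a state-dependent random variable, so the map $\lambda$ is well defined.

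The main obstacle is that ``induced by'' and ``$\pi^-\gg\pi^+$'' are qualitative. I would make them precise in two steps. First, I would fix the quantitative threshold $\pi^+/\pi^-<\tilde{k}\lambda$. Second, I would show monotonicity: lowering the odds ratio enlarges the neighborhoods $\mathcal{O}(\omega^\prime)$, so the topology is generated by the bad-state odds together with $\lambda$. The delicate point is that $\lambda(c_{1\omega^\prime})$ itself varies with the state. Bounding it uniformly using \Cref{thm:LossAversionBasedMPC}, which gives $\tilde{k}\ge k$, is what I would try first, and that is where the argument is most likely to need extra assumptions.
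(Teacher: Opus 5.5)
Your proposal is correct and follows the paper's route. The paper's proof forms $\underline{\Omega}^{\text{run}}=\underline{\Omega}^{\text{admiss}}\cap\underline{\Omega}^{\text{taboo}}$ via \Cref{lem:AdmissBankRunStates} and \Cref{lem:TabooStatesLossAver}, takes the pullback topology, and cites the two lemmas (plus the small-odds remark before the theorem) for nonemptiness. Your explicit small-odds state is if anything tighter, since it puts one state in both sets at once, whereas nonemptiness of each set separately would not by itself give a nonempty intersection.

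Your converse and monotonicity steps go beyond what the paper proves. For nonemptiness you need only the upper bound $\lambda(c_{1\omega^\prime})<1/\tilde{k}$, so you need not lean on \Cref{thm:LossAversionBasedMPC}. That theorem is an equivalence tied to nonnegative ratchet increments, not an unconditional $\tilde{k}\geq k$, and it sits awkwardly with the loss states $c_{1\omega^\prime}<0$ of \Cref{lem:BankRunTriggerEndogenLossAver}.
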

\noindent The proof is given in \cref{subsec:ProofsStateSpace}.
\begin{rem}
	\citet[pp.~1305-1307]{GoldsteinPauzner2005} also characterized a set of states that support bank runs based on bank clientele private signals about the state of the economy.\qed
	
\end{rem}
\tab Thus, the Bank Run Exposure State Space induced by clientele loss aversion can be indexed by time, depositor segment, geography, or employer group. When the index set is time, the associated loss-aversion collection becomes a stochastic loss-aversion process. This extension motivates the stopped-process probability estimates and exposure representation that follow.

\section{Probability estimates for suspension of convertibility}\label{sec:ProbEstForStoppedBankRun}
\tab The preceding Bank Run Exposure State Space identifies the stress states in which loss aversion can turn ordinary liquidity demand into bank-run exposure. This section translates that state-space description into a probability statement about suspension of convertibility. Following \citet{DiamondDybvig1983} and \citet{Diamond2007}, suspension of convertibility is interpreted as the point at which the bank must stop converting deposits into cash because sequential withdrawal demand exhausts available cash reserves.

\tab Let $\{\mathcal{G}_n\}_{n\geq 0}$ be the filtration generated by the sequential withdrawal order and associated income shocks. If the bank has $N$ customers with heterogeneous loss aversion, define the stopping time
\begin{equation}
	\tau_n(\omega;\lambda_1,\dotsc,\lambda_n) =
	\begin{cases}
		\min\{n>0:\; CR(\omega;\lambda_1,\dotsc,\lambda_n)<\Sigma^n_{i=1}x_{i\omega},\; \omega\in\underline{\Omega}^{\text{run}}\}&\text{$ n\leq N$}\\
		\infty&\text{otherwise.}
	\end{cases}
\end{equation}
\noindent Thus, suspension occurs when cumulative withdrawal demand first exceeds cash reserves in a run state. If stress withdrawals are Bernoulli events with subjective bad-state probability $\pi^-_{i\omega}$, then the number $K$ of non-stress transactions observed before the $\tau_n(\omega)$-th stress withdrawal has a negative-binomial representation. Writing
\begin{equation}
	\underline{\pi}^-_{\omega}=\inf_{1\leq i\leq N}\{\pi^-_{1\omega},\dotsc,\pi^-_{N\omega}\},
\end{equation}
\noindent gives the following probability statement.
\begin{prop}[Computation of bank run probabilities]\label{prop:BankRunProbCompute}
	If stress withdrawals are Bernoulli events and $\tau_n(\omega)$ is the stopping time at which cumulative withdrawal demand breaches cash reserves, then the probability of observing $k$ non-stress transactions before the stopped run threshold is
	\begin{equation}\label{eq:MainBankRunProb}
		\text{Pr}(K=k)=\binom{\tau_n(\omega)+k-1}{k}(\underline{\pi}^-_{\omega})^{\tau_n(\omega)}(1-\underline{\pi}^-_{\omega})^k.
	\end{equation}
\end{prop}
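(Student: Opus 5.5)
The plan is to reduce the statement to the standard negative-binomial law for Bernoulli trials, with $\tau_n(\omega)$ playing the role of the fixed number of required ``successes.'' The first step is to make the probabilistic structure explicit.

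The transaction sequence is indexed by the filtration $\{\mathcal{G}_n\}$. Each transaction is classified as a stress withdrawal (``success'') or a non-stress transaction (``failure''). In a run state $\omega\in\underline{\Omega}^{\text{run}}$, from \Cref{thm:BankRunTopology}, stress withdrawals occur with subjective bad-state probability $\pi^-_{i\omega}$.

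To obtain a common success probability, I will work with the lower envelope $\underline{\pi}^-_{\omega}$. Concretely, I will interpret the proposition as a statement about a homogeneous benchmark in which every depositor is assigned $\underline{\pi}^-_{\omega}$, so the trials are i.i.d.\ Bernoulli$(\underline{\pi}^-_{\omega})$. This is the least-stressed clientele consistent with the run state. The heterogeneous case can then be compared to this benchmark by a coupling argument: each $\pi^-_{i\omega}\geq\underline{\pi}^-_{\omega}$, so each stress indicator stochastically dominates a Bernoulli$(\underline{\pi}^-_{\omega})$ variable.

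Second, I treat $r:=\tau_n(\omega)$ as the number of stress withdrawals required to breach cash reserves. The event $\{K=k\}$ means that the $(r+k)$-th transaction is the $r$-th stress withdrawal. Equivalently, among the first $r+k-1$ transactions exactly $r-1$ are stress withdrawals and $k$ are non-stress, and the last transaction is a stress withdrawal. By independence, the probability of any particular arrangement is $(\underline{\pi}^-_{\omega})^{r}(1-\underline{\pi}^-_{\omega})^k$. The number of admissible arrangements is $\binom{r+k-1}{k}$, which gives \eqref{eq:MainBankRunProb}. As a sanity check, I will verify that these probabilities sum to one over $k\geq 0$ using the negative binomial series $\sum_k\binom{r+k-1}{k}q^k=(1-q)^{-r}$.

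The main obstacle is that $\tau_n(\omega)$ is itself a random stopping time, defined through the reserve condition $CR<\sum x_{i\omega}$, rather than a fixed integer. The formula is therefore really a conditional law. I will state and prove it conditionally on $\tau_n(\omega)=r$, with $r\leq N$ finite. The point to justify is that the reserve threshold depends only on the withdrawal amounts and not on the interleaving pattern of non-stress transactions, so conditioning on $r$ leaves the Bernoulli sequence i.i.d.

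If that independence cannot be fully justified, I will fall back on an explicit assumption that withdrawal sizes are independent of the arrival order. Under that assumption the number of stress withdrawals needed, $r$, is $\mathcal{G}$-measurable with respect to the size process only. The optional stopping structure is then only used to ensure $r<\infty$ on $\underline{\Omega}^{\text{run}}$.
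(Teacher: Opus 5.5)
Your proposal is correct and follows essentially the paper's route. You read $\tau_n(\omega)$ as the number of Bernoulli stress withdrawals needed to breach reserves, replace the heterogeneous $\pi^-_{i\omega}$ by the common lower envelope $\underline{\pi}^-_{\omega}$, and count arrangements of the first $\tau_n(\omega)+k-1$ trials to obtain the negative-binomial law. Two of your steps make explicit what the paper leaves implicit. You condition on $\tau_n(\omega)=r$, assuming withdrawal sizes are independent of the stress labels, and you note that the formula is exact only for the homogeneous benchmark; otherwise your coupling gives only the stochastic-dominance bound $K^{\text{het}}\le K^{\text{bench}}$.
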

\tab For large stopping times and small $1-\underline{\pi}^-_{\omega}$, the corresponding Poisson approximation is
\begin{cor}[Poisson probability of bank run]\label{cor:BankRunPoissonDist}
	\begin{equation}\label{eq:MainPoissonRunProb}
		\text{Pr}(K=k)=\frac{\mu^k}{k!}e^{-\mu},\qquad
		\mu=\frac{\tau_n(\omega)(1-\underline{\pi}^-_{\omega})}{\underline{\pi}^-_{\omega}}.
	\end{equation}
\end{cor}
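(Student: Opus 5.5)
The plan is to obtain the corollary as the classical Poisson limit of the negative binomial law in \Cref{prop:BankRunProbCompute}. Write $r=\tau_n(\omega)$ for the stopped run threshold and $p=\underline{\pi}^-_{\omega}$, so that \eqref{eq:MainBankRunProb} reads $\text{Pr}(K=k)=\binom{r+k-1}{k}p^{r}(1-p)^k$. I will hold $k$ fixed and let $r\to\infty$ and $1-p\to 0$ jointly, keeping the mean of the negative binomial,
\[
\mu=\frac{r(1-p)}{p},
\]
fixed or convergent. This is exactly the $\mu$ in \eqref{eq:MainPoissonRunProb}, so the statement is read as an asymptotic approximation in the regime of large stopping times and near-certain stress withdrawals.

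The computation splits into three factors.

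\textbf{Step 1 (binomial coefficient).} Since $k$ is fixed, $\binom{r+k-1}{k}=\frac{(r+k-1)(r+k-2)\cdots r}{k!}=\frac{r^k}{k!}\prod_{j=0}^{k-1}\bigl(1+\tfrac{j}{r}\bigr)$. Hence $\binom{r+k-1}{k}\sim r^k/k!$.

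\textbf{Step 2 (loss-side factor).} Here $(1-p)^k=\bigl(\mu p/r\bigr)^k$. Combined with Step 1, this gives $\binom{r+k-1}{k}(1-p)^k\sim \mu^k p^k/k!\to\mu^k/k!$, because $p\to 1$.

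\textbf{Step 3 (stress factor).} The remaining factor is $p^{r}=\exp\{r\log p\}$. Set $q=1-p$. The relation $\mu=rq/p$ gives $q=\mu/(r+\mu)$, so $p=r/(r+\mu)$ exactly. Therefore
\[
p^{r}=\bigl(1+\mu/r\bigr)^{-r}\to e^{-\mu}.
\]
Multiplying the three limits yields $\mu^k e^{-\mu}/k!$, which is the claimed formula.

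The main obstacle is not the algebra but justifying the limit when $r=\tau_n(\omega)$ is itself a random stopping time rather than a deterministic parameter. My first move is to condition on $\mathcal{G}_{\tau_n}$, or equivalently on $\{\tau_n=r\}$, with $\omega\in\underline{\Omega}^{\text{run}}$. On that event \Cref{prop:BankRunProbCompute} supplies the conditional negative binomial law, and the argument above applies pointwise. I also need the bound $\tau_n\leq N$ to be compatible with $r\to\infty$, so I will let $N\to\infty$ along a sequence of banks whose reserves $CR$ scale with $N$.

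If uniformity in $k$ is required, I will give a total-variation bound rather than a pointwise limit. The route is to write $K$ as a sum of $r$ i.i.d.\ geometric variables, each with parameter $p$ and mean $q/p$, and apply a Le Cam-type inequality. This yields an error of order $r q^2$, which vanishes when $q=\mu/(r+\mu)$ with $\mu$ fixed.
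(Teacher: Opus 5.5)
Your proposal is correct and follows the paper's route: the paper obtains \Cref{cor:BankRunPoissonDist} as the Poisson approximation to the negative binomial law of \Cref{prop:BankRunProbCompute} for large $\tau_n(\omega)$ and small $1-\underline{\pi}^-_{\omega}$, with $\mu$ equal to the negative binomial mean $\tau_n(\omega)(1-\underline{\pi}^-_{\omega})/\underline{\pi}^-_{\omega}$, and your three-factor limit is exactly that computation. Your extra steps (conditioning on $\{\tau_n=r\}$, letting $N$ and reserves grow so that $\tau_n\leq N$ does not block $r\to\infty$, and the Le Cam-type total-variation bound of order $rq^2$) add rigor the paper's text does not record, and they are sound.
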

\tab The main implication is direct: a bank run is most likely when loss-averse depositors attach sufficiently high subjective probability to a bad income state and the stopped withdrawal order reaches the bank's reserve constraint. The full negative-binomial derivation and related stopped-process notation are collected in the Internet Appendix.

\section{Bank run exposure process}\label{sec:BankRunExposeProcess}
\tab The fundamental difference between the present model and \citet{Tobin1958} is that the risk-return tradeoff includes a loss-aversion parameter. Because that parameter is itself random, bank run exposure must be treated as a stochastic object rather than as a fixed slope coefficient. Let the aggregate paycheck-to-paycheck economy be summarized by
\begin{align*}
	\bar{x}_N(t,\omega) &=\frac{1}{N}\Sigma_{i=1}^N x_i(t,\omega),\\
	\sigma_{x,N}(t) &= \left(\frac{1}{N}\Sigma^N_{i=1}\sigma^2_{i,x}(t)\right)^\half.
\end{align*}
\noindent The corresponding linear exposure representation is
\begin{equation}\label{eq:MainLinearIncomeRiskModel}
	\bar{x}_N(t,\omega)
	=
	\beta_0 + \beta_{x,N}(\bar{\lambda}(t))\sigma_{x,N}(t)+\bar{\epsilon}_N(t,\omega).
\end{equation}
\noindent Market clearing implies that per-capita cash-reserve fluctuations and per-capita income fluctuations coincide, so $\bar{x}_N(t,\omega)$ and $\sigma_{x,N}(t)$ can be read as the bank's cash-reserve exposure and liquidity risk. This representation also suggests a practical panel regression for banks with granular account data. Let $b$ index banks, $s$ depositor segments, $g$ geographies or employer groups, and $t$ days or pay-cycle dates. A reduced-form implementation is
\begin{equation}\label{eq:PracticalBankRunRegression}
	\Delta CR_{bsgt}
	=
	\alpha+\beta_1\sigma_{sgt}
	+\beta_2\bigl(\sigma_{sgt}\times LA_{sgt}\bigr)
	+\Gamma^\prime Z_{bsgt}+\mu_b+\delta_s+\phi_g+\tau_t+\epsilon_{bsgt}.
\end{equation}
\noindent Here $\Delta CR_{bsgt}$ is the bank's change in cash reserves, net deposit outflow, abnormal withdrawal demand, or reserve shortfall for a bank-segment-pay-cycle cell. The liquidity-risk regressor $\sigma_{sgt}$ can be measured using account-balance volatility, deposit-flow volatility, or withdrawal-intensity volatility. The behavioral term $LA_{sgt}$ proxies loss aversion or paycheck stress with overdrafts, minimum-balance breaches, failed payments, balance drawdowns, call-center stress, or survey measures. The control vector $Z_{bsgt}$ should include income-shock variables such as payroll calendars, paycheck amounts, direct-deposit delays, employer concentration, layoffs, unemployment claims, and benefit-payment dates, together with bank-side controls such as reserve ratios, unused liquidity lines, deposit rates, branch and ATM access, digital-transfer limits, and relevant fixed effects. The coefficient $\beta_2$ is the practical bank-run-exposure statistic: it measures the incremental cash-reserve sensitivity generated when liquidity risk is amplified by loss aversion.

\vspace{0.5\baselineskip}
\begin{center}
	\red{\textbf{[Figure 2 goes about here: Simulated bank-run exposure regressions]}}
\end{center}
\vspace{0.5\baselineskip}

\tab The most important diagnostic is obtained by comparing the behavioral reserve-need estimate with the estimate from a liquidity-only regression that omits latent loss aversion. In the baseline simulation, the standard deviation of fitted reserve need is $0.238$ in the behavioral specification but only $0.030$ in the liquidity-only specification. Thus the omitted-loss-aversion model understates reserve-need volatility by about $87.3$ percent. \Cref{fig:OmittedLatentLossAversionReserveNeed} shows the mechanism: the liquidity-only fit makes reserve demand look smooth even though the latent behavioral model produces sharp reserve-need spikes in stress states.

\vspace{0.5\baselineskip}
\begin{center}
	\red{\textbf{[Figure 3 goes about here: Reserve-need stability when latent loss aversion is omitted]}}
\end{center}
\vspace{0.5\baselineskip}

\tab Under the maintained martingale-difference and previsibility conditions, the least-squares exposure estimate admits the following representation.\footnote{Under correct specification, regression residuals form a martingale difference sequence with respect to the information filtration \citep{Hamilton1994}. The regression coefficient arises as the linear projection coefficient that characterizes the predictable component in the Doob decomposition \citep{Durrett2019} of returns \citep{HansenRichard1987,Cochrane2005}.}
\begin{thm}[Bank run exposure martingale representation]\label{thm:BankRunExposureMartingale}
	Bank run exposure can be represented as
	\begin{equation}\label{eq:MainBetaItoProcess}
		\hat{\beta}_{x,N}(\bar{\lambda}(t),\omega)
		=
		E[\hat{\beta}_{x,N}(\bar{\lambda}(t),\omega)|\mathcal{F}_s]
		+\int^T_s b(u)dB(u,\omega),
	\end{equation}
	\noindent where
	\begin{equation}
		b(u)=\frac{\sigma_{x,N}(u)}{\int^T_0\sigma^2_{x,N}(v)dv}
	\end{equation}
	\noindent is a liquidity-risk factor.
\end{thm}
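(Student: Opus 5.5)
The plan is to pass from the discrete linear exposure representation \eqref{eq:MainLinearIncomeRiskModel} to a continuous-time least-squares estimator, and then apply the martingale representation theorem on a Brownian filtration. In the continuous-time analogue of \eqref{eq:MainLinearIncomeRiskModel}, the cash-reserve fluctuation is driven by $d\bar{x}_N(u)=\beta_{x,N}(\bar{\lambda}(u))\sigma^2_{x,N}(u)\,du+\sigma_{x,N}(u)\,dB(u)$ on $[0,T]$. Here $B$ is a standard Brownian motion generating $\{\mathcal{F}_t\}$, and the residuals $\bar{\epsilon}_N$ are the martingale-difference increments $\sigma_{x,N}\,dB$, as in the maintained assumption preceding the theorem. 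I would assume $\sigma_{x,N}$ is deterministic or previsible, square integrable, and satisfies $\int_0^T\sigma^2_{x,N}(v)\,dv>0$.

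The continuous-time least-squares (projection) estimator of the exposure coefficient is
\[
\hat{\beta}_{x,N}=\frac{\int_0^T\sigma_{x,N}(u)\,d\bar{x}_N(u)}{\int_0^T\sigma^2_{x,N}(v)\,dv}.
\]
Substituting the dynamics gives a decomposition into a predictable part and a stochastic integral:
\[
\hat{\beta}_{x,N}=\frac{\int_0^T\beta_{x,N}(\bar{\lambda}(u))\sigma^2_{x,N}(u)\,du}{\int_0^T\sigma^2_{x,N}(v)\,dv}+\int_0^T b(u)\,dB(u),
\]
with $b(u)$ exactly as in the statement. This is the Doob-type decomposition mentioned in the footnote: a $\sigma^2_{x,N}$-weighted average of the loss-aversion-driven exposure plus noise.

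Next I condition on $\mathcal{F}_s$ and split the stochastic integral at $s$. The part $\int_0^s b\,dB$ is $\mathcal{F}_s$-measurable. Because $b$ is previsible and square integrable, $E[\int_s^T b\,dB\mid\mathcal{F}_s]=0$ by the Itô isometry and the martingale property of Itô integrals. This yields $\hat{\beta}_{x,N}=E[\hat{\beta}_{x,N}\mid\mathcal{F}_s]+\int_s^T b(u)\,dB(u)$, which is \eqref{eq:MainBetaItoProcess}. Equivalently, $M_t=E[\hat\beta_{x,N}\mid\mathcal{F}_t]$ is a square-integrable martingale, and its Itô/Clark representation has integrand $b$.

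The main obstacle is the drift term. Since $\bar{\lambda}(u)$ is random (the half-Cauchy multiplier in \eqref{eq:LambdaHallHalfCauchyMultiplier}), the weighted average $\int_s^T\beta_{x,N}(\bar\lambda(u))\sigma^2_{x,N}\,du$ is not $\mathcal{F}_s$-measurable. Its conditional fluctuation would then add a further integrand to $b$. In addition, heavy tails threaten the square integrability needed for the representation theorem.

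I would handle this in one of two ways, trying the first before the second.

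\begin{enumerate}
\item Under the convention stated earlier that $\lambda^{\mathrm{Hall}}$ is held fixed as $m_\lambda$ over the monitoring window, treat $\bar\lambda$ as $\mathcal{F}_s$-measurable (locally fixed) on $[s,T]$. Make the statement conditional on that, and impose a truncation or bounded $\beta_{x,N}$ so that $\hat\beta_{x,N}\in L^2$.
\item Alternatively, absorb the extra integrand into a generalized $b$ via Clark--Ocone. The stated $b$ is then the liquidity-risk component.
\end{enumerate}
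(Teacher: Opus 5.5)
Your route matches the argument the paper sketches: write the least-squares exposure estimate as a predictable $\sigma^2_{x,N}$-weighted part plus a stochastic integral, then condition on $\mathcal{F}_s$ so that $\int_s^T b\,dB$ drops out. However, the substitution step that produces $b(u)$ fails as written.

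With your dynamics $d\bar{x}_N(u)=\beta_{x,N}(\bar\lambda(u))\sigma^2_{x,N}(u)\,du+\sigma_{x,N}(u)\,dB(u)$ and your estimator $\int_0^T\sigma_{x,N}\,d\bar x_N/\int_0^T\sigma^2_{x,N}$, you actually get
\[
\hat\beta_{x,N}=\frac{\int_0^T\beta_{x,N}(\bar\lambda(u))\sigma^3_{x,N}(u)\,du}{\int_0^T\sigma^2_{x,N}(v)\,dv}+\int_0^T\frac{\sigma^2_{x,N}(u)}{\int_0^T\sigma^2_{x,N}(v)\,dv}\,dB(u).
\]
The integrand is therefore $\sigma^2_{x,N}(u)/\int_0^T\sigma^2_{x,N}$, not the theorem's $b(u)$. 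Since identifying $b(u)$ is the whole content of the statement, this is not cosmetic; the dynamics and the estimator must be matched. There are two ways to do this:
\begin{enumerate}
\item Use the direct continuous analogue of \eqref{eq:MainLinearIncomeRiskModel} with homoskedastic martingale-difference residuals and $\beta_0$ suppressed (as the form of $b$ requires): $d\bar x_N(u)=\beta_{x,N}(\bar\lambda(u))\sigma_{x,N}(u)\,du+dB(u)$. Then $\int_0^T\sigma_{x,N}\,d\bar x_N/\int_0^T\sigma^2_{x,N}$ yields exactly your displayed decomposition with $b(u)=\sigma_{x,N}(u)/\int_0^T\sigma^2_{x,N}(v)\,dv$.
\item Keep your heteroskedastic dynamics but use the weighted (maximum-likelihood) estimator $\int_0^T d\bar x_N/\int_0^T\sigma^2_{x,N}$, which gives the same decomposition.
\end{enumerate}

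Two further points need tightening. First, $\sigma_{x,N}$ must be deterministic (or $\mathcal{F}_0$-measurable), not merely previsible. The integrand $b(u)$ contains the normalizer $\int_0^T\sigma^2_{x,N}(v)\,dv$, which for random $\sigma_{x,N}$ depends on the path after $u$. Then $\int_s^T b\,dB$ is anticipating, and $E[\int_s^T b\,dB\mid\mathcal{F}_s]=0$ no longer follows. The paper writes $\sigma_{x,N}(t)$ and $b(u)$ as deterministic functions of time, which is what you need.

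Second, only your option 1 proves the statement as written. Freezing $\bar\lambda$ on $[s,T]$ is consistent with the paper's convention of holding $m_\lambda$ fixed over the monitoring window. Take $\{\mathcal{F}_t\}$ large enough to contain the path of $\bar\lambda$, with $B$ still a Brownian motion for it, and impose your boundedness or truncation on $\beta_{x,N}$. The drift is then $\mathcal{F}_s$-measurable, and $\hat\beta_{x,N}=E[\hat\beta_{x,N}\mid\mathcal{F}_s]+\int_s^T b\,dB$ follows by direct computation; no representation theorem is needed.

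Option 2 (Clark--Ocone) changes the integrand. If $\bar\lambda$ is driven by a separate $B_\lambda$, it also adds a $dB_\lambda$ integral. So it establishes a different representation, not the one stated.
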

\tab Consequently, bank run exposure inherits the volatility of liquidity demand and the behavioral volatility of loss aversion. If exposure is separable in loss aversion, the maintained specification is that $\bar{\lambda}(t,\omega)$ has a half-Cauchy marginal law \citep{CharlesCadogan2018} with median parameter $m_\lambda\simeq 2.25$ \citep{TverKahn1992}. The parameter $m_\lambda$ is the continuous-time counterpart of the Hall--CRRA benchmark in \eqref{eq:LambdaHallHalfCauchyMultiplier}: in the simulations it is held fixed at the behavioral median, while in a bank-level implementation it can be set to the segment-specific value of $\lambda_t^{\mathrm{Hall}}$. Let
\begin{equation}\label{eq:HalfCauchyLossAversionMap}
	\bar{\lambda}(t)=h(Z_t)
	=
	m_\lambda\tan\biggl(\frac{\pi}{2}\Phi(Z_t)\biggr),
\end{equation}
\noindent where $\Phi$ is the standard normal distribution function and
\begin{equation}
	dZ_t=-\kappa Z_tdt+\sqrt{2\kappa}\,dB_\lambda(t)
\end{equation}
\noindent has stationary distribution $N(0,1)$ because it is an Ornstein-Uhlenbeck process with invariant Gaussian distribution \citep[Ch.~15]{KarlinTaylor1981,Oksendal2003}. By the probability integral transform \citep{Rosenblatt1952,Billingsley1995}, $\Phi(Z_t)$ is uniform on $(0,1)$ in stationarity. Applying the inverse half-Cauchy distribution function \citep{Devroye1986,JohnsonKotzBalakrishnan1994} implies that $\bar{\lambda}(t)$ is half-Cauchy with scale and median $m_\lambda$ in stationarity. Applying It\^o's formula to $h(Z_t)$ gives the loss-aversion stochastic differential equation (SDE)
\begin{equation}\label{eq:HalfCauchyLossAversionSDE}
	d\bar{\lambda}(t)
	=
	\bigl[-\kappa Z_t h^\prime(Z_t)+\kappa h^{\prime\prime}(Z_t)\bigr]dt
	+\sqrt{2\kappa}\,h^\prime(Z_t)dB_\lambda(t),
	\qquad Z_t=h^{-1}(\bar{\lambda}(t)).
\end{equation}
\noindent This specification replaces the Gaussian Ornstein-Uhlenbeck law in $h(Z_t)$ with a positive heavy-tailed process for $d\bar{\lambda}(t)$. It also supplies the missing dynamic link between loss aversion and bank-run exposure. Since the exposure coefficient in \eqref{eq:MainLinearIncomeRiskModel} is written as $\beta_{x,N}(\bar{\lambda}(t))$, a $C^2$ exposure map inherits dynamics from \eqref{eq:HalfCauchyLossAversionSDE}. By It\^o's formula,
\begin{equation}\label{eq:DynamicExposureCoefficientSDE}
 d\beta_{x,N}(\bar{\lambda}_t)
 =
 \beta_{x,N}^\prime(\bar{\lambda}_t)\,d\bar{\lambda}_t
 +
 \frac{1}{2}\beta_{x,N}^{\prime\prime}(\bar{\lambda}_t)\,d\langle\bar{\lambda}\rangle_t.
\end{equation}
\noindent Thus, the SDE for loss aversion is not an auxiliary embellishment; it is the mechanism that turns the static liquidity-risk slope into a dynamic bank-run exposure coefficient. The simulated paths in \cref{fig:LossAversionSDESimulation} show why the half-Cauchy specification is economically useful: most realizations remain near the behavioral benchmark, while rare draws generate large spikes in loss aversion. Those spikes are not treated as nuisance outliers in the model; they are precisely the stress states in which paycheck liquidity demand can be behaviorally amplified. The technical construction and weak-convergence argument are placed in the Internet Appendix.

\vspace{0.5\baselineskip}
\begin{center}
	\red{\textbf{[Figure 4 goes about here: Simulated half-Cauchy loss-aversion SDE]}}
\end{center}
\vspace{0.5\baselineskip}

\tab Because half-Cauchy exposure has a heavy right tail, level plots can obscure ordinary-state variation. The median log-exposure path is comparatively stable, but the pointwise $5$--$95$ percent band remains wide at many dates as shown in \cref{fig:LogBankRunExposureSDE}. Thus, the representative exposure state can look calm even though the cross-sectional range of bank-run exposure states is unstable at a given point in time.

\vspace{0.5\baselineskip}
\begin{center}
	\red{\textbf{[Figure 5 goes about here: Log bank-run exposure diagnostics]}}
\end{center}
\vspace{0.5\baselineskip}

\section{Proof-of-Concept Empirical Implementation}
\label{sec:EmpiricalImplementation}

\tab This section provides a proof-of-concept empirical implementation of the bank run exposure regression in \eqref{eq:PracticalBankRunRegression} using publicly available U.S.\ bank Call Report data. The objective is deliberately modest. Public Call Reports are quarterly bank-level regulatory filings, not account-level paycheck-cycle data. Thus, the exercise should be read as an external-validity and measurement demonstration: it shows how the exposure equation can be taken to real bank balance-sheet data, while making clear which variables would be measured more sharply with internal bank account data.

\subsection{Data}

\tab We use quarterly Consolidated Reports of Condition and Income (Call Reports) from the Federal Financial Institutions Examination Council (FFIEC) for U.S.\ commercial banks. The downloaded files cover 2001:Q1 through 2024:Q4. After constructing lagged deposit growth, rolling liquidity volatility, and the cash-reserve change, the analysis panel contains 8,909 bank-quarter observations for 232 banks from 2001:Q4 through 2024:Q4. The empirical sample is therefore useful for testing whether the exposure regression can be operationalized with public data, but it is not a substitute for the daily or pay-cycle account-level panel that the theory ultimately recommends.

\tab Because the sample includes the 2007--2009 financial crisis and the COVID-19 contraction, we also code recession-window indicators using the National Bureau of Economic Research (NBER) business-cycle dates.\footnote{See NBER,  \href{https://www.nber.org/research/data/us-business-cycle-expansions-and-contractions}{US Business Cycle Expansions and Contractions}. NBER dates the Great Recession from December 2007 to June 2009 and the COVID-19 recession from February 2020 to April 2020.} Since Call Reports are quarterly, the Great Recession dummy equals one for quarters overlapping December 2007--June 2009, i.e., 2007:Q4 through 2009:Q2. The COVID lockdown dummy equals one for quarters overlapping February--April 2020, i.e., 2020:Q1 and 2020:Q2. The Great Recession window contributes 799 observations in the analysis panel, while the COVID lockdown window contributes 146 observations.

\subsection{Variable Construction}

\tab \textbf{Cash-Reserve Change.}
Define bank-level reserve exposure as
\begin{equation}
	\Delta CR_{bt} = \Delta \left( \frac{\text{Cash}_{bt}}{\text{Total Assets}_{bt}} \right),
\end{equation}
\noindent where Cash is cash and balances due from depository institutions.

\tab \textbf{Total Deposits.}
The deposit denominator is constructed defensively across reporting forms. We use consolidated total deposits when available and otherwise reconstruct total deposits as domestic plus foreign deposits:
\begin{equation}
	\text{Total Deposits}_{bt}
	=
	\begin{cases}
		\text{RCFD2200}_{bt}, & \text{if } \text{RCFD2200}_{bt}\text{ is reported},\\
		\text{RCON2200}_{bt}+\text{RCFN2200}_{bt}, & \text{otherwise}.
	\end{cases}
	\label{eq:EmpiricalTotalDeposits}
\end{equation}
\noindent In the FFIEC public bulk extract used here, all observations in the cleaned panel use the fallback construction $\text{RCON2200}+\text{RCFN2200}$. The cleaning code then checks for negative deposits, missing deposits, zero deposits with positive assets, and mechanical share violations. The resulting master panel has no negative or missing total deposits. There are 2,146 zero-deposit observations with positive assets; these observations are retained in the master file for transparency, but deposit-growth calculations are set to missing whenever the lagged deposit denominator is nonpositive.

\vspace{0.5\baselineskip}
\begin{center}
	\red{\textbf{[Table 1 goes about here: Public Call Report panel diagnostics]}}
\end{center}
\vspace{0.5\baselineskip}

\tab \textbf{Liquidity Risk.}
Liquidity volatility is measured as
\begin{equation}
	\sigma_{bt}
	=
	\operatorname{sd}_{4q}
	\left(
	\frac{\text{Total Deposits}_{bt}-\text{Total Deposits}_{b,t-1}}
	{\text{Total Deposits}_{b,t-1}}
	\right),
\end{equation}
\noindent where the rolling standard deviation is computed over four quarters. Deposit growth and $\sigma_{bt}$ are winsorized by quarter at the 1 percent tails to reduce the influence of very small-bank denominator effects.

\tab \textbf{Loss-Aversion Proxy.}
Consistent with the paycheck-to-paycheck model, we proxy depositor behavioral exposure using the retail funding share:
\begin{equation}
	LA_{bt}
	=
	\text{Retail Share}_{bt}
	=
	1-\frac{\text{Brokered Deposits}_{bt}}{\text{Total Deposits}_{bt}}.
\end{equation}
\noindent Missing brokered deposits are treated as zero for this proxy, since many small banks do not report brokered deposits separately. Retail-share observations outside $[0,1]$ are set to missing. This proxy is admittedly coarse: retail funding is not loss aversion itself. Rather, it is a public-data proxy for clientele exposure to paycheck-sensitive funding. The conceptual link is consistent with evidence that income timing affects household expenditure and liquidity behavior \citep{Stephens2006,JappelliPistaferri2010}.

\tab \textbf{Composite Loss-Aversion Proxy.}
To move beyond a single retail-share proxy, we also construct a regression-weighted composite exposure index. Let
\begin{align}
	\text{Transaction Share}_{bt}
	&=
	\frac{\text{Noninterest Demand Deposits}_{bt}
		+\text{Interest-Bearing Transaction Accounts}_{bt}}
	{\text{Total Deposits}_{bt}},\\
	\text{Core Share}_{bt}
	&=
	\frac{\text{Total Deposits}_{bt}
		-\text{Brokered Deposits}_{bt}
		-\text{Large Time Deposits}_{bt}}
	{\text{Total Deposits}_{bt}},\\
	\text{Consumer Loan Share}_{bt}
	&=
	\frac{\text{Consumer Loans}_{bt}}{\text{Total Loans}_{bt}}.
\end{align}
\noindent The large-time-deposit item is harmonized across the reporting threshold change by using large time deposits of at least \$100,000 through 2009 and large time deposits of at least \$250,000 thereafter. The three component shares are winsorized and standardized to obtain $TxnZ_{bt}$, $CoreZ_{bt}$, and $ConsLoanZ_{bt}$.

\tab The regression-weighted index is constructed in two steps. First, each standardized component is interacted separately with liquidity volatility in a bank and quarter fixed-effect regression. Second, the absolute values of the three interaction coefficients are normalized to sum to one:
\begin{equation}
	w_k
	=
	\frac{|\widehat{\beta}_{2k}|}
	{\sum_j|\widehat{\beta}_{2j}|},
	\qquad
	k\in\{Txn,Core,ConsLoan\}.
\end{equation}
\noindent In the present sample, the normalized weights are $0.169$ for transaction share, $0.537$ for core share, and $0.294$ for consumer-loan share. We then define
\begin{equation}
	LA^{reg}_{bt}
	=
	\operatorname{scale}
	\left(
	0.169\,TxnZ_{bt}
	+
	0.537\,CoreZ_{bt}
	+
	0.294\,ConsLoanZ_{bt}
	\right).
\end{equation}
\noindent For robustness, we also estimate the same regressions with an equal-weight index,
\begin{equation}
	LA^{equal}_{bt}
	=
	\operatorname{scale}
	\left(
	\frac{TxnZ_{bt}+CoreZ_{bt}+ConsLoanZ_{bt}}{3}
	\right).
\end{equation}
\noindent These indices remain public-data proxies. They should be interpreted as balance-sheet measures of clientele and product exposure that may correlate with loss-averse liquidity demand, not as direct measures of depositor psychology.

\tab \textbf{Uninsured Deposits.}
We also construct
\begin{equation}
	\text{Uninsured Share}_{bt}
	=
	\frac{\text{Uninsured Deposits}_{bt}}{\text{Total Deposits}_{bt}},
\end{equation}
\noindent and set values above one to missing. Uninsured-share coverage is incomplete in the public bulk files and should be interpreted as a balance-sheet control rather than as an account-level run trigger. This is an important caveat because recent evidence on the 2023 episode shows that uninsured deposits are central to cross-bank stress \citep{ChoiGoldsmithPinkhamYorulmazer2023}.

\tab \textbf{Small Banks.}
To test whether the public-data exposure relation differs for small banks, define
\begin{equation}
	SmallBank_{bt}
	=
	\mathds{1}
	\left\{
	\text{Total Assets}_{bt}<\$1\text{ billion}
	\;\text{and}\;
	\text{Only Domestic Offices}_{bt}=1
	\right\}.
\end{equation}
\noindent Call Report balance-sheet variables are reported in thousands of dollars, so the asset cutoff is implemented as $\text{Total Assets}_{bt}<1{,}000{,}000$. In the public bulk files, the domestic-office screen is proxied by the absence of foreign-office deposits, i.e., $\text{RCFN2200}_{bt}=0$ or missing. The analysis panel contains 796 small-bank observations, equal to 8.9\% of the bank-quarter sample.

\subsection{Empirical Specification}

\tab The estimated panel regression is
\begin{equation}
	\Delta CR_{bt}
	=
	\alpha
	+
	\beta_1 \sigma_{bt}
	+
	\beta_2 (\sigma_{bt} \times LA_{bt})
	+
	\mu_b
	+
	\tau_t
	+
	\varepsilon_{bt},
\end{equation}
\noindent where $\mu_b$ are bank fixed effects and $\tau_t$ are time fixed effects. Standard errors are clustered two-way by bank and quarter following \citet{CameronGelbachMiller2011}. We estimate this specification first with $LA_{bt}=\text{Retail Share}_{bt}$ and then with $LA_{bt}=LA^{reg}_{bt}$ and $LA_{bt}=LA^{equal}_{bt}$. Thus, the empirical exercise asks whether a regression-weighted composite behavioral-exposure proxy has more explanatory content than retail share alone.

\tab To account for aggregate stress periods, we also estimate recession-window specifications. First, we estimate a bank-fixed-effect specification with Great Recession and COVID lockdown dummies. Second, because pure recession dummies are absorbed by quarter fixed effects, we estimate quarter-fixed-effect interaction specifications in which the recession-window indicators interact with $\sigma_{bt}$, $LA_{bt}$, and $\sigma_{bt}\times LA_{bt}$. The recession-window specifications are run for both the retail-share proxy and the regression-weighted composite proxy.

\subsection{Results}

\tab\cref{tab:PublicCallReportExposureRegression} reports a compact version of the retail-share fixed-effect estimates. The baseline interaction between liquidity volatility and retail share is positive but statistically imprecise in the public quarterly data. That result should not be overinterpreted as a rejection of the behavioral mechanism, because retail share is only a coarse public proxy for loss aversion and paycheck stress. The specification that adds uninsured-share exposure shows a negative and statistically significant interaction between liquidity volatility and uninsured share. One interpretation is that banks with greater uninsured-deposit exposure adjust cash ratios differently once uninsured funding risk is accounted for directly. The SVB-period triple interaction is positive and marginally significant in the retail-share specification without the uninsured-share interaction, which is consistent with heightened behavioral amplification during a salient run-risk episode, but the precision is weak.

\vspace{0.5\baselineskip}
\begin{center}
	\red{\textbf{[Table 2 goes about here: Proof-of-concept bank run exposure regressions]}}
\end{center}
\vspace{0.5\baselineskip}

\tab\cref{tab:LAProxyComparison} compares retail share with the regression-weighted composite proxy $LA^{reg}_{bt}$ and the equal-weight proxy $LA^{equal}_{bt}$. The key interaction remains imprecise across all proxy choices, which is unsurprising given that the data are quarterly public balance-sheet variables rather than account-level paycheck-cycle observations. Nevertheless, the composite proxy has modestly more explanatory content than retail share: the baseline $R^2$ rises from $0.043$ with retail share to $0.049$ with $LA^{reg}_{bt}$, and the uninsured-control specification rises from $0.057$ to $0.061$. The equal-weight index produces nearly the same fit as the regression-weighted index, suggesting that the improvement is not driven by an unstable weighting scheme. Thus, the result is best interpreted as a measurement lesson: balance-sheet information that spans transaction deposits, core funding, and consumer-loan exposure appears to summarize public-data behavioral exposure slightly better than retail share alone, but the public data remain too coarse to estimate the theory's account-level mechanism sharply.

\vspace{0.5\baselineskip}
\begin{center}
	\red{\textbf{[Table 3 goes about here: Retail share and composite loss-aversion proxy comparison]}}
\end{center}
\vspace{0.5\baselineskip}

\tab\cref{tab:LAProxySVBComparison} extends the same proxy-comparison logic to the Post-SVB event window. The retail-share event model reproduces the positive and marginally significant triple interaction, $\sigma_{bt}\times \text{Retail Share}_{bt}\times PostSVB_t=0.141$ with standard error $0.084$. The corresponding $LA^{reg}_{bt}$ triple interaction is positive but smaller and statistically imprecise, $0.009$ with standard error $0.008$; with the uninsured PostSVB interaction added, it is $0.011$ with standard error $0.009$. The equal-weight composite produces still smaller imprecise Post-SVB triple interactions. Thus, the SVB event-window exercise supports a careful interpretation: the retail-share proxy captures a marginal post-SVB amplification signal in public data, while the composite proxies improve overall fit but do not produce a sharper Post-SVB amplification coefficient.

\vspace{0.5\baselineskip}
\begin{center}
	\red{\textbf{[Table 4 goes about here: Post-SVB event-window comparison across exposure proxies]}}
\end{center}
\vspace{0.5\baselineskip}

\tab\cref{tab:SmallBankInteractions} reports small-bank interaction estimates. The small-bank dummy equals one for banks with less than \$1 billion in total assets and only domestic offices. In the retail-share specification, the small-bank interaction with the behavioral exposure slope is positive. Without the uninsured-share control, the coefficient on $\sigma_{bt}\times \text{Retail Share}_{bt}\times SmallBank_{bt}$ is $0.058$ with standard error $0.096$; with uninsured share included, it rises to $0.076$ with standard error $0.011$. The regression-weighted composite proxy gives a smaller but still positive small-bank interaction, $0.010$ with standard error $0.006$, in the uninsured-control specification. The equal-weight composite yields $0.014$ with standard error $0.005$ in the same specification. Thus, the public-data evidence suggests that small-bank status may matter for exposure measurement, but the inference depends on the proxy and control set. The result is best read as a robustness and heterogeneity check: small banks can be added to the monitoring regression through interaction terms, but the public Call Report data do not identify the account-level behavioral mechanism by themselves.

\vspace{0.5\baselineskip}
\begin{center}
	\red{\textbf{[Table 5 goes about here: Small-bank interaction regressions]}}
\end{center}
\vspace{0.5\baselineskip}

\tab We also estimated Post-SVB specifications that interact the exposure slope with $SmallBank_{bt}$. The composite-proxy four-way interactions, $\sigma_{bt}\times LA^{reg}_{bt}\times PostSVB_t\times SmallBank_{bt}$ and $\sigma_{bt}\times LA^{equal}_{bt}\times PostSVB_t\times SmallBank_{bt}$, are small and statistically imprecise. The retail-share four-way specification produces large offsetting coefficients on the small-bank Post-SVB terms, which is consistent with near-collinearity created by combining a narrow event window, a high retail-share proxy, and a small number of small-bank event observations. We therefore treat the Post-SVB small-bank estimates as a diagnostic robustness check rather than as a separate structural finding.

\tab The recession-window estimates are reported in\cref{tab:PublicCallReportMacroRegression} for the retail-share proxy, with the corresponding composite-proxy estimates generated in the empirical output. In the bank-fixed-effect specification without quarter fixed effects, the Great Recession dummy is economically small and statistically insignificant. The COVID lockdown dummy is positive and marginally significant, but its magnitude is small. Once quarter fixed effects are included, the recession-window dummies themselves are absorbed, as expected; identification comes from the interaction of recession windows with bank-level liquidity risk and the chosen exposure proxy. The retail-share Great Recession interaction terms are not statistically significant. The COVID lockdown retail-share interaction is positive and marginally significant, but the COVID interaction with $\sigma_{bt}\times LA_{bt}$ is not statistically significant. The composite-proxy estimates tell the same story: the $LA^{reg}_{bt}$ Great Recession triple interaction is $-0.005$ with standard error $0.013$, and the COVID triple interaction is $-0.002$ with standard error $0.005$. The $LA^{reg}_{bt}\times COVID_t$ coefficient is marginally positive, but that term is a level-shift in the proxy rather than the exposure-slope coefficient. Thus, the public Call Report exercise does not show that the Great Recession or the short COVID recession materially changes the estimated behavioral exposure slope. Rather, it shows that the regression can be stress-tested against major aggregate episodes without mechanically attributing all exposure variation to those episodes.

\vspace{0.5\baselineskip}
\begin{center}
	\red{\textbf{[Table 6 goes about here: Recession-window robustness checks]}}
\end{center}
\vspace{0.5\baselineskip}

\tab The lesson from the public-data exercise is therefore methodological rather than definitive. The regression can be implemented, the denominator and share variables can be audited, the exposure coefficients can be stress-tested against major macro episodes, and the estimated coefficients can be used as monitoring statistics. But the most theory-consistent implementation would use internal bank data: account-balance volatility, paycheck calendar information, direct-deposit delays, overdraft events, failed payments, digital-transfer intensity, and depositor-segment identifiers. Public Call Reports are too coarse to observe the within-quarter behavioral mechanism directly.

\tab This caveat is central for identification. If total deposits are mismeasured, the interaction $\sigma_{bt}\times LA_{bt}$ can be mechanically inflated because both liquidity volatility and the retail-share proxy inherit denominator error. The cleaning procedure in \eqref{eq:EmpiricalTotalDeposits} and\cref{tab:PublicCallReportDiagnostics} is therefore not a housekeeping detail; it is part of the identification discipline needed to estimate bank run exposure credibly.

\section{Conclusion}\label{sec:Conclusion}
\tab This paper develops a behavioral theory of bank-run exposure for a paycheck-to-paycheck economy in which depositors receive income through demand deposits and are loss averse over declines in income and consumption. The central mechanism is simple: when liquidity risk interacts with state-dependent loss aversion, ordinary cash-demand fluctuations can become behaviorally amplified withdrawal pressure. Market clearing links those fluctuations to bank cash reserves, so bank-run exposure is not only a balance-sheet object; it is also a clientele-specific behavioral exposure.

\tab The model makes three formal contributions. First, it identifies stress states in which sufficiently high subjective bad-state probabilities and loss aversion support run exposure. These states define the Bank Run Exposure State Space. Second, the stopped-process representation links suspension of convertibility to the withdrawal order, the reserve constraint, and depositor beliefs about bad income states. Third, the martingale and stochastic-loss-aversion representation show how a static liquidity-risk slope becomes a dynamic exposure coefficient. In particular, the half-Cauchy loss-aversion process preserves ordinary-state behavior while allowing rare fear-state spikes, which are precisely the states in which reserve demand can become unstable.

\tab The public Call Report exercise is best read as a proof of concept rather than a definitive empirical test. With 232 banks and 8,909 bank-quarter observations from 2001:Q4 through 2024:Q4, the proposed exposure regression can be implemented, audited, and stress-tested. Retail-share exposure gives the expected positive but imprecise baseline interaction. A regression-weighted composite proxy based on transaction-deposit share, core-deposit share, and consumer-loan share modestly improves fit relative to retail share, and small-bank and Post-SVB interactions provide suggestive but proxy-sensitive evidence of amplification. The main empirical lesson is therefore measurement discipline: public quarterly balance-sheet data can illustrate the exposure regression, but the theory calls for account-level pay-cycle data, including deposit timing, overdrafts, failed payments, balance drawdowns, digital-transfer intensity, and depositor-segment identifiers.

\tab The practical implication is that banks and supervisors can treat bank-run exposure as an estimable monitoring statistic. Regressions of reserve changes, abnormal outflows, or reserve shortfalls on liquidity volatility and its interaction with loss-aversion proxies can reveal whether cash-reserve demand is being amplified by paycheck stress. A liquidity-only model can make reserve needs look too stable, understating the volatility of cash demand before suspension or failure is observed. More broadly, Diamond--Dybvig type consumption-based approaches to bank runs should be extended to include loss aversion to declines in consumption and income. The accompanying Internet Appendix collects the technical derivations, simulation details, and diagnostic output so that the main paper remains focused on the economic mechanism.

\section{Proofs}\label{sec:Proofs}

\subsection{Proofs for \cref{subsec:LiquidityPreferenceTheory}: Liquidity Preference and Loss Aversion}\label{subsec:ProofsLiquidityPreference}

\begin{proof}[Derivation of \eqref{eq:TobinProspectSlope}]
	Following \citet[pg.~75]{Tobin1958}, interpret $V(\mu_x,\sigma_x)$ as defining a local indifference contour in mean-risk space. Along a constant expected value contour,
	\begin{equation}\label{eq:TotalDiffExpectedProspectValue}
		dV(\mu_x,\sigma_x)=0.
	\end{equation}
	\noindent Since
	\begin{equation}\label{eq:StandardizedWealthChange}
		x(\mathfrak{z})=\mu_x+\mathfrak{z}\sigma_x,
	\end{equation}
	\noindent we have
	\begin{equation}\label{eq:dxStandardizedWealthChange}
		dx(\mathfrak{z})=d\mu_x+\mathfrak{z}\,d\sigma_x.
	\end{equation}
	\noindent Under the maintained assumption that $w(d\Phi(\mathfrak{z}))$ is fixed with respect to local changes in $(\mu_x,\sigma_x)$, differentiation under the integral gives
	\begin{align}
		dV(\mu_x,\sigma_x)
		&=
		\int^\infty_{-\infty}
		v^\prime(x(\mathfrak{z}))
		\bigl(d\mu_x+\mathfrak{z}\,d\sigma_x\bigr)
		w(d\Phi(\mathfrak{z})) \notag\\
		&=
		\left[
		\int^\infty_{-\infty}
		v^\prime(x(\mathfrak{z}))w(d\Phi(\mathfrak{z}))
		\right]d\mu_x
		+
		\left[
		\int^\infty_{-\infty}
		\mathfrak{z}v^\prime(x(\mathfrak{z}))w(d\Phi(\mathfrak{z}))
		\right]d\sigma_x.
		\label{eq:TotalDiffExpandedExpectedProspectValue}
	\end{align}
	\noindent Setting \eqref{eq:TotalDiffExpandedExpectedProspectValue} equal to zero and solving for $d\mu_x/d\sigma_x$ gives \eqref{eq:TobinProspectSlope}.
\end{proof}

\begin{proof}[Proof of \Cref{lem:SignOfIncomeIncomeRiskTradeoff}]
	Let
	\begin{align}
		g(\mu_x,\sigma_x;\lambda) &= \int^\infty_{-\infty}v^\prime(\mu_x+\mathfrak{z}\sigma_x)\;w(d\Phi(\mathfrak{z}))\\
		&= \int_{\{x(\mathfrak{z})\geq 0\}} v_g^\prime(x(\mathfrak{z})) w(d\Phi(\mathfrak{z}))+\lambda \int_{\{x(\mathfrak{z})<0\}} v_\ell^\prime (-x(\mathfrak{z}))w(d\Phi(\mathfrak{z}))\label{eq:AffineG}\\
		\intertext{and, for a given $\lambda$, let}
		\beta(x;\lambda) &=\frac{d\mu_x}{d\sigma_x}\label{eq:BetaSlope}\\
		\intertext{be the local slope in $(\mu_x,\sigma_x)$ space. For fixed gain/loss regions, the local derivative of the slope is the derivative of \eqref{eq:TobinProspectSlope}, and can be written as}
		\frac{\partial\beta(x;\lambda)}{\partial x} &=\frac{g(\cdot)\frac{\partial}{\partial x}[I_1+\lambda I_2]-(I_1+\lambda I_2)\frac{\partial}{\partial x}g(\cdot)}{g^2(\cdot)}\\
		\intertext{where}
		I_1 &= \int_{\{x(\mathfrak{z})\geq 0\}} \mathfrak{z}v_g^\prime(x(\mathfrak{z})) w(d\Phi(\mathfrak{z}))\\
		I_2 &= \int_{\{x(\mathfrak{z})<0\}} \mathfrak{z}v_\ell^\prime (-x(\mathfrak{z}))w(d\Phi(\mathfrak{z}))\\
		\intertext{and}
		J_1 &= -\int_{\{x(\mathfrak{z})\geq 0\}} v_g^{\prime\prime}(x(\mathfrak{z})) w(d\Phi(\mathfrak{z})) > 0 \label{eq:J1Def}\\
		J_2 &= -\int_{\{x(\mathfrak{z})<0\}} v_\ell^{\prime\prime} (-x(\mathfrak{z}))w(d\Phi(\mathfrak{z})) > 0. \label{eq:J2Def}
	\end{align}
	\noindent Thus $J_1$ and $J_2$ collect the absolute curvature terms in the gain and loss regions. On a fixed gain/loss partition,
	\begin{align}
		\frac{\partial}{\partial x}g(\cdot) &= \frac{\partial}{\partial x}\biggl[\int_{\{x(\mathfrak{z})\geq 0\}} v_g^\prime(x(\mathfrak{z})) w(d\Phi(\mathfrak{z}))+\lambda \int_{\{x(\mathfrak{z})<0\}} v_\ell^\prime (-x(\mathfrak{z}))w(d\Phi(\mathfrak{z}))\biggr]\\
		&= \biggl[\int_{\{x(\mathfrak{z})\geq 0\}} v_g^{\prime\prime}(x(\mathfrak{z})) w(d\Phi(\mathfrak{z}))-\lambda \int_{\{x(\mathfrak{z})<0\}} v_\ell^{\prime\prime} (-x(\mathfrak{z}))w(d\Phi(\mathfrak{z}))\biggr]\\
		&= -J_1+\lambda J_2.
	\end{align}
	\noindent Hence the sign of $\beta^\prime(x;\lambda)$ is local and is determined by the numerator
	\begin{align}
		\beta^\prime(x;\lambda) &= \frac{\psi(\lambda)}{g^2(\mu_x,\sigma_x;\lambda)}\label{eq:BetaSlopeDirection}\\
		\intertext{where}
		\psi(\lambda)&=g(\cdot)\biggl(\frac{\partial I_1}{\partial x}+\lambda\frac{\partial I_2}{\partial x}\biggr)-(I_1+\lambda I_2)(-J_1+\lambda J_2).\label{eq:PsiLambda}
	\end{align}
	\noindent Since $g$ is affine in $\lambda$ on a fixed gain/loss partition in \eqref{eq:AffineG}, this expression gives a local sign condition for determining whether higher loss aversion steepens or reverses the mean-risk tradeoff at a given point $x=x_0$.
\end{proof}

\begin{proof}[Proof of \Cref{prop:LiquidityPrefEquation}]
	Equation \eqref{eq:TobinProspectSlope} defines the compensating mean-risk slope along a local indifference contour. Writing that slope as $\beta(x;\lambda)$ gives $d\mu_x=\beta(x;\lambda)d\sigma_x$. For a local linear approximation around the relevant point in mean-risk space, integrating this differential relation gives $\mu_x\approx \beta(x;\lambda)\sigma_x$, up to the local intercept absorbed by the reference point.
\end{proof}

\begin{proof}[Proof of \Cref{cor:RiskSeekingOverIncomeLoss}]
	By \Cref{lem:SignOfIncomeIncomeRiskTradeoff}, the local sign of $\beta^\prime(x;\lambda)$ is the sign of $\psi(\lambda)$ whenever
	\[
	g^2(\mu_x,\sigma_x;\lambda)\neq 0.
	\]
	\noindent If $\lambda_c=\inf\{\lambda>0:\psi(\lambda)<0\}$ and $\psi(\lambda)$ is locally monotone decreasing in the relevant region, then $\psi(\lambda)<0$ for $\lambda>\lambda_c$. Hence $\beta^\prime(x;\lambda)<0$ for $\lambda>\lambda_c$.
\end{proof}

\subsection{Proofs for \cref{subsec:MyopicLossAversion}: Myopic Loss Aversion Paycheck to Paycheck}\label{subsec:ProofsMyopicLossAversion}

\begin{proof}[Proof of \Cref{thm:LossAversionBasedMPC}]
	Suppose that the marginal propensity to consume is deterministic, that income is subject to state dependent transitory shocks $\eta_{i\omega}$, and that permanent income $Y_p$ is the same in each of the two time periods. At the start of period $1$, agents are uncertain about transitory shocks to permanent income, so
	\begin{align}
		C_{1\omega}&= k(\rho,d)Y_p + \eta_{1\omega}
		\intertext{However, according to \cref{assum:PermanentConsumeFluctuationHypothesis}, there exists $\tilde{k}(\rho,d;\lambda)$ such that}
		C_{1\omega}&=\tilde{k}(\rho,d;\lambda(\omega))Y_p,\quad\text{whereupon}\label{eq:MarginalpropenConsumeWithLambdaYp}\\
		\tilde{k}(\rho,d;\lambda(\omega))Y_p&=k(\rho,d)Y_p + \eta_{1\omega},\quad\text{and}\label{eq:MarginalPropenConsumeWithLambdaError}\\
		\tilde{k}(\rho,d;\lambda(\omega))&=k(\rho,d)+\tilde{\eta}_{1\omega},\quad\text{where}\label{eq:MarginalPropenConsumeWithLambda}\\
		\tilde{\eta}_{1\omega}&=\frac{\eta_{1\omega}}{Y_p}.
	\end{align}
	\noindent If $\tilde{k}(\rho,d;\lambda)$ is separable in $\lambda$, so that $\tilde{k}(\rho,d;\lambda)=\tilde{k}(\rho,d)\lambda$, then \eqref{eq:MarginalPropenConsumeWithLambdaError} gives
	\begin{align}
		\lambda(\rho,d,\omega)&=\biggl(1+\frac{\tilde{\eta}_{1\omega}}{k(\rho,d)}\biggr)\frac{k(\rho,d)}{\tilde{k}(\rho,d)},\\
		\tilde{\eta}_{1\omega}&=\tilde{k}(\rho,d)\lambda(\omega)-k(\rho,d).
	\end{align}
	\noindent From \eqref{eq:MarginalPropenConsumeWithLambda} and \eqref{eq:MarginalpropenConsumeWithLambdaYp}, the one-period consumption ratchet is
	\begin{align}
		c_{1\omega}&=\tilde{\eta}_{1\omega}Y_p\\
		&=(\tilde{k}(\rho,d)\lambda(\omega)-k(\rho,d))Y_p\\
		&=(\tilde{k}(\rho,d;\lambda(\omega))-k(\rho,d))Y_p.
	\end{align}
	\noindent Provided $Y_p>0$, the condition $c_{1,\omega}\geq 0$ is equivalent to $\tilde{k}(\rho,d;\lambda(\omega))\geq k(\rho,d)$.
\end{proof}

\subsection{Proofs for \cref{subsec:OptimalConsumptionProblem}: The Optimal Consumption Problem}\label{subsec:ProofsOptimalConsumption}

\begin{proof}[Derivation of \Cref{lem:BankRunTriggerEndogenLossAver}]
	The Lagrangian over the state space is
	\begin{equation}
		\begin{split}
			\mathcal{L} &= v(c_0)+\rho\Sigma_{\omega\in\Omega}\pi_\omega v(c_{1\omega}) -\mu_0(c_0-x_0)\\
			&\quad -\Sigma_{\omega\in\Omega}\mu_{1\omega}\pi_\omega (c_{1\omega}-x_{1\omega}),
		\end{split}
	\end{equation}
	\noindent where $\mu_0$ is unrestricted in sign and $\mu_{1\omega}\geq 0$. The necessary first-order conditions for an interior maximum are
	\begin{align}
		v^\prime(c_0)-\mu_0 &= 0\label{eq:FOC_c0}\\
		\rho\Sigma_{\omega\in\Omega}\pi_\omega v^\prime(c_{1\omega})-\Sigma_{\omega\in\Omega}\mu_{1\omega}\pi_\omega &= 0.\label{eq:FOC_c1_avg}
	\end{align}
	\noindent The statewise first-order conditions are $\rho v^\prime(c_{1\omega}) = \mu_{1\omega}$ for each $\omega$; averaging across states gives \eqref{eq:FOC_c1_avg}. Let
	\begin{equation}
		\bar{\mu}_1 =\Sigma_{\omega\in\Omega}\mu_{1\omega}\pi_\omega.
	\end{equation}
	\noindent According to the Karush-Kuhn-Tucker conditions \citep[see, e.g.,][pg.~42]{LuenbergerYe2008}, slack constraints have zero multipliers, while binding constraints have positive multipliers. The weighted first-order condition does not imply that each statewise marginal term is zero. Instead, in a loss state $\omega^\prime$, the weighted marginal loss dominates the weighted marginal gain if
	\begin{equation}\label{eq:StressStateTrigger}
		\mathbb{I}_{\{c_{1\omega^\prime}<0\}}\pi^-_{\omega^\prime}\lambda v^\prime_\ell(-c_{1\omega^\prime})
		>
		\pi^+_{\omega^\prime}v^\prime_g(c_{1\omega^\prime}).
	\end{equation}
	\noindent Equivalently, for such a state,
	\begin{equation}\label{eq:LossAverIndexLowerBound}
		\lambda(\omega^\prime)>\frac{\pi^+_{\omega^\prime}v^\prime_g(c_{1\omega^\prime})}{\pi^-_{\omega^\prime}v^\prime_\ell(-c_{1\omega^\prime})}.
	\end{equation}
	\noindent The indicator $\mathbb{I}_{\{c_{1\omega^\prime}<0\}}$ is canceled from both sides because $\pi^-_{\omega^\prime}>0$ in a loss state. This is the sufficient stress-state condition stated in \Cref{lem:BankRunTriggerEndogenLossAver}.
\end{proof}

\subsection{Proofs for \cref{sec:BankRunTopology}: Bank Run Exposure State Space}\label{subsec:ProofsStateSpace}

\begin{proof}[Proof of \Cref{thm:BankRunTopology}]
	Use \Cref{lem:AdmissBankRunStates} and \Cref{lem:TabooStatesLossAver} as in \eqref{eq:OmegaRun} to define
	\begin{equation*}
		\underline{\Omega}^{\text{run}} = \underline{\Omega}^{\text{admiss}}\cap\underline{\Omega}^{\text{taboo}}
	\end{equation*}
	\noindent and endow it with the pullback topology $\mathcal{T}^{\text{run}}=\{\lambda^{-1}(O):O\in\mathcal{T}_\Lambda\}$. The non-emptiness follows from \Cref{lem:AdmissBankRunStates} and the existence of taboo states established in \Cref{lem:TabooStatesLossAver}.
\end{proof}

\clearpage
\section{Tables}

\begin{table}[htbp]
	\centering
	\caption{Public Call Report panel diagnostics}
	\label{tab:PublicCallReportDiagnostics}
	\begin{tabular}{lc}
		\hline
		Diagnostic & Value \\
		\hline
		Analysis observations & 8,909 \\
		Banks & 232 \\
		Mean retail share & 0.932 \\
		Mean uninsured share & 0.368 \\
		Mean transaction share & 0.145 \\
		Mean core share & 0.932 \\
		Mean consumer-loan share & 0.008 \\
		Small-bank observations & 796 \\
		Small-bank share & 0.089 \\
		Mean liquidity volatility $\sigma_{bt}$ & 0.058 \\
		Negative total deposits & 0 \\
		Missing total deposits & 0 \\
		Raw retail-share violations & 3 \\
		Raw uninsured-share violations & 30 \\
		$\sigma_{bt}>1$ after winsorization & 0 \\
		\hline
	\end{tabular}
\end{table}

\begin{table}[htbp]
	\centering
	\caption{Proof-of-concept bank run exposure regressions}
	\label{tab:PublicCallReportExposureRegression}
	\begin{tabular}{lcccc}
		\hline
		& Baseline & Uninsured & SVB & SVB + Uninsured \\
		\hline
		$\sigma_{bt}$ & -0.011 & 0.005 & -0.019 & -0.010 \\
		& (0.014) & (0.012) & (0.021) & (0.017) \\
		$\sigma_{bt}\times LA_{bt}$ & 0.008 & -0.003 & 0.018 & 0.008 \\
		& (0.013) & (0.012) & (0.021) & (0.017) \\
		$\sigma_{bt}\times\text{Uninsured Share}_{bt}$ &  & -0.014$^{*}$ &  &  \\
		&  & (0.007) &  &  \\
		$\sigma_{bt}\times LA_{bt}\times PostSVB_t$ &  &  & 0.141$^{+}$ & 0.138 \\
		&  &  & (0.084) & (0.100) \\
		Observations & 8,909 & 7,766 & 8,909 & 7,766 \\
		Bank fixed effects & Yes & Yes & Yes & Yes \\
		Quarter fixed effects & Yes & Yes & Yes & Yes \\
		\hline
	\end{tabular}
	\begin{flushleft}
		\footnotesize Notes: The dependent variable is the quarterly change in the cash-to-assets ratio. Standard errors are clustered two-way by bank and quarter. $^{+}p<0.10$, $^{*}p<0.05$.
	\end{flushleft}
\end{table}

\begin{table}[htbp]
	\centering
	\caption{Retail share and composite loss-aversion proxy comparison}
	\label{tab:LAProxyComparison}
	\begin{tabular}{lcccccc}
		\hline
		& Retail & $LA^{reg}$ & $LA^{equal}$ & Retail & $LA^{reg}$ & $LA^{equal}$ \\
		& Share &  &  & + Unins. & + Unins. & + Unins. \\
		\hline
		$\sigma_{bt}$ & -0.011 & -0.003 & -0.003 & 0.005 & -0.000 & -0.000 \\
		& (0.014) & (0.003) & (0.003) & (0.012) & (0.004) & (0.004) \\
		$\sigma_{bt}\times \text{Retail Share}_{bt}$ & 0.008 &  &  & -0.003 &  &  \\
		& (0.013) &  &  & (0.012) &  &  \\
		$\sigma_{bt}\times LA^{reg}_{bt}$ &  & 0.001 &  &  & -0.000 &  \\
		&  & (0.002) &  &  & (0.001) &  \\
		$\sigma_{bt}\times LA^{equal}_{bt}$ &  &  & 0.000 &  &  & -0.000 \\
		&  &  & (0.003) &  &  & (0.002) \\
		$\sigma_{bt}\times\text{Uninsured Share}_{bt}$ &  &  &  & -0.014$^{*}$ & -0.010 & -0.010 \\
		&  &  &  & (0.007) & (0.008) & (0.008) \\
		Observations & 8,909 & 8,712 & 8,712 & 7,766 & 7,643 & 7,643 \\
		$R^2$ & 0.043 & 0.049 & 0.049 & 0.057 & 0.061 & 0.061 \\
		Within $R^2$ & 0.000 & 0.000 & 0.000 & 0.000 & 0.000 & 0.000 \\
		Bank fixed effects & Yes & Yes & Yes & Yes & Yes & Yes \\
		Quarter fixed effects & Yes & Yes & Yes & Yes & Yes & Yes \\
		\hline
	\end{tabular}
	\begin{flushleft}
		\footnotesize Notes: The dependent variable is the quarterly change in the cash-to-assets ratio. Standard errors are clustered two-way by bank and quarter. $LA^{reg}_{bt}$ is the regression-weighted composite index based on transaction share, core share, and consumer-loan share. $LA^{equal}_{bt}$ uses equal weights on the same standardized components. $^{*}p<0.05$.
	\end{flushleft}
\end{table}

\begin{table}[htbp]
	\centering
	\caption{Post-SVB event-window comparison across exposure proxies}
	\label{tab:LAProxySVBComparison}
	\resizebox{\textwidth}{!}{%
		\begin{tabular}{lcccccc}
			\hline
			& Retail & $LA^{reg}$ & $LA^{equal}$ & Retail & $LA^{reg}$ & $LA^{equal}$ \\
			& SVB & SVB & SVB & SVB + Unins. & SVB + Unins. & SVB + Unins. \\
			\hline
			$\sigma_{bt}$ & -0.019 & -0.002 & -0.002 & -0.010 & -0.003 & -0.004 \\
			& (0.021) & (0.003) & (0.003) & (0.017) & (0.003) & (0.003) \\
			$\sigma_{bt}\times\text{Retail Share}_{bt}$ & 0.018 &  &  & 0.008 &  &  \\
			& (0.021) &  &  & (0.017) &  &  \\
			$\sigma_{bt}\times LA^{reg}_{bt}$ &  & 0.002 &  &  & 0.001 &  \\
			&  & (0.003) &  &  & (0.003) &  \\
			$\sigma_{bt}\times LA^{equal}_{bt}$ &  &  & 0.000 &  &  & -0.000 \\
			&  &  & (0.004) &  &  & (0.004) \\
			$\sigma_{bt}\times PostSVB_t$ & -0.142$^{+}$ & -0.018 & -0.021 & -0.134 & -0.009 & -0.013 \\
			& (0.085) & (0.013) & (0.014) & (0.104) & (0.008) & (0.011) \\
			$\text{Retail Share}_{bt}\times PostSVB_t$ & -0.007 &  &  & -0.005 &  &  \\
			& (0.006) &  &  & (0.007) &  &  \\
			$LA^{reg}_{bt}\times PostSVB_t$ &  & -0.000 &  &  & -0.000 &  \\
			&  & (0.001) &  &  & (0.001) &  \\
			$LA^{equal}_{bt}\times PostSVB_t$ &  &  & -0.000 &  &  & -0.000 \\
			&  &  & (0.000) &  &  & (0.000) \\
			$\sigma_{bt}\times\text{Retail Share}_{bt}\times PostSVB_t$ & 0.141$^{+}$ &  &  & 0.138 &  &  \\
			& (0.084) &  &  & (0.100) &  &  \\
			$\sigma_{bt}\times LA^{reg}_{bt}\times PostSVB_t$ &  & 0.009 &  &  & 0.011 &  \\
			&  & (0.008) &  &  & (0.009) &  \\
			$\sigma_{bt}\times LA^{equal}_{bt}\times PostSVB_t$ &  &  & 0.002 &  &  & 0.005 \\
			&  &  & (0.010) &  &  & (0.010) \\
			$\sigma_{bt}\times PostSVB_t\times\text{Uninsured Share}_{bt}$ &  &  &  & 0.001 & -0.005 & -0.006 \\
			&  &  &  & (0.038) & (0.036) & (0.024) \\
			Observations & 8,909 & 8,712 & 8,712 & 7,766 & 7,643 & 7,643 \\
			$R^2$ & 0.044 & 0.049 & 0.049 & 0.057 & 0.061 & 0.061 \\
			Within $R^2$ & 0.001 & 0.000 & 0.000 & 0.001 & 0.001 & 0.000 \\
			Bank fixed effects & Yes & Yes & Yes & Yes & Yes & Yes \\
			Quarter fixed effects & Yes & Yes & Yes & Yes & Yes & Yes \\
			\hline
		\end{tabular}
	}
	\begin{flushleft}
		\footnotesize Notes: The dependent variable is the quarterly change in the cash-to-assets ratio. Standard errors are clustered two-way by bank and quarter. $PostSVB_t$ equals one in the post-Silicon Valley Bank event window. The level effect of $PostSVB_t$ is absorbed by quarter fixed effects. $^{+}p<0.10$.
	\end{flushleft}
\end{table}

\begin{table}[htbp]
	\centering
	\caption{Small-bank interaction regressions}
	\label{tab:SmallBankInteractions}
	\resizebox{\textwidth}{!}{%
		\begin{tabular}{lcccccc}
			\hline
			& Retail & Retail & $LA^{reg}$ & $LA^{reg}$ & $LA^{equal}$ & $LA^{equal}$ \\
			& Small & Small + Unins. & Small & Small + Unins. & Small & Small + Unins. \\
			\hline
			$\sigma_{bt}$ & -0.020 & -0.004 & -0.002 & -0.000 & -0.002 & -0.000 \\
			& (0.020) & (0.026) & (0.003) & (0.004) & (0.003) & (0.004) \\
			$\sigma_{bt}\times\text{Exposure}_{bt}$ & 0.019 & 0.006 & 0.001 & 0.000 & 0.000 & -0.001 \\
			& (0.019) & (0.025) & (0.003) & (0.003) & (0.004) & (0.004) \\
			$\sigma_{bt}\times SmallBank_{bt}$ & -0.062 & -0.071$^{***}$ & -0.003 & 0.003 & -0.002 & 0.002 \\
			& (0.090) & (0.016) & (0.007) & (0.023) & (0.009) & (0.022) \\
			$\sigma_{bt}\times\text{Exposure}_{bt}\times SmallBank_{bt}$ & 0.058 & 0.076$^{***}$ & 0.005 & 0.010$^{+}$ & 0.001 & 0.014$^{**}$ \\
			& (0.096) & (0.011) & (0.011) & (0.006) & (0.011) & (0.005) \\
			$\sigma_{bt}\times\text{Uninsured Share}_{bt}$ &  & -0.013$^{+}$ &  & -0.011 &  & -0.011 \\
			&  & (0.007) &  & (0.009) &  & (0.009) \\
			Observations & 8,909 & 7,766 & 8,712 & 7,643 & 8,712 & 7,643 \\
			$R^2$ & 0.043 & 0.058 & 0.049 & 0.061 & 0.049 & 0.062 \\
			Within $R^2$ & 0.000 & 0.001 & 0.000 & 0.001 & 0.000 & 0.001 \\
			Bank fixed effects & Yes & Yes & Yes & Yes & Yes & Yes \\
			Quarter fixed effects & Yes & Yes & Yes & Yes & Yes & Yes \\
			\hline
		\end{tabular}
	}
	\begin{flushleft}
		\footnotesize Notes: The dependent variable is the quarterly change in the cash-to-assets ratio. Standard errors are clustered two-way by bank and quarter. $SmallBank_{bt}=1$ for banks with less than \$1 billion in total assets and only domestic offices. $\text{Exposure}_{bt}$ is retail share, $LA^{reg}_{bt}$, or $LA^{equal}_{bt}$, depending on the column. $^{+}p<0.10$, $^{**}p<0.01$, $^{***}p<0.001$.
	\end{flushleft}
\end{table}

\begin{table}[htbp]
	\centering
	\caption{Recession-window robustness checks}
	\label{tab:PublicCallReportMacroRegression}
	\begin{tabular}{lccc}
		\hline
		& Macro dummies & Great Recession interactions & COVID interactions \\
		\hline
		Great Recession dummy & 0.000 &  &  \\
		& (0.001) &  &  \\
		COVID lockdown dummy & 0.000$^{+}$ &  &  \\
		& (0.000) &  &  \\
		$\sigma_{bt}\times LA_{bt}$ & 0.008 & 0.034 & 0.023 \\
		& (0.011) & (0.025) & (0.023) \\
		$\sigma_{bt}\times GreatRecession_t$ &  & 0.065 &  \\
		&  & (0.070) &  \\
		$LA_{bt}\times GreatRecession_t$ &  & 0.007 &  \\
		&  & (0.011) &  \\
		$\sigma_{bt}\times LA_{bt}\times GreatRecession_t$ &  & -0.067 &  \\
		&  & (0.077) &  \\
		$\sigma_{bt}\times COVID_t$ &  &  & 0.027 \\
		&  &  & (0.045) \\
		$LA_{bt}\times COVID_t$ &  &  & 0.010$^{+}$ \\
		&  &  & (0.006) \\
		$\sigma_{bt}\times LA_{bt}\times COVID_t$ &  &  & -0.032 \\
		&  &  & (0.043) \\
		Observations & 8,909 & 8,909 & 8,909 \\
		Bank fixed effects & Yes & Yes & Yes \\
		Quarter fixed effects & No & Yes & Yes \\
		\hline
	\end{tabular}
	\begin{flushleft}
		\footnotesize Notes: Standard errors are clustered two-way by bank and quarter. $^{+}p<0.10$. The pure Great Recession and COVID lockdown dummies are absorbed in the interaction specifications because those regressions include quarter fixed effects.
	\end{flushleft}
\end{table}

\clearpage
\section{Figures}
\begin{figure}[!htbp!]
	\centering
	\caption{Calibration of the loss-aversion index}
	\label{fig:LambdaCalibrationHalfCauchy}
	\includegraphics[width=0.78\textwidth]{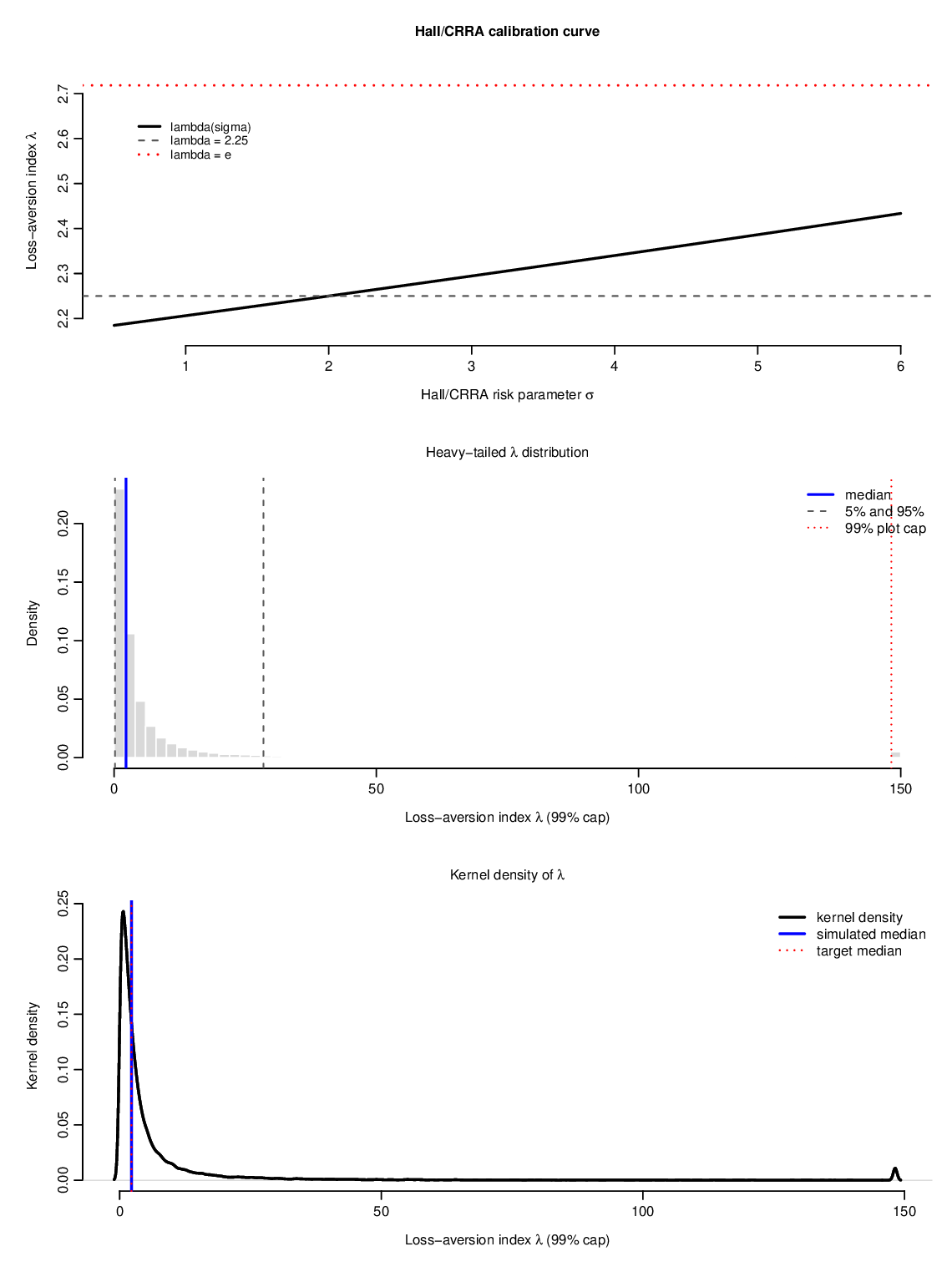}
	\par\medskip
	\begin{minipage}{0.88\linewidth}
		\footnotesize \emph{Note:} The upper panel plots the Hall/CRRA calibration curve for $\lambda$. The middle and lower panels show the simulated heavy-tailed distribution and kernel density obtained by multiplying the smooth Hall/CRRA component by a median-one half-Cauchy shock. The raw simulation preserves the heavy right tail; the displayed density is capped at the 99th percentile for readability.
	\end{minipage}
\end{figure}

\begin{figure}[htbp]
	\centering
	\vspace{-0.2em}
	\caption{Simulated bank-run exposure regressions}
	\label{fig:SimulatedBankRunExposureRegressions}
	\begin{subfigure}[t]{0.46\textwidth}
		\centering
		\includegraphics[width=\linewidth]{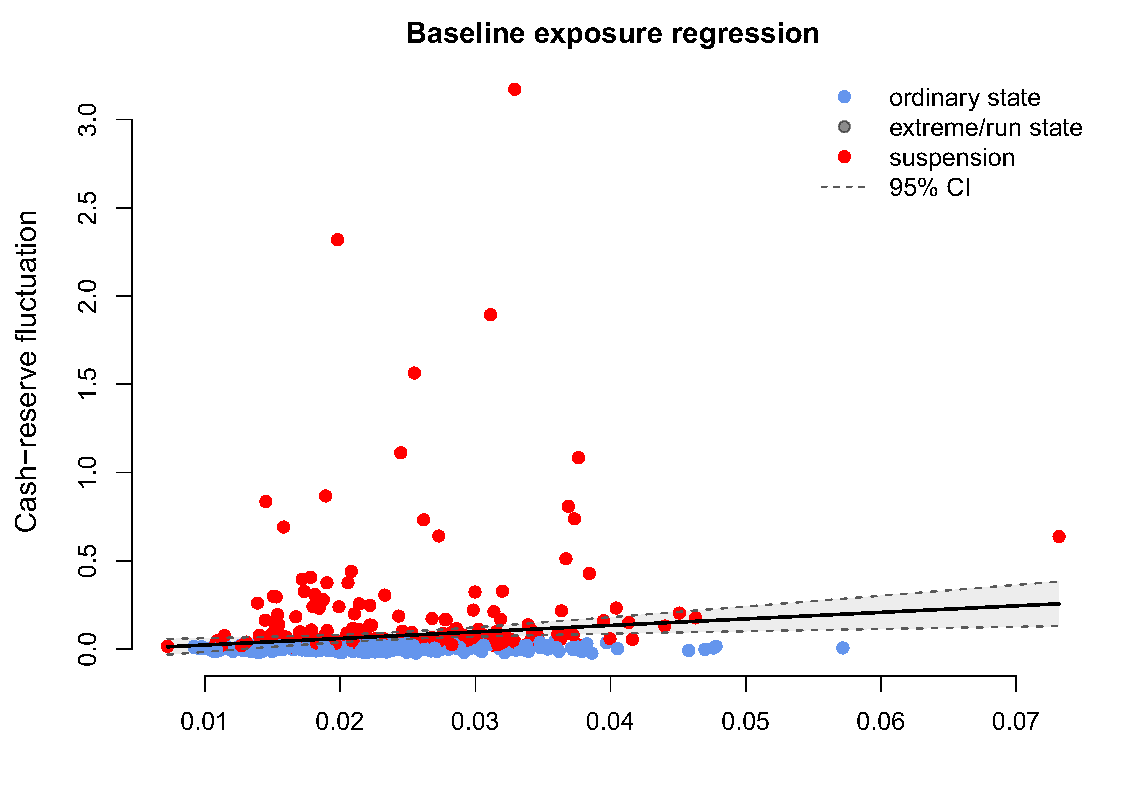}
		\caption{Baseline}
	\end{subfigure}
	\hfill
	\begin{subfigure}[t]{0.46\textwidth}
		\centering
		\includegraphics[width=\linewidth]{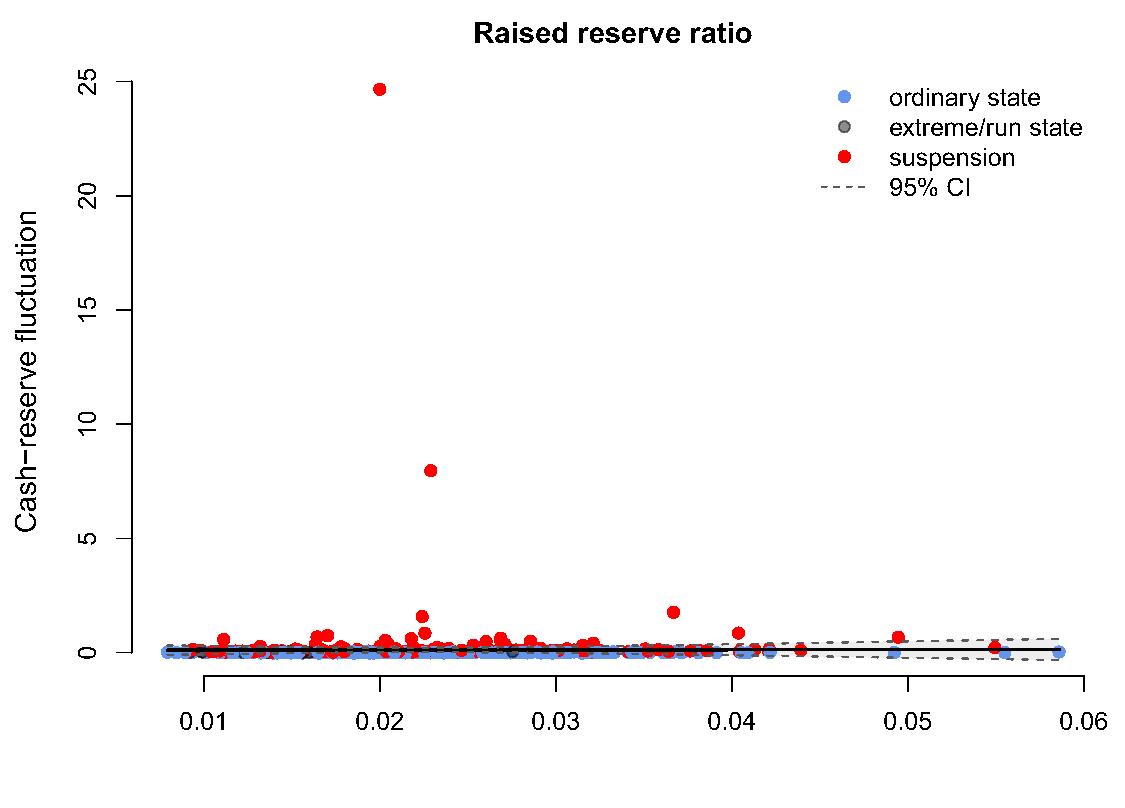}
		\caption{Raised reserve ratio}
	\end{subfigure}
	
	\vspace{0.12cm}
	\begin{subfigure}[t]{0.46\textwidth}
		\centering
		\includegraphics[width=\linewidth]{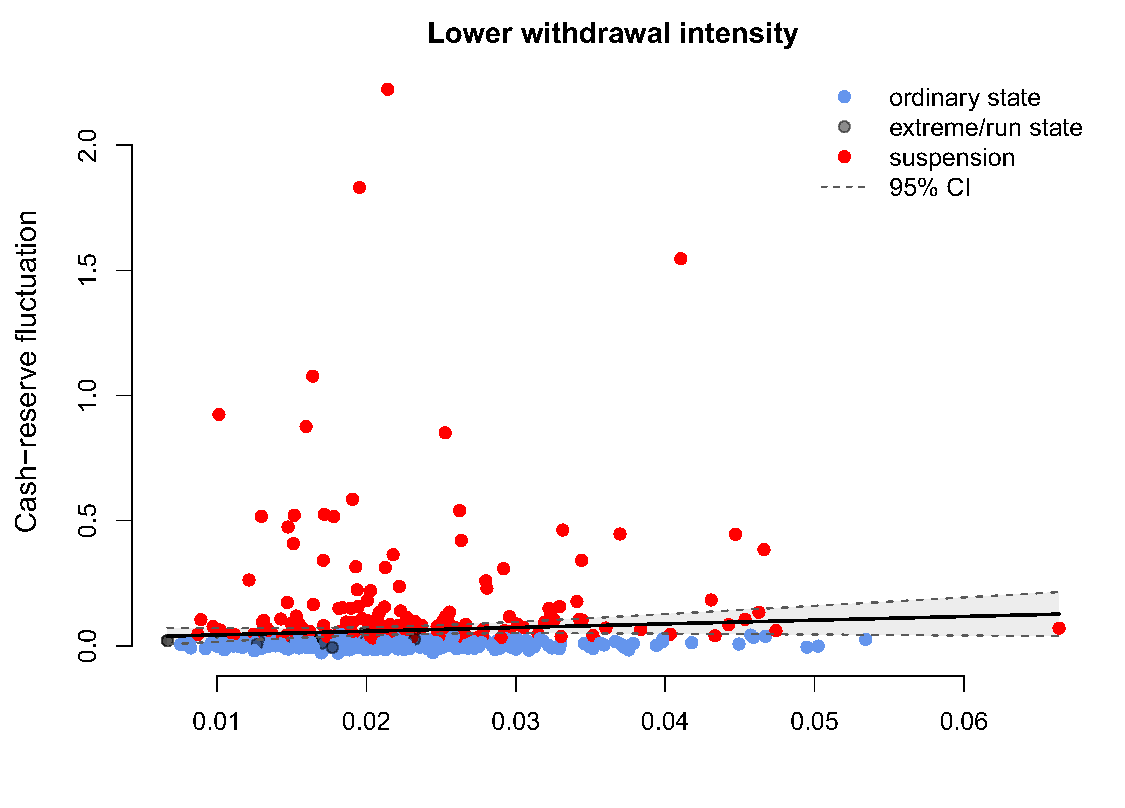}
		\caption{Lower withdrawal intensity}
	\end{subfigure}
	\hfill
	\begin{subfigure}[t]{0.46\textwidth}
		\centering
		\includegraphics[width=\linewidth]{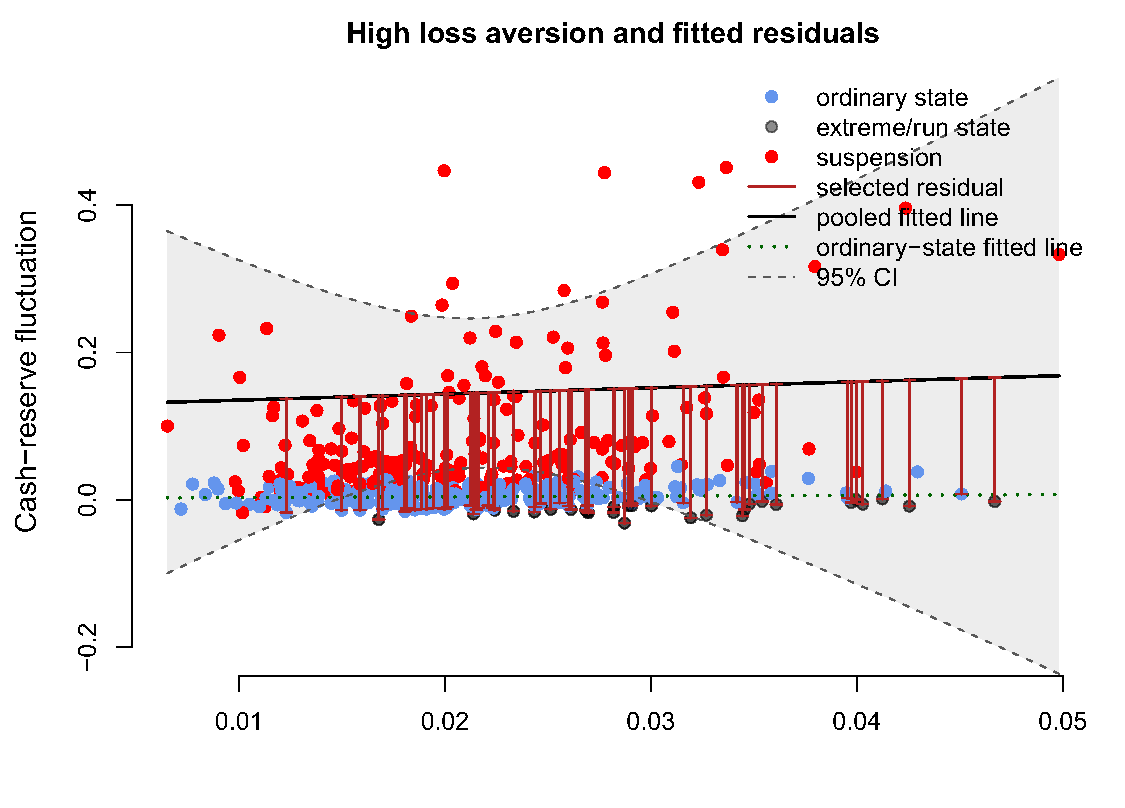}
		\caption{High loss aversion and fitted residuals}
	\end{subfigure}
	
	\vspace{0.25cm}
	\begin{minipage}{0.96\linewidth}
		\footnotesize \emph{Note:} Cornflower-blue observations are ordinary states, transparent black observations are extreme or run states, and red observations are suspension states. The dashed bands are the 95 percent confidence bands around the linear fit. The scenario panels show that the regression is useful as an operational diagnostic: the bank can ask whether reserve policy, withdrawal intensity, or unmodeled behavioral residuals are driving the stress-state observations. Simulation data, summary statistics, and diagnostic regressions are reported in the Internet Appendix.
	\end{minipage}
\end{figure}

\begin{figure}[htbp]
	\centering
	\caption{Reserve-need stability when latent loss aversion is omitted}
	\label{fig:OmittedLatentLossAversionReserveNeed}
	\includegraphics[width=0.90\textwidth]{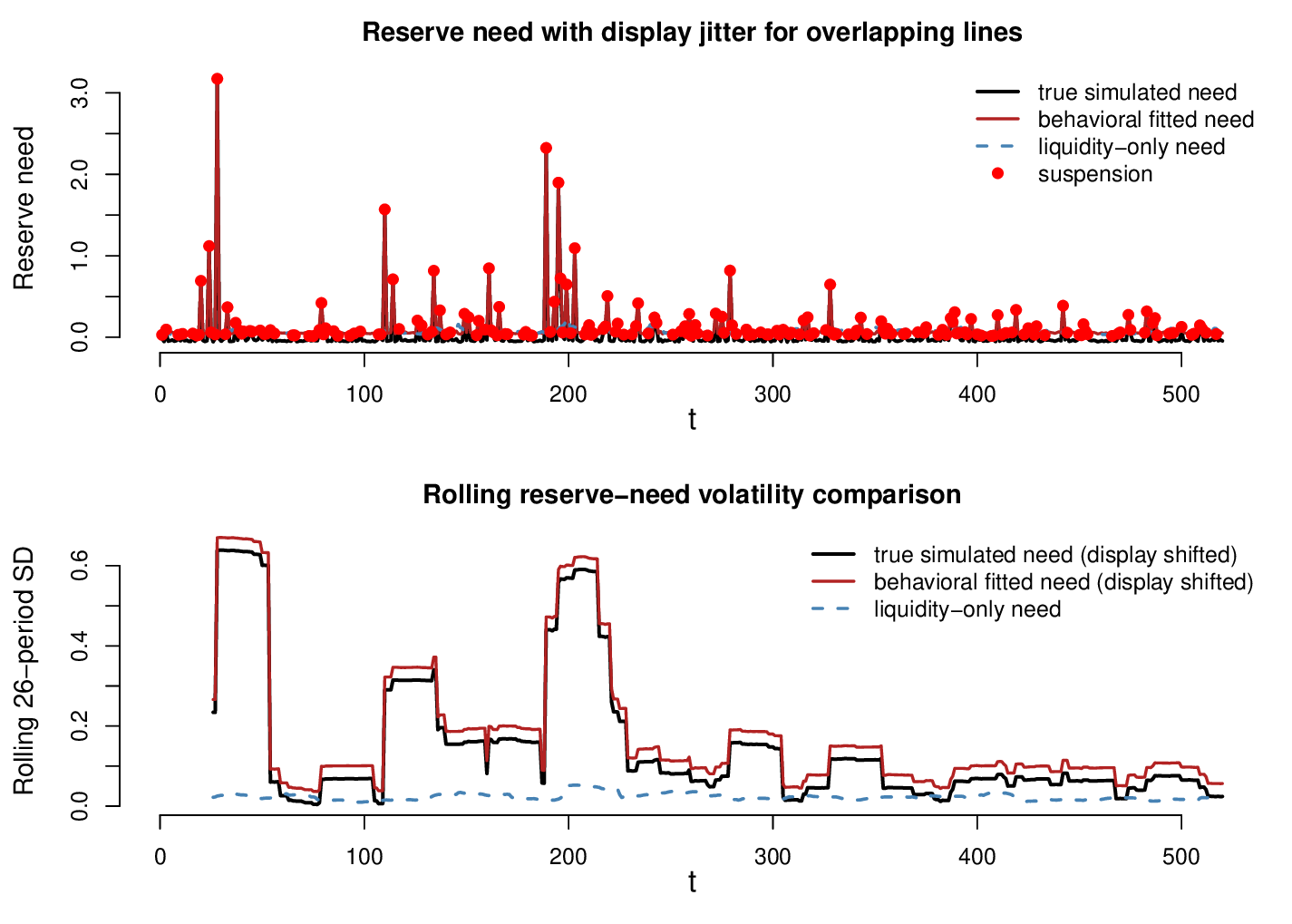}
	\par\medskip
	\begin{minipage}{0.9\linewidth}
		\footnotesize \emph{Note:} The upper panel overlays true simulated reserve need, behavioral fitted reserve need, and the liquidity-only estimate that omits latent loss aversion; the true and behavioral lines are shifted only slightly for display so that overlapping lines remain visible. The lower panel reports rolling reserve-need volatility. The liquidity-only specification makes reserve demand look too smooth and understates the volatility of reserve needs before suspension becomes visible.
	\end{minipage}
\end{figure}

\begin{figure}[htbp]
	\centering
	\caption{Simulated half-Cauchy loss-aversion SDE}
	\label{fig:LossAversionSDESimulation}
	\includegraphics[width=0.9\textwidth]{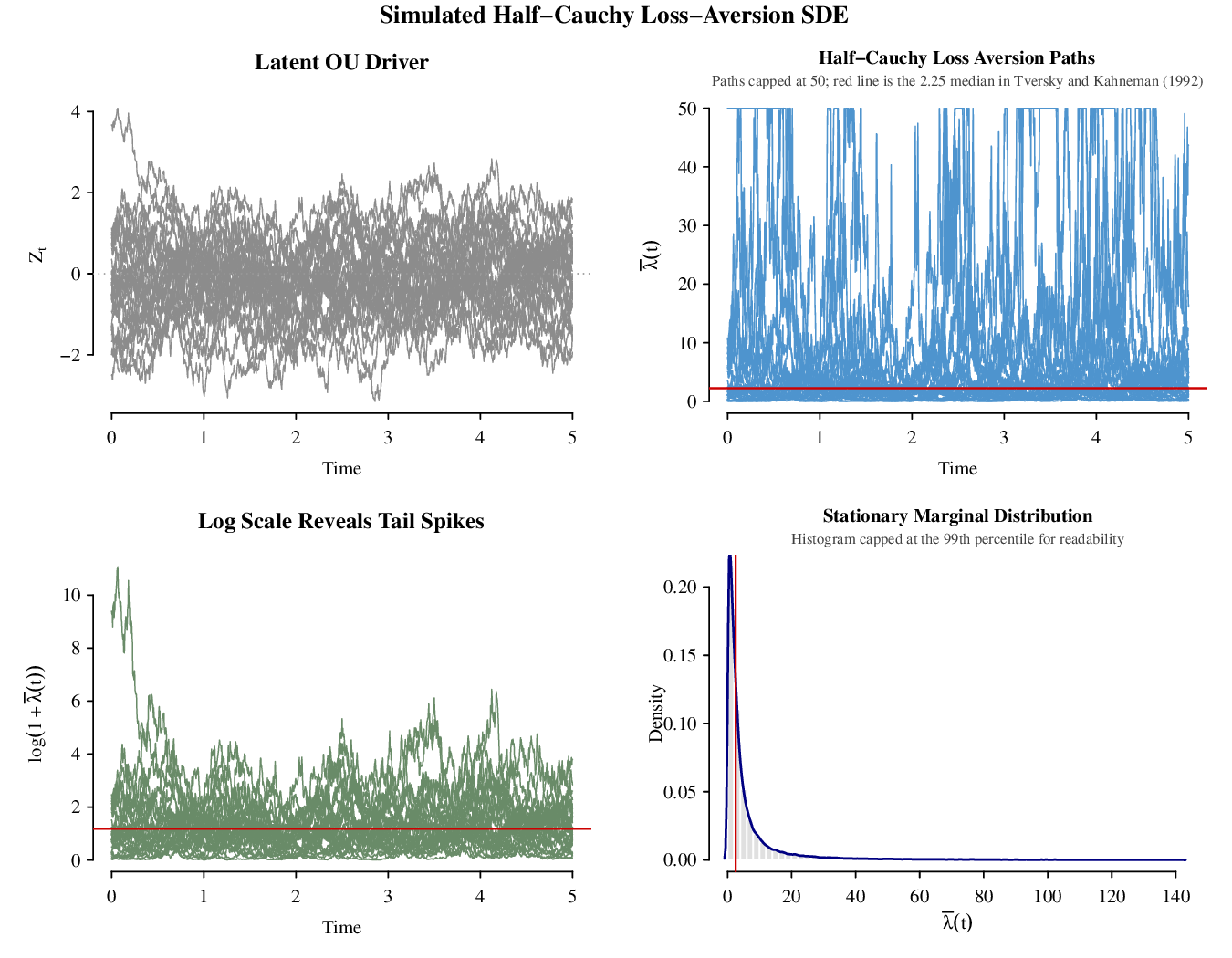}
	\par\medskip
	\begin{minipage}{1.0\linewidth}
		\footnotesize \emph{Note:} The upper-left panel plots the latent Ornstein-Uhlenbeck driver $Z_t$. The upper-right panel maps $Z_t$ into positive half-Cauchy loss-aversion paths $\bar{\lambda}(t)=m_\lambda\tan\{(\pi/2)\Phi(Z_t)\}$; paths are capped at $50$ for display. The red reference line is the median loss-aversion value $m_\lambda=2.25$ associated with \citep{TverKahn1992}. The lower panels report the log-scale paths and the stationary marginal distribution, making the tail spikes visible without losing the ordinary-state dynamics.
	\end{minipage}
\end{figure}

\begin{figure}[htbp]
	\centering
	\caption{Log bank-run exposure diagnostics}
	\label{fig:LogBankRunExposureSDE}
	\begin{subfigure}{0.48\textwidth}
		\centering
		\includegraphics[width=\textwidth]{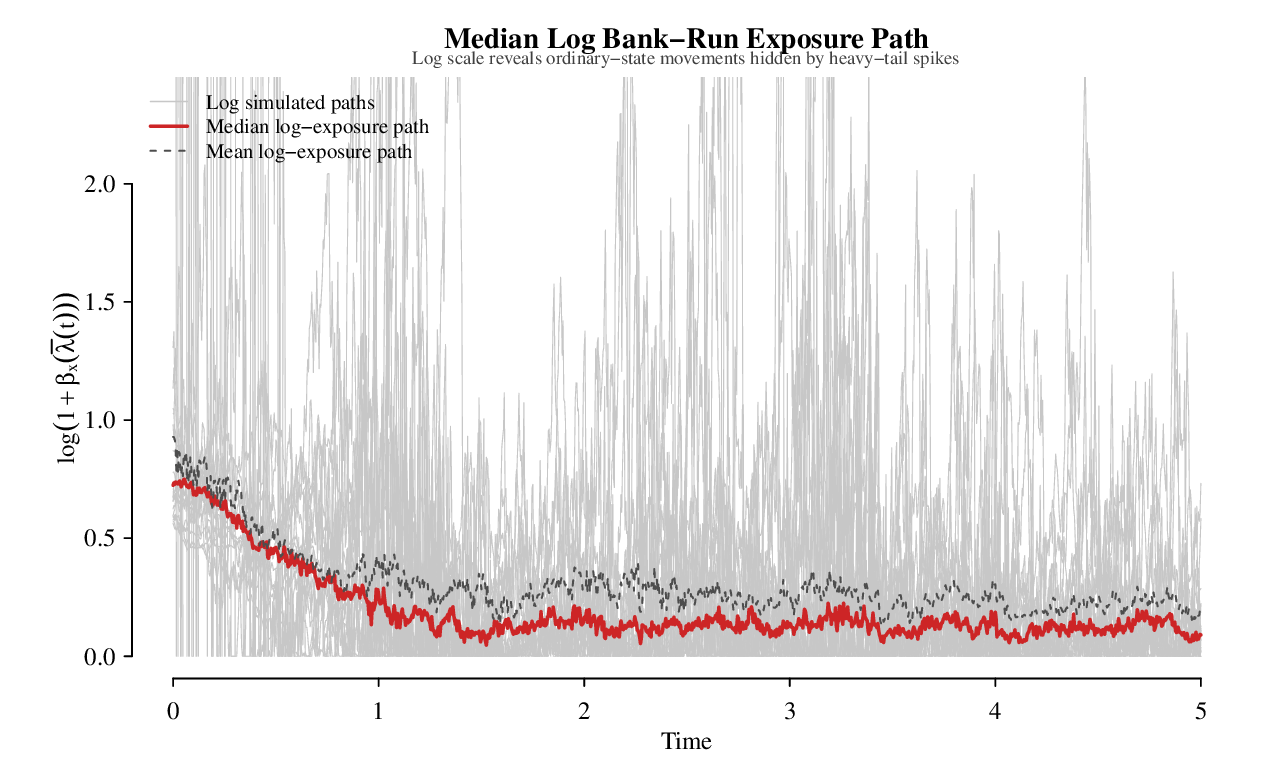}
		\caption{Median log-exposure path}
	\end{subfigure}\hfill
	\begin{subfigure}{0.48\textwidth}
		\centering
		\includegraphics[width=\textwidth]{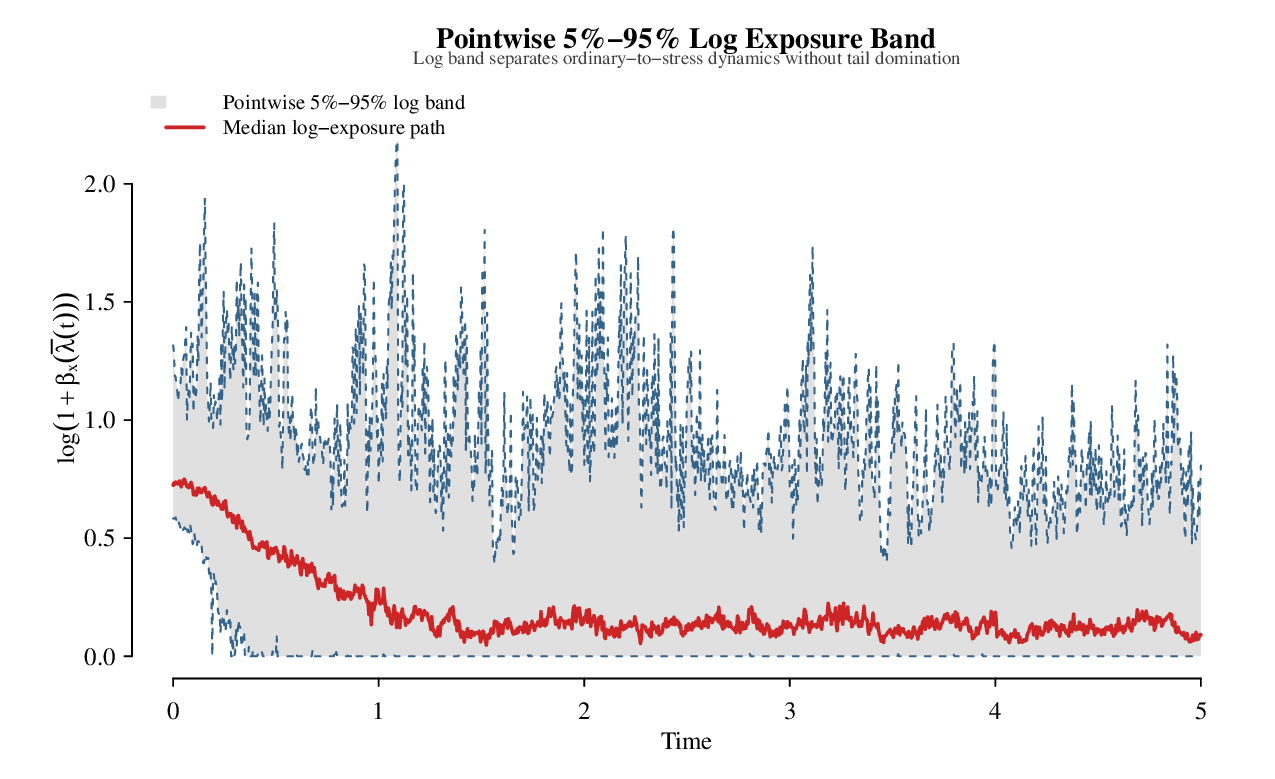}
		\caption{Pointwise $5$--$95$ percent log-exposure band}
	\end{subfigure}
	\par\medskip
	\begin{minipage}{1.0\linewidth}
		\footnotesize \emph{Note:} The left panel shows that the simulated median exposure path is relatively stable. The right panel shows that the pointwise interval can remain wide even when the median is stable, indicating that the exposure process can move between ordinary and stress states at any date.
	\end{minipage}
\end{figure}

\newpage
\singlespace
% \nocite{*}                       % dump the entire bibliography
\bibliographystyle{chicago}        % select a style
\section*{}
%\clearpage
\addcontentsline{toc}{section}{References} %To add to table of contents
\bibliography{BankRunTheory}         % bibliographic database

\end{document}